\newcolumntype{C}[1]{>{\centering\let\newline\\\arraybackslash\hspace{0pt}}m{#1}}
\renewcommand\appendix{\par
\setcounter{section}{0}%
\setcounter{subsection}{0}%
\setcounter{table}{0}
\setcounter{table}{0}
\setcounter{figure}{0}
\gdef\thetable{\Alph{table}}
\gdef\thefigure{\Alph{figure}}
%\section*{Appendix}
\gdef\thesection{\Alph{section}}
\setcounter{section}{0}}
\newcommand{\E}{\ensuremath{\mathbb{E}}}
\newtheorem{theorem}{Theorem}[section]
\newtheorem{lemma}[theorem]{Lemma}
\newtheorem{proposition}[theorem]{Proposition}
\newtheorem{corollary}[theorem]{Corollary}
\newtheorem{definition}[theorem]{Definition}
\newtheorem{assumption}[theorem]{Assumption}
\newtheorem{remark}{Remark}[section]
\newcommand{\T}{\mathbb{T}}
\newcommand{\Tc}{\mathcal{T}}
\newcommand{\Ex}{\mathbb{E}_x}
\newcommand{\parD}[1]{\frac{\partial}{\partial #1}}
\newcommand{\WqrB}{\overline{W}_{\gamma+\delta}}
\newcommand{\WqrBB}{\overline{\overline{W}}_{\gamma+\delta}}
\newcommand{\Wq}{W_{\delta}}
\newcommand{\Wqr}{W_{\gamma+\delta}}
\newcommand{\Wb}{\overline{W}_{\gamma+\delta}}
\newcommand{\Wbb}{\overline{\overline{W}}_{\gamma+\delta}}
\newcommand{\Wa}{W_{\gamma,\delta,a_p}}
\newcommand{\corr}[1]{\textcolor{black}{#1}}
\begin{document}

%\setstretch{1.5}

\normalem % to have a normal \emph behaviour with ulem package

%\includepdf[pages={1}]{Coverpage.pdf}

%\addtocounter{page}{-1}

\begin{frontmatter}

\title{%On optimal hybrid dividend strategies for diffusion processes with fixed transaction costs \\ 
On the  optimality of joint periodic and extraordinary dividend strategies}

\author[UMelb]{Benjamin Avanzi}
\ead{b.avanzi@unimelb.edu.au}

\author[UNSW]{Hayden Lau\corref{cor}}
\ead{kawai.lau@unsw.edu.au}

\author[UNSW]{Bernard Wong}
\ead{bernard.wong@unsw.edu.au}

\cortext[cor]{Corresponding author.}% Tel.: +1 514 343 6695; fax: +1 514 343 5700.}

\address[UMelb]{Centre for Actuarial Studies, Department of Economics, University of Melbourne VIC 3010, Australia}
\address[UNSW]{School of Risk and Actuarial Studies, UNSW Australia Business School, UNSW Sydney NSW 2052, Australia}

\begin{abstract}
In this paper, we model the cash surplus (or equity) of a risky business with a Brownian motion (with a drift). Owners can take cash out of the surplus in the form of ``dividends'', subject to transaction costs. However, if the surplus hits 0 then ruin occurs and the business cannot operate any more.

We consider two types of dividend distributions: (i) periodic, regular ones (that is, dividends can be paid only at countably many points in time, according to a specific arrival process); and (ii) extraordinary dividend payments that can be made immediately at any time (that is, the dividend decision time space is continuous and matches that of the surplus process). Both types of dividends attract proportional transaction costs, and extraordinary distributions also attracts fixed transaction costs, a realistic feature. A dividend strategy that involves both types of distributions (periodic and extraordinary) is qualified as ``hybrid''.

We determine which strategies (either periodic, immediate, or hybrid) are optimal, that is, we show which are the strategies that maximise the expected present value of dividends paid until ruin, net of transaction costs. Sometimes, a liquidation strategy (which pays out all monies and stops the process) is optimal. Which strategy is optimal depends on the profitability of the business, and the level of (proportional and fixed) transaction costs. Results are illustrated.

\end{abstract}

\begin{keyword}
Risk analysis \sep Dividend decision processes \sep  Control \sep Affine transaction costs

%JEL codes: %http://www.aeaweb.org/journal/jel_class_system.php
%C44 \sep %Statistical Decision Theory; Operations Research
%C61 \sep %Optimization Techniques; Programming Models; Dynamic Analysis
%G24 \sep %Investment Banking; Venture Capital; Brokerage; Ratings and Ratings Agencies
%G32 \sep %Financing Policy; Financial Risk and Risk Management; Capital and Ownership Structure
%G35 %\sep %Payout Policy

MSC classes: 
%60G51 \sep % Processes with independent increments
93E20 \sep % Optimal stochastic control
91G70 \sep 	%Statistical methods; risk measures [See also 62P05, 62P20] in Actuarial science and mathematical finance
%91G60 \sep 	%Numerical methods (including Monte Carlo methods) in Actuarial science and mathematical finance
62P05 \sep 	%Applications of statistics to actuarial sciences and financial mathematics
%62H12 %\sep 	%Estimation in multivariate analysis
91B30 %\sep % Risk theory, insurance

%Subject Categories: %http://www.elsevier.com/authored_subject_sections/S04/misc/subject_cat.htm
%IM13 \sep %ruin and other stability criteria
%IE50 %\sep %finance, general and miscellaneous
%IE53 \sep investment

%Declarations of interest: none

\end{keyword}

\end{frontmatter}

\numberwithin{equation}{section}

\section{Introduction}\label{S_intro}

\subsection{\corr{Background}}
 The literature on risk processes and their optimal control is rich \citep[see, e.g.][for reviews]{AlTh09,OkSu10}. Such processes consider the surplus (or equity) of a risky business. A risky but profitable business will see cash accumulate (on average). They typically would not let their surplus grow to infinity, but to guard against the downside risks, they would retain some cash earnings in order to prevent bankruptcy or financial distress. %As discussed above, it is reasonable to assume that after a while the cash retained in the company may be excessive, in that its potential in the company is not sufficient and therefore it should be distributed to shareholders. 
 In this paper, we model such surplus of cash with a stochastic process, and money distributed to shareholders will be interpreted as `dividends'; see also \citet{AvTuWo16c} for a discussion of such surplus models from a corporate finance perspective. The question then is to determine what the optimal way of distributing surplus is, that is, what the optimal (so-called) `dividend' strategy is. Note that this problem is equivalent to that of determining what the optimal level of retained cash earnings is (noting that inflows come from the business dynamics, and all outflows are labelled as `dividends'), but is formulated in function of what owners can \emph{control} (the `dividends').

% As long as a business is profitable, its surplus will generally grow, and it makes sense that it is reinvested elsewhere, or otherwise distributed to the business owners at some point. This is because the dynamics (i.e. profitability) of the process are generally not surplus dependent (and hence not allowing for internal reinvestment), so that the return of investment is decreasing as the surplus increases.  
 
 The natural and usual objective of this optimisation problem is to maximise the expected present value of dividends paid until ruin (which occurs as soon as the surplus becomes negative). Additional historical notes and discussion of dividends in that context can be found in \citet{Ava09}. This objective is also a good criterion of ``stability'' for the company, as it balances profitability (more dividends but earlier ruin) with safety (less dividends but delayed or even absence of ruin); see, e.g. \citet{Buh70}. Nevertheless, quantities such as the finite time ruin probability have also been considered \citep[e.g.][]{DiRo10,DiKaZh14}, and the maximisation can be achieved on more sophisticated objectives, such as involving utility functions \citep[e.g.][]{BaEg10,BaJa15}.

%Of course, by choosing to invest in positive-net-present-value (NPV) projects instead of distributing to the shareholders, the company is able to pay more in the future, which is desired from the perspective of shareholders. However, when all positive-NPV projects are exhausted, in perfect capital markets, there is no reason to keep the excess cash. Some market inefficiencies include tax, transaction costs, agency costs, information asymmetry and distress costs. In view of this, different theories are proposed to understand how dividends are paid in certain ways. However, there is no theoretical judgement in whether the current method is actually the ``best'' (or at least a good) way to do so. \BAc{If you are trying to say that what companies do now has no ``theoretical judgement'' then I think you are wrong. There would be papers in corporate finance that look at optimal dividend payout policy. Did you look for the most recent and relevant ones?}\HL{I actually found a paper doing so but without ruin (They used geometric Brownian motion for assets). I will read the paper carefully and rewrite this paragraph. } For example, a typical method in the literature to examine certain factor is to run a time-series analysis and study the correlation between them. \HL{ref?} \BAc{Not sure what you mean here.}\HL{will include some references. abbreviated as ``ref''}\BW{Agree its too combative}

The recent decade has focused a lot on more \emph{realistic} formulations for the dividends \citep[see][for a detailed discussion of what we mean by `realistic']{AvTuWo16c}. One of the axes of development recognises that whilst surplus models are continuous, in real life often delays occur \citep[see, e.g.,][who consider dividend payments with implementation delays]{ChWo17}, and also dividend decisions are usually made at periodic intervals \citep[see, e.g.,][]{AlGeSh11}. 

\corr{Literature on ``periodic'' dividends is relatively new, but attracted a lot of attention. \citet{AlChTh11a} first proposed to use an erlangisation technique \citep{AsAvUs02} to approximate the time between dividend decision times. The idea of the erlangisation technique is to set parameters such that the time between decisions is Erlang$(n/\gamma,n)$  distributed (hereafter denoted ``Erlang$(n)$'') such that the time between decisions becomes deterministic with mean $1/\gamma$ as $n$ goes to infinity. This convergence was illustrated in the dual model  setting (with surplus as a spectrally positive compound Poisson process) in \citet{AvChWoWo13}. \citet{AvTuWo14} confirmed that a periodic barrier strategy is optimal in dual model when the inter-dividend decision time is a simpler Erlang$(1)$ variable. \citet{PeYa16} %and \citet{ChYaZh17} 
extended those results by considering general spectrally positive L\'evy processes as the underlying surplus model. \cite*{AvTuWo18} studied the optimal problem when the inter-dividend time is a Erlang($n$) random variable. The authors provided a verifying method for a Brownian setting and demonstrated the optimality of a periodic barrier strategy when $n=2$. In all those cases, the type of the optimal periodic dividend strategy is that of a barrier strategy, mirroring the analogous result for dividend decisions that can occur at any time \citep[see][]{BaKyYa13}. Optimal strategies with spectrally negative L\'evy processes were considered in \citet{NoPeYaYa17}. Of closest relevance to this paper is consideration of optimal periodic (only) dividend strategies with fixed transaction costs, developed in \citet{AvLaWo20d,AvLaWo20c} for spectrally positive and negative L\'evy processes, respectively.}

\subsection{\corr{Types of dividends and fixed transaction costs}}

\corr{We} consider two types of dividend distributions: (i) periodic, regular ones (that is, dividends can be paid only at countably many points in time, according to a specific arrival process); and (ii) immediate dividend payments that can be made at any time (that is, the dividend decision time space is continuous and matches that of the surplus process). This matches the behaviour of companies in real life, as most established firms would pay dividends regularly. If they feel the need to distribute more, then they would clearly label those extra payments as `extraordinary' (and sometimes also do it in a different way, such as with share buy-backs, which is not in contradiction with our framework). One can find real life examples \citep[e.g.,][]{Woodside13,Wesfarmer14}, and was further explained by, for instance, \citet{Morningstar14}: ``From time to time, companies pay out special dividends when they have had an extraordinarily good period of profitability. These dividends fall outside the scope of the ``normal'' half-year or full-year result.'' This possibly is to avoid signalling the fact that those extra payments should be expected to continue in the future. Furthermore, it does make sense that those extra distributions carry heavier costs than the regular ones (actual costs, but also undesirable signalling costs such as we just explained). We will hence penalise them with heavier fixed transaction costs. %Unsurprisingly, this is required for such periodic dividend strategies to be be sometimes optimal.

\citet{AvTuWo16}, in a dual model framework and with both types of dividends being admissible, showed that when transaction costs are moderately cheaper for periodic dividends, then both types of dividends can be optimal, leading to an optimal \emph{hybrid} dividend strategy. These results were extended to spectrally positive L\'evy processes by \citet{PeYa18}. However, those papers consider \corr{\emph{proportional transaction costs only}}, and in reality fixed costs are likely to be the ones that truly differentiate the cost of ``periodic'' versus ``immediate'' dividends. \corr{It is hence a non trivial extension, which introduces a number of technical difficulties as explained later.}

\subsection{\corr{Statement of contributions and structure of the paper}}

\corr{In this paper we extend results on ``hybrid'' dividend strategies by introducing fixed transaction costs,} %on periodic dividends
which results in a comprehensive, more realistic treatment of optimal hybrid strategies. 
\corr{While fixed costs has been used extensively in other fields such as asset allocation problems \citep[see, e.g.,][]{OkSu02, BaMa14, BaMa16}, frameworks and objectives are different, and fixed costs are rarely studied in optimal dividend problems due to the additional complexities in the proofs. Nonetheless, with fixed transaction costs,} results are materially different, richer and more realistic as explained below. Furthermore, the cash flow of the company is modeled by a diffusion process, which leads to transparent and many explicit results, and is sufficient to get insights about the optimal strategies. %More details about our technical framework are provided in Section \ref{S.the.model}.

When the company is profitable, an optimal strategy is a hybrid $(a_p,a_c,b)$ strategy which (1) pays non-regular dividends only when the surplus is too high and (2) pays regular (periodic) dividends when the surplus is moderate. This strategy has some desirable properties. Namely, regular dividends are either zero or bounded. When a regular dividend is zero, either the company is at risk of bankruptcy or a recent special dividend has been paid. In either case, such behaviour is reasonable. When the company is non-profitable, the model has a different (and no less interesting) interpretation. The main results of the paper are summarised in Section \ref{S_map}, after our notation is introduced. \corr{A major contribution of this paper is the proof of the existence of a hybrid $(a_p,a_c,b)$ strategy when such strategy is optimal. The main difficulty lies in that the existence problem is equivalent to the existence of a solution to three seemingly unrelated non-linear equations in three parameters, where in general there is no reason to have a solution. The well-known problem with two parameters (barriers) is already difficult in general, and to the best of our knowledge, the problem with three parameters (barriers) as in this paper has not been  solved thus far.}

This paper is organised as follows. Section \ref{S.the.model} introduces our mathematical framework. Section \ref{S.Verification} proposes a set of sufficient conditions for a strategy to be optimal, regardless of whether the business is profitable. From there until Section \ref{S.Optimality}, it is assumed \corr{that the business is profitable}. As an application of the results developed in Section \ref{S.Verification}, \corr{in Section \ref{S_betasucks}, }an optimal strategy is formulated when the proportional cost is higher than a certain threshold. Section \ref{S.hybridG} introduces the class of hybrid $(a_p,a_c,b)$ strategies and calculates the value function of a general hybrid $(a_p,a_c,b)$ strategy \corr{and} shows constructively that our candidate strategy exists among the class of hybrid $(a_p,a_c,b)$ strategy, when the proportional cost is low (lower than a certain threshold). \corr{Following that, Section \ref{S.VD} derives some auxiliary results regarding the properties of the value function of the candidate strategy.}
Section \ref{S.Optimality} proves that our candidate strategy is optimal, when the proportional cost is low. Section \ref{S.mu.neg} studies the remaining case  when the business is strictly non-profitable. Section \ref{S.Convergence} discusses how the different optimal strategies are ``connected'' (i.e., across the Table in Section \ref{S_map}). Finally, \corr{Section} \ref{S.Numerical} presents numerical illustrations, and Section \ref{S.conclusion} concludes.

\section{The model }\label{S.the.model}

\subsection{Surplus model before dividends}
We define the surplus process $X=\{X(t);t\geq 0\}$ under the family of laws $(\mathbb{P}_x;x\in\mathbb{R})$ to be a diffusion process that starts at $x\geq0$, i.e.
\begin{equation}\label{Def.Diffusion}
X(t)=x+\mu t+\sigma W(t),
\end{equation}
where $W=\{W(t);t\geq 0\}$ is a standard Brownian motion. This surplus process is to be interpreted as the excess, discretionary equity available to the company to pay dividends. It is assumed that it is sufficiently liquid to pay dividends immediately when it is so decided.

We denote the expected profit per unit of time of the business as $\E[X(t+1)-X(t)]:=\mu$. Unless stated otherwise, we assume that
\begin{equation}\label{Ass.mu.positive}
\mu \geq 0,
\end{equation}
which means that the business is profitable. The opposite case will be studied in Section \ref{S.mu.neg}, and the connection of the optimal strategies between the cases $\mu$ greater than, equal to, and small than $0$ is conducted in Section \ref{S.VaryingMu} (continuity of the barriers).

\subsection{The introduction of dividends}
In this paper, a dividend strategy is comprised of two components. Dividends can be paid at any time, but there are periodic opportunities to pay dividends at lower transaction costs. A \emph{dividend strategy} must hence determine how much periodic dividends to pay and how much ``immediate'' (extraordinary) dividends to pay and when. For a dividend strategy $\pi$, we denote the accumulated periodic ``regular'' dividend process as $D^\pi_p=\{D^\pi_p(t);t\geq 0\}$ and the accumulated non-periodic ``immediate'' dividend process as $D^\pi_c=\{D^\pi_c(t);t\geq 0\}$. The strategy $\pi$ is then specified through $(D^\pi_p,D^\pi_c)$, and the accumulated total dividend process under strategy $\pi$ is denoted as $D^\pi=\{D^\pi(t);t\geq 0\}$. This means
\begin{equation}
D^\pi(t)=D^\pi_p(t)+D^\pi_c(t),~t\geq 0.
\end{equation}
Note that the subscripts $p$ and $c$ refer to the timing of the dividend decision process, be it `periodic' or `continuous', in line with previous literature. 

We need to clarify mathematically how the ``regular'', or periodic payment times are defined. Define $N_\gamma=\{N_\gamma(t);t\geq 0\}$ as a Poisson process (independent of $W$) with rate $E[N_\gamma(1)]=\gamma>0$, which serves as our periodic \emph{dividend decision times}. In other words, periodic dividends can only be paid when $N_\gamma$ has increments. Such times are denoted as $\T=\{T_i;i\in\mathbb{N}\}$ with
\begin{equation}\label{E_Ti}
T_i=\inf\{t\geq 0: N_\gamma(t)=i\}.
\end{equation}
This implies that {$T_1$} and $T_{i+1}-T_i$, are i.i.d. exponential random variables with mean $1/\gamma$, for all $i\in\mathbb{N}$.

The surplus process after the dividend payments is therefore $X^\pi=\{X^\pi(t);t\geq 0\}$ with
\begin{equation}
X^\pi(t)=X(t)-D^\pi(t).
\end{equation}
We define $\tau^\pi$ to be the ruin time of the process $X^\pi$, i.e.
\begin{equation}
\tau^\pi=\corr{\tau^\pi_x:=\inf\{t\geq 0:X^\pi(t)<0,X(0)=x\}},
\end{equation}
that is, the company must stop its operations as soon as its surplus hits zero, and no further dividends will be paid.

A Markovian \emph{stationary} strategy is a strategy where the control at time $t$ is a deterministic function of $X^\pi(t-)$ known at time $0-$ which maps the surplus and its characteristics into a dividend payment, i.e. $(\Delta D_p^\pi(t),\Delta D_c^\pi(t))=(f_p(X^\pi(t-))1_{\{t\in \mathbb{T}\}},f_c(X^\pi(t-))1_{\{t\notin \mathbb{T}\}})$ for a given function $f=(f_p,f_c)$, \corr{ where for a c\`adl\`ag function $f$, the $\Delta$ operator maps $f(x)$ to $\Delta f(x):=f(x)-\lim_{y\uparrow x}f(y)$}. For such a strategy $\pi$, if $D^\pi_c(t)\equiv 0$, we call it a (pure) periodic strategy (with regular payments only). If $D^\pi_p(t)\equiv 0$, we call it a (pure) continuous strategy (with immediate payments only). Otherwise, we refer it as a hybrid strategy, as there is a non-zero probability that both components are present. 

\begin{remark}
	Note that if dividends can only be paid after every $n$-increment then the time between dividend decision times is Erlang distributed with shape parameter $n$ and rate parameter {$n\gamma$} (the sum of $n$ independent exponential{($n\gamma$)} random variables). This random variable can have arbitrarily small variance for appropriate choices of parameters. This is what led to the so-called ``Erlangisation'' technique as discussed in \citet{AsAvUs02,AlChTh11a}. Indeed, letting the parameter $n$ increase to infinity causes the variance of the Erlang($n$) random variable to vanish, which means deterministic numbers can be approximated sufficiently well by choosing a large enough $n$.
	
	This motivates model setups with `simple' Poissonian distribution strategies (whereby inter-dividend decision times are exponentially distributed), which is adaopted in this paper. This is an important first step to solving the more general Erlang with $n\ge2$ case. Showing optimality for $n\ge 2$ is surprisingly difficult, but not impossible; see \citet{AvTuWo18}.
\end{remark}

By defining the filtration generated by the process $(X,N_\gamma)$ by $\mathbb{F}=\{\mathscr{F}_t:t\geq 0\}$, we say a (hybrid) dividend strategy $\pi:=\{(D^\pi_p(t),D^\pi_c(t));t\geq 0\}$ is admissible if both $D^\pi_p$ and $D^\pi_c$ are non-decreasing, right continuous and $\mathbb{F}$-adapted process where any sample path of the process $D^\pi_c$ is an increasing step function in time (as a fixed cost will be incurred at each payment), and where the cumulative amount of periodic dividends $D^\pi_p$ admits the form
\begin{equation}
D^\pi_p(t)=\int_{[0,t]}\nu^\pi(s)dN_\gamma(s),~t\geq 0
\end{equation} 
for some non-negative adapted process $\nu^\pi:=\{\nu^\pi(t),t\geq 0\}$.
Furthermore, note by definition the sample paths of $X$ are continuous ($X(t)=X(t-)$) and hence we require
\begin{equation}
\Delta D^\pi(t)\leq X^\pi(t-),\quad t\leq \tau^\pi
\end{equation}
that is, the dividend paid at time $t$ cannot exceed the current value of the surplus. Denote this set of admissible strategies $\Pi$.

\subsection{The expected present value of dividends until ruin}
To measure the performance of the strategies, we will focus on the expected present value of dividends until ruin
\begin{equation}\label{E_EPVD}
V_{1-\beta,\chi}(x;\pi)=V(x;\pi):=\Ex\int_{0}^{\tau^\pi}e^{-\delta t} \Big(dD^\pi_p(t)+(\beta dD^\pi_c(t)-\chi)1_{\{\Delta D^\pi_c(t)>0\}}\Big),
\end{equation}
where $\Ex[\cdot]:=\mathbb{E}[\cdot|X(0)=x]$ is the mathematical expectation under the law $\mathbb{P}_x$  (for each $x\in\mathbb{R}$), and where $\delta>0$ is a time-preference parameter (or discount factor). Furthermore, non-periodic ``immediate'' dividend payments of amount $\xi>0$ incur a transaction costs $(1-\beta)\xi+\chi$. In other words, there is a proportional transaction rate of $1-\beta$, and fixed transaction costs of $\chi$. 

We seek to maximise the expected present value of dividends, which means that we will look for an optimal strategy $\pi^*\in\Pi$ such that 
\begin{equation}\label{Def.Optimal}
V(x;\pi^*)=\sup_{\pi\in\Pi}V(x;\pi):=v(x)=v_{1-\beta,\chi}(x),\quad x\geq 0.
\end{equation}
Because the process is ruined immediately when it reaches $0$,  we have 
\begin{equation}\label{E_V0}
V(0;\pi)=0 \quad \text{for}\quad \pi\in\Pi.
\end{equation}
Note that we will also write $\mathbb{P}$ and $\mathbb{E}$ for $\mathbb{P}_0$ and $\mathbb{E}_0$ respectively.

\begin{remark}\label{Remark.Rational}
	An optimal strategy should demonstrate the following 2 rational behaviours:
	\begin{enumerate}
		\item The non-periodic dividend payment at time $t$, $\Delta D^\pi(t)$, is either $0$ or strictly greater than ${\chi}/{\beta}$. This is because any strategy that pays a non-periodic dividend less that ${\chi}/{\beta}$ does not contribute positively to the value function and therefore has at most the same value function as the same strategy without negative contributions.
		\item At periodic dividend time $t=T_i$ for some $i\in\mathbb{N}$, we do not pay non-periodic dividends. Otherwise, a higher transaction cost is paid, yielding at most the same value function.
	\end{enumerate}
\end{remark}

\begin{remark} \label{R_betacp}
	Note that in \eqref{E_EPVD} periodic dividends do not attract any transaction costs. 
	With respect to \emph{proportional} transaction costs this is without of loss of generality, as long as proportional transaction costs on periodic dividends (say, $1-\beta_p$) are smaller than that on immediate dividends (say, $1-\beta_c$), which is what you would expect in practice as discussed earlier. In this case, \eqref{E_EPVD} would become
	\begin{align}
	V(x;\pi)~=~&\Ex\int_{0}^{\tau^\pi}e^{-\delta t} \Big(\beta_p dD^\pi_p(t)+(\beta_cdD^\pi_c(t)-\chi)1_{\{\Delta D^\pi_c(t)>0\}}\Big)\label{prob.1} \\
	~\equiv~& \beta_p\Ex\int_{0}^{\tau^\pi}e^{-\delta t} \Big(dD^\pi_p(t)+(\beta dD^\pi_c(t)-\chi/\beta_p)1_{\{\Delta D^\pi_c(t)>0\}}\Big) \quad \text{with }\beta=\frac{\beta_c}{\beta_p}\le1.\label{prob.2}
	\end{align}
	That is, the objective is simply scaled by a constant ($\beta_p$), which will not affect the generality of our set-up. In other words, an optimal strategy in problem \eqref{prob.2} is optimal in problem \eqref{prob.1}. However, note that the fixed transaction cost amount $\chi$ needs to be appropriately scaled if one wants to obtain accurate numerical valued for one problem from the other.
	
	On the other hand, introduction of fixed transaction costs $\chi_p$ on periodic dividends would likely alter the form of the optimal dividend strategy fundamentally. We expect that the optimal periodic barrier would be split into a higher trigger barrier, and lower dividend payment barrier, as is often the case in classical impulse cases (because of the reason explained under item 1 in Remark \ref{Remark.Rational}). Furthermore, we postulate that ascertaining which type of dividends attracts higher transaction costs \emph{on average} or \emph{in an expected sense} would be critical in determining the optimal dividend strategy. We believe the optimal dividend strategy would depend on some sort of `expected' overall transaction costs for each type, which is not trivial to determine as the number and timing of dividends are random in both cases, and do not match. That being said, if one assume that both proportional \emph{and} fixed transaction costs are lower on regular dividends (as opposed to immediate dividends), then extension of the current paper should be relatively straightforward.
\end{remark}

\subsection{Definition of relevant dividend strategies}
In this section, we define all dividend strategies that we will refer to in this paper. Note that they are all Markovian \emph{stationary} strategies as defined above just after \eqref{E_Ti}.

%A periodic $b$ strategy, denoted as $\pi_b$, is defined as follows.
\begin{definition}\label{Def.periodic.b}
	A periodic $b$ strategy, denoted ${\pi_b}$, is a periodic dividend strategy which pays a dividend $$\Delta D^{\pi_b}_p(t)=(X^{\pi_b}(T_i-)-b)1_{\{X^\pi(T_i-)\geq b\}},\quad \Delta D^{\pi_b}_c(t)\equiv 0.$$ at time $T_i$, as long as ruin has not occurred yet, that is, for all $T_i\le\tau^{\pi_b}$, $i\in\mathbb{N}$.
\end{definition}

We now define the class of strategies that we prove optimal in some cases later in the paper. %We shall use $\pi_{a_p,a_c,b}$ to denote a ``hybrid $(a_p,a_c,b)$ strategy''.

\begin{definition}\label{Def.hybrid.apacb}
	A hybrid $(a_p,a_c,b)$ strategy with $0\leq a_p\leq a_c\leq b$, denoted as $\pi_{a_p,a_c,b}$, is a strategy which
	\begin{enumerate}
		\item pays (before ruin) periodic dividend that brings the surplus down to $a_p$ whenever the (controlled) surplus $X^{\pi_{a_p,a_c,b}}$ is above or equal to {$a_p$} right before the dividend payment times,
		\item pays (before ruin) an immediate dividend that brings the surplus down to $a_c$ whenever the surplus $X^{\pi_{a_p,a_c,b}}$ is above or equal to $b$ outside the periodic dividend times.
	\end{enumerate}  
	In mathematical notation, it means 
	\begin{equation}
	\begin{cases}
	\Delta D^\pi_p(T_i)=(X^\pi(T_i-)-a_p)1_{\{X^\pi(T_i-)\geq \corr{a_p}\}}1_{\{T_i\leq \tau^\pi\}}\\
	\Delta D^\pi_c(t)=(X^\pi(t-)-a_c)1_{\{X^\pi(t-)\geq b\}}1_{\{t\neq T_i\}}1_{\{t\leq \tau^\pi\}}
	\end{cases},
	\end{equation}
	with $\pi=\pi_{a_p,a_c,b}$.
\end{definition}

Figure \ref{fig.apacb} illustrates the strategy described in Definition \ref{Def.hybrid.apacb}. It charts a typical sample path when a hybrid $(a_p,a_c,b)$ is applied, where $a_p=1$, $a_c=2$, and $b=4$. The dotted vertical lines indicate periodic dividend decision times $T_i$'s. From the graph, we see that there is an immediate payment just before $T_1$, of amount $b-a_c=2$ (before transaction costs). On the other hand, all $T_i$'s in the graph trigger periodic payments.
Note that Definition \ref{Def.hybrid.apacb} indicates that $a_p\leq a_c\leq b$, but in fact Remark \ref{Remark.Rational} implies that only the cases $b>a_c+{\chi}/{\beta}$ (and hence $b\geq a_p+{\chi}/{\beta}$ since $a_c>a_p$) make sense in order to avoid negative contribution to the value function. Note also that the solid vertical lines are not part of the processes. They are displayed ``artificially'' to illustrate the ``jump'' of the processes.

\begin{figure}[htb]
	\centering
	\begin{minipage}{0.49\textwidth}
		\centering
		\includegraphics[width=1\textwidth]{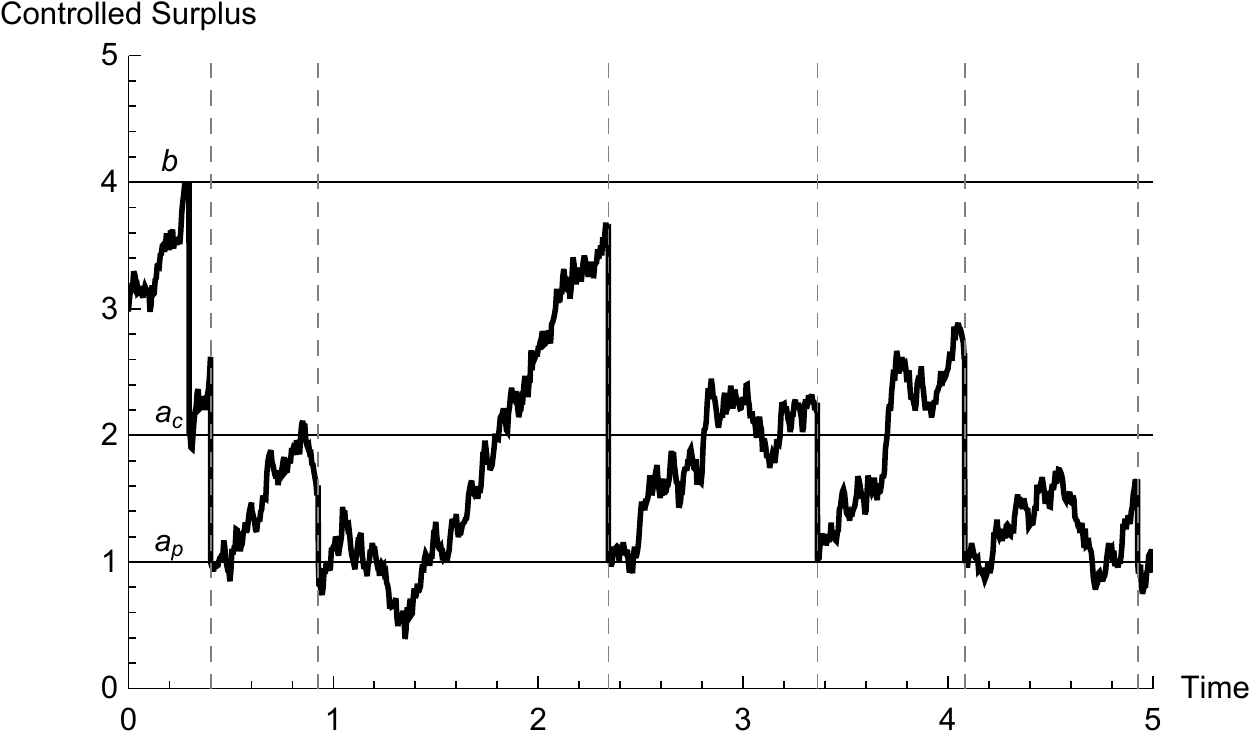}
		\subcaption[first caption.]{An illustration of a hybrid $(a_p,a_c,b)$ strategy}\label{fig.apacb}
	\end{minipage}%
	\begin{minipage}{0.49\textwidth}
		\centering
		\includegraphics[width=1\textwidth]{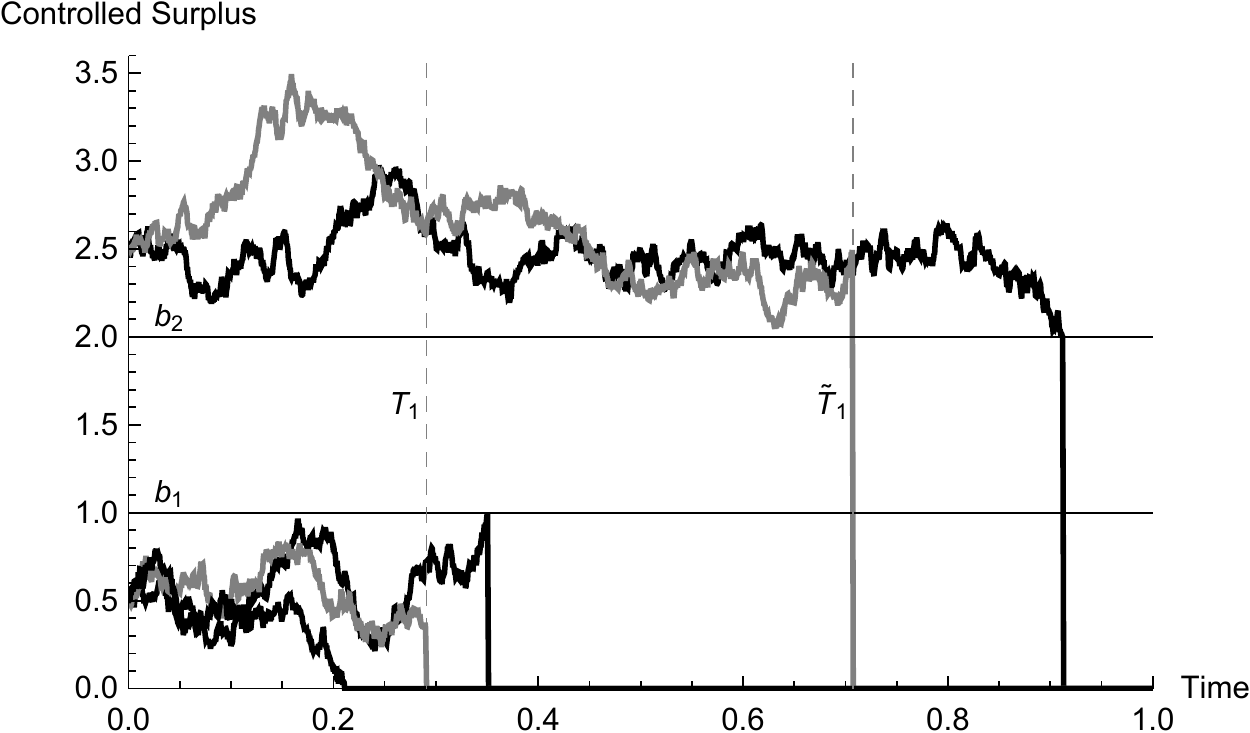}
		\subcaption[second caption.]{An illustration of a liquidation $(b_1,b_2)$ strategy}\label{fig.b1b2}
	\end{minipage}%
	\caption{Illustrations of the main optimal strategies} \label{fig.strategies}
\end{figure}

\begin{definition}\label{def.Liq}
	A liquidation $(b_1,b_2)$ strategy, characterised by 2 parameters $0<b_1<b_2\leq \infty$, denoted as $\pi_{b_1,b_2}$, is the strategy that 
	\begin{enumerate}
		\item pays (before ruin) non-periodic dividend $X(\theta)$ {(surplus just before ruin caused by this final dividend)}, where $\theta=\inf\{t\geq 0: X(t)\in(b_1,b_2)\}$, the first time the surplus is within the open interval $(b_1,b_2)$;
		\item pays (before ruin) periodic dividend of size $X^\pi(T_1-)$ {(surplus just before ruin caused by this dividend)} when $X^\pi(T_1-)\leq b_1$ or $X^\pi(T_1-)\geq b_2$, where $\pi=\pi_{b_1,b_2}$.
	\end{enumerate}
	In mathematical notation, it means 
	\begin{equation}
	\begin{cases}
	\Delta D^\pi_c(t)=X(\vartheta-)1_{\{t=\theta< T_1\}}1_{\{t\leq \tau^\pi\}},\quad \vartheta=\inf\{t\geq 0: X(t)\in(b_1,b_2)\}, \\
	\Delta D^\pi_p(T_1)=X^\pi(T_1-)1_{\{T_1\leq \tau^\pi\}},
	\end{cases}
	\end{equation}
	with $\pi=\pi_{b_1,b_2}$.
\end{definition}

The class of liquidation strategies as defined in Definition \ref{def.Liq} will be sometimes optimal when $\mu<0$, and we distinguish two cases here for notation purposes:
\begin{enumerate}
	\item Liquidation $(b_1,b_2)$ strategy, with $b_2<\infty$.
	\item Liquidation $(b,\infty)$ strategy, denoted as $\pi_{b,\infty}$, which is the liquidation $(b,b_2)$ strategy with $b_2=\infty$. 
\end{enumerate}

{Figure \ref{fig.b1b2} illustrates the strategy of Definition \ref{def.Liq}. It shows different sample paths from when $X(0)=2.5$ (upper part), and $X(0)=0.5$ (lower part), with $b_1=1$ and $b_2=2$. When $X(0)=2.5$, the path in grey represents the case when the first dividend decision time ($\widetilde{T}_1$ in the figure) comes before the surplus reaches $b_2$. As a result, the company is liquidated at $\widetilde{T}_1$ at that surplus level. With the other scenario (represented by the path in black), the first dividend decision time (not shown in the figure) comes after the surplus hits $b_2$ and therefore it is immediately liquidated at that time, and a surplus of $b_2=2$ (before transaction costs) is distributed. Similar behaviour can be seen when $X(0)=0.5$. When to liquidate depends on whether the surplus touches $b_1$ first (black path) or $T_1$ comes first (grey path), conditioning on survival. Otherwise, the company is ruined (the lowest black path).}

\begin{remark}
	The strategies mentioned above are related:
	\begin{enumerate}
		\item The periodic $0$ strategy, denoted as $\pi_0$ (see Definition \ref{Def.periodic.b}) will sometimes also be optimal when $\mu<0$; see Section \ref{S_map}. This strategy pays $X^{\pi_0}(T_1-)$ when $T_1\leq \tau^{\pi_0}$, and can be seen as the limit of a liquidation $(b_1,b_2)$ strategy when $b_2-b_1\downarrow 0$, or simply $b_1\uparrow \infty$. It is also denoted as $\pi_{a,a}$, or $\pi_{\infty,\infty}$.
		\item The liquidation $(b,\infty)$ strategy can be seen as a hybrid $(0,0,b)$ strategy, i.e. $\pi_{b,\infty}=\pi_{0,0,b}$.
	\end{enumerate}
Further convergence results are developed and illustrated in Section \ref{S.Convergence}.
\end{remark}

\subsection{Main results}\label{S_map}

The nature of the optimal strategy will depend on the value of some key parameters, as is shown in this paper. Our main results are summarised in Table \ref{T_roadmap}, which can also be used as a road map for reading the paper. In addition, the transition between cells is ``continuous'', except for the cells in the second row for $\mu<0$, as they are disjoint unless $\beta=\gamma/(\gamma+\delta)$%, which can be regarded as the third row
; see Section \ref{S.Convergence} for details and proofs of that statement. \corr{Note that the case $\chi = 0$ in the first column corresponds to what was considered in \citet[noting that we have set $\eta_p=1$ here because proportional transaction costs are identical for both types of dividends in this paper; see also Remark \ref{R_betacp}]{AvTuWo16}.}

\begin{table}[H]
	\centering
	
	\begin{tabular}{l|c|c|c}
		&$\mu \ge 0$ & \multicolumn{2}{c}{$\mu < 0$} \\
		&$\chi \ge 0$ & $\chi/\beta \ge -\frac{\mu}{\gamma+\delta} $ & $\chi/\beta < -\frac{\mu}{\gamma+\delta} $ \\ \hline
		$0\le \beta \le \beta_0$ & Periodic barrier $\pi_b$  (Thm \ref{Thm.Small.Beta}) & \multicolumn{2}{c}{Periodic barrier $\pi_0$ (Thm \ref{Thm.mu.neg})}  \\
		$\beta_0< \beta \le \frac{\gamma}{\gamma+\delta}$ & Periodic barrier $\pi_b$ (Thm \ref{Thm.Small.Beta})& \thead{Periodic barrier $\pi_0$ \\ (Thm \ref{Thm.mu.neg})} & \thead{Liquidation $\pi_{b_1,b_2}$ \\ (Thm \ref{Thm.mu.neg})}\\
		$\frac{\gamma}{\gamma+\delta} < \beta \le 1 $ & Hybrid  $\pi_{a_p,a_c,b}$ (Thm \ref{Thm}) & \multicolumn{2}{c}{Liquidation $\pi_{b,\infty}$ (Thm \ref{Thm.mu.neg})}\\
		%$\beta\ge 1$ & Continuous barrier 
	\end{tabular}
	\caption{Map of the dividend strategies proven as optimal in the different cases considered in the paper}
	\label{T_roadmap}
\end{table}

The results can be interpreted as follows.

Recall that $\gamma/(\gamma+\delta)$ is the expected present value at time 0 of a payment 1 paid after an $\gamma$-exponentially distributed random amount of time, discounted with a continuous force of interest $\delta$. At a very high level, this explains why this ratio is involved in most thresholds in the table: at any point in time, the model balances the choice between (i) a dollar of dividend paid immediately, with net value involving $\beta$ and $\chi$, and (ii) a dollar paid at the next periodic time (without transaction costs), with expected present value $\gamma/(\gamma+\delta)$.

Let us first focus first on the threshold $\gamma/(\gamma+\delta)$ for $\beta$. Ruin is unlikely in the next instant, and if $\mu\geq 0$ then we do not want to liquidate at first opportunity, so we ignore ruin for now. If a dividend of size of $\xi$ is to be paid, the decision between paying now (as immediate dividend) or paying later (as a periodic dividend) should depend on whether 
\begin{equation}
\beta\xi-\chi > \frac{\gamma}{\gamma+\delta}\xi,
\end{equation}
in an expected sense. This would be the case if and only if 
\begin{equation}
\xi>\frac{\chi}{\beta-\frac{\gamma}{\gamma+\delta}}\quad\text{and}\quad  \beta>\frac{\gamma}{\gamma+\delta}.
\end{equation}
This condition will indeed re-appear later when we construct our candidate strategy (which will be proved optimal); for instance in Proposition \ref{prop.2} where we use $\alpha$ to denote $\beta-\gamma/(\gamma+\delta)$. It plays a significant role in determining the minimum distance between barriers $a_c$ and $b$. 

Now, when $\mu<0$, we must liquidate as soon as possible so only $\xi>\chi/\beta$ is required for immediate dividend, {because this is the amount of fixed transaction costs that need to be paid, and the optimisation won't require extra to compensate for future possible gains (since the business is not profitable)---we only need the dividend to be admissible}. This explains the distinction between the two right columns. This can also be interpreted as follows. The quantity
\begin{equation}
-\frac{\mu}{\gamma+\delta} = -\frac{\mu}{\gamma}\frac{\gamma}{\gamma+\delta}
\end{equation}
is in fact the expected present value of the expected loss $-\mu/\gamma$ (in absolute terms) that will be accumulated until the next periodic payment. Whether $\beta$ times this quantity is more or less than the fixed transaction cost $\chi$ impacts the optimal strategy, which makes intuitive sense. This is especially the case when $\chi$ is large, that is, when immediate dividends are very expensive. In this case, even for sufficiently high $\beta_0< \beta \le \gamma/(\gamma+\delta)$, it will be optimal to liquidate with a periodic payment at first opportunity ($\pi_0$), but not immediately. The threshold $\beta_0$ will be defined in Section \ref{S.mu.neg}.

The expected present value of a Periodic barrier $\pi_b$ can be found in \citet{AvTuWo16} or \citet{PeYa16b}, that of a hybrid strategy $(a_p,a_c,b)$ in Section \ref{S.hybridG} (with optimal parameters in Section \ref{S.hybridG}), an that of a Liquidation $(b_1,\infty)$ strategy in Section \ref{S.mu.neg}. Optimality of those strategies is established thanks to the Verification lemma in Section \ref{S.Verification} through the referenced Theorems in Sections \ref{S_betasucks}, \ref{S.Optimality}, and \ref{S.mu.neg}, respectively. Further illustrations are provided in Section \ref{S.Numerical}.

\section{A verification lemma}\label{S.Verification}

In this section, we provide a set of sufficient conditions for a strategy $\pi\in\Pi$ to be optimal, in the sense of (\ref{Def.Optimal}). Recall that for a real-valued function $F$, the extended generator for the stochastic process $X$ on a real-valued function $F$ is defined to be
$$\mathscr{A}F(x):=\frac{\sigma^2}{2}F''(x)+\mu F'(x)$$
for $x\in\mathbb{R}$ such that the above makes sense. Throughout this paper, we will repeatedly use the following lemma to prove the optimality of different dividend strategies in different cases.

\begin{lemma}\label{Verification.lemma}
	For a strategy $\pi^*\in\Pi$, denote its value function $H(x):=V(x;\pi^*)$. Suppose there is a finite set $E\subseteq \mathbb{R}_+$ such that $H$ satisfies
	\begin{enumerate}
		\item $H\geq 0$,
		\item $H\in \mathscr{C}^1(\mathbb{R}_+)\cap\mathscr{C}^2(\mathbb{R}_+\backslash E)$,
		\item $H'$ is bounded on sets $[1/n,n]$ for all $n\in\mathbb{N}$,
		\item On $\mathbb{R}_+\backslash E$, $H$ satisfies 
		\begin{equation}\label{eqt.HJB1}
		(\mathscr{A}-\delta)H(x)+\gamma \sup_{\xi\in[0,x]}\Big(\xi+H(x-\xi)-H(x)\Big)
		\leq 0,
		\end{equation}
		\item On $\mathbb{R}_+$, $H$ satisfies 
		\begin{equation}\label{eqt.HJB2}
		\sup_{\xi\in[0,x]}\Big((\beta\xi-\chi)1_{\{\xi>0\}}+H(x-\xi)-H(x)\Big)
		= 0,
		\end{equation}
	\end{enumerate}
	then $\pi^*$ is optimal, i.e. $V(x;\pi^*)=v(x)$ for all $x\geq 0$.
\end{lemma}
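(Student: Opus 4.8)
The plan is a standard verification argument: I will show that $H(x)\ge V(x;\pi)$ for every admissible $\pi\in\Pi$; since $H=V(\cdot;\pi^*)$ by hypothesis, this immediately yields $V(x;\pi^*)=\sup_{\pi}V(x;\pi)=v(x)$, which is the claim. The workhorse is the change-of-variables (It\^o--Dynkin) formula applied to the discounted controlled process $s\mapsto e^{-\delta s}H(X^\pi(s))$ under an arbitrary admissible control, stopped at the ruin time $\tau:=\tau^\pi$.

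First I would apply It\^o's formula to $e^{-\delta s}H(X^\pi(s))$ on $[0,t\wedge\tau]$. Because $X^\pi$ is a diffusion between jumps and has only downward jumps --- at the Poisson periodic times $T_i$ (of size $\nu^\pi(T_i)$) and at the immediate-dividend times (of size $\Delta D_c^\pi$) --- the expansion splits into (i) a Lebesgue integral of $e^{-\delta s}(\mathscr{A}-\delta)H(X^\pi(s))$, (ii) a stochastic integral $\int e^{-\delta s}\sigma H'(X^\pi(s))\,dW(s)$, and (iii) two jump sums, one over the periodic times and one over the immediate times. Condition 2 is what lets me apply the formula; since $H$ is only $\mathscr{C}^2$ off the finite set $E$ but globally $\mathscr{C}^1$, there is no local-time correction and the generator term is well defined for almost every $s$, because a non-degenerate diffusion spends zero time at any fixed point of $E$. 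Condition 3, together with localization by $\tau_n=\tau\wedge\inf\{t:X^\pi(t)\ge n\}\wedge n$, makes the stochastic integral a true martingale on each $[0,t\wedge\tau_n]$, so it vanishes in expectation.

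Next I would treat the two families of jumps differently, which is the heart of the argument. For the periodic jumps I would use the compensation formula for $N_\gamma$: the sum $\sum_{T_i\le t\wedge\tau}e^{-\delta T_i}\big(H(X^\pi(T_i-)-\nu^\pi(T_i))-H(X^\pi(T_i-))\big)$ has the same expectation as $\int_0^{t\wedge\tau}\gamma e^{-\delta s}\big(H(X^\pi(s)-\nu^\pi(s))-H(X^\pi(s))\big)\,ds$. Combining this with the generator integral and invoking the HJB inequality (4) with $\xi=\nu^\pi(s)\in[0,X^\pi(s)]$ bounds the combined continuous-plus-periodic contribution from above by $-\int_0^{t\wedge\tau}\gamma e^{-\delta s}\nu^\pi(s)\,ds$, whose expectation equals $-\mathbb{E}_x\int_0^{t\wedge\tau}e^{-\delta s}\,dD_p^\pi(s)$, i.e. minus the discounted periodic dividends. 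For the immediate jumps I would instead apply the HJB equality (5) pathwise at each jump with $\xi=\Delta D_c^\pi(s)>0$, giving $H(X^\pi(s-)-\Delta D_c^\pi(s))-H(X^\pi(s-))\le -(\beta\,\Delta D_c^\pi(s)-\chi)$, so that jump sum is dominated by minus the discounted net immediate dividends.

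Assembling these bounds and using $H\ge 0$ (condition 1) to discard the nonnegative terminal term $\mathbb{E}_x[e^{-\delta(t\wedge\tau)}H(X^\pi(t\wedge\tau))]$ yields $H(x)\ge\mathbb{E}_x\big[\int_0^{t\wedge\tau}e^{-\delta s}(dD_p^\pi+(\beta\,dD_c^\pi-\chi)1_{\{\Delta D_c^\pi>0\}})\big]$ for every $t$. Letting $t\to\infty$ (and $n\to\infty$ in the localization) and passing the expectation through the limit --- by decomposing the integrand into its nonnegative periodic part, nonnegative $\beta\,dD_c^\pi$ part, and nonnegative $\chi$-penalty part and applying monotone convergence to each --- gives $H(x)\ge V(x;\pi)$. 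The main obstacle I anticipate is precisely this limiting/integrability step together with the localization: one must verify that the stopped stochastic integral is genuinely a martingale, that $\tau_n\uparrow\tau$, and that limit and expectation may be interchanged for the immediate-dividend sum, whose fixed-cost terms enter with a negative sign. The compensation identity for the periodic jumps --- the device that converts the realized Poisson dividends into the $\gamma$-term of condition (4) --- is the conceptual crux that ties the integral-form inequality (4) to the pathwise dividend stream.
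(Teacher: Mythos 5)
Your proposal follows essentially the same route as the paper's proof in Appendix~\ref{A.ver.lemma}: It\^o's formula for $e^{-\delta s}H(X^\pi(s))$ (with the global $\mathscr{C}^1$ regularity killing the local time on the finite set $E$, so the It\^o--Meyer correction vanishes), compensation of the Poissonian jump sum so that \eqref{eqt.HJB1} absorbs the generator-plus-periodic-dividend contribution, the equality \eqref{eqt.HJB2} applied pathwise at each immediate-dividend jump, positivity of $H$ to discard the terminal term, and a localisation-and-limit argument. (The paper keeps the compensated Poisson integral as an explicit martingale $M$ rather than invoking the compensation formula in expectation, but this is cosmetic.)

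Two loose ends separate your sketch from a complete proof. First, your final limiting step --- monotone convergence applied separately to the nonnegative $\beta\,dD_c^\pi$ sum and the nonnegative $\chi$-penalty sum --- does not justify interchanging limit and expectation for their \emph{difference}: if both sums have infinite expectation the difference is ill-defined, and $\lim_t\mathbb{E}[A_t-B_t]$ need not equal $\lim_t\mathbb{E}[A_t]-\lim_t\mathbb{E}[B_t]$. The paper resolves this by first reducing, via Remark~\ref{Remark.Rational}, to strategies for which every immediate dividend contributes non-negatively (each jump has $\beta\,\Delta D_c^\pi-\chi\ge 0$); the whole discounted dividend stream is then non-negative and a single application of Fatou's lemma closes the argument. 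You flag this obstacle but your proposed fix does not remove it, so the reduction should be made explicit. Second, you omit the case $D^\pi(0)>0$: the jump sum in It\^o's formula runs over $s\in(0,t]$, so an immediate dividend at time $0$ makes the expansion start from $H(x-D^\pi(0))$ rather than $H(x)$. The paper treats this case separately at the end, bounding $\beta D^\pi(0)-\chi+H(x-D^\pi(0))$ by $H(x)$ via \eqref{eqt.HJB2}. Neither issue changes the architecture of your argument, but both must be addressed.
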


\begin{proof}
%See Appendix \ref{A.ver.lemma}.
The proof which is provided in Appendix \ref{A.ver.lemma} is standard; see for instance \citet{PeYa16}. However, it requires careful treatment of (1) the different types of strategies (2) immediate dividend at time $0$ (3) the approximation for It\^o's lemma at the points when $H$ is not smooth.
\end{proof}

\section{Optimality of a periodic barrier strategy when proportional transaction costs $1-\beta$ are high} \label{S_betasucks}

In this section, we show that a periodic $b$ strategy (see Definition \ref{Def.periodic.b}) is optimal when $\beta\leq {\gamma}/{(\gamma+\delta)}$ and $\mu\geq 0$. This case corresponds to the top left cell of Table \ref{T_roadmap}.
%In the remaining of this section, we are going to show that a periodic $b$ strategy is optimal when $\beta\in(0,\frac{\gamma}{\gamma+\delta}]$. 
From \citet{PeYa16} it follows that there exists a optimal barrier $b_0^*\geq 0$ such that the periodic $b_0^*$ strategy, $\pi_{b_0^*}$, is optimal when dividends are only allowed to paid at the dividend payment times. Note that this strategy is also admissible in our setting and our definition of value functions for $\pi_{b_0^*}$ agrees with theirs, as there are no dividends to be paid outside the (Poissonian) dividend payment times and periodic dividends do not attract transaction costs. Therefore, we can borrow the results from \citet{PeYa16} regarding the behaviour of the value function $V(\cdot;\pi_{b_0^*})$, which is summarised as follows:
\begin{enumerate}
	\item The first 4 conditions in Lemma \ref{Verification.lemma} hold, with the finite set $E=\{b_0^*\}$.
	\item When $b_0^*>0$, the function is concave. In particular, we have
	$$V'(x;\pi_{b_0^*})\begin{cases}
	>1,\quad&x\in(0,b_0^*)\\
	=1,\quad&x=b_0^*\\
	\in(\frac{\gamma}{\gamma+\delta},1),\quad&x\in(b_0^*,\infty)
	\end{cases}.$$
	\item When $b_0^*=0$, we have
	\begin{enumerate}
		%\item $V'(x;\pi_{b_0^*})=\frac{\gamma}{\gamma+\delta}\Big(1-\frac{\mu}{\gamma+\delta}s_1e^{s_1x}\Big),\quad x\geq 0.$
		\item If $\mu>0$, then $1\geq V'(0+;\pi_{b_0^*})>V'(x;\pi_{b_0^*})>{\gamma}/{(\gamma+\delta)}\geq \beta>0$, for $x>0$.
		\item If $\mu=0$, then $1>V'(x;\pi_{b_0^*})={\gamma}/{(\gamma+\delta)}\geq \beta>0$, for $x>0$.
	\end{enumerate}
\end{enumerate}

Hence, to show that $\pi_{b_0^*}$ is optimal, it suffices to show (\ref{eqt.HJB2}), i.e.
$$\sup_{\xi\in[0,x]}\Big((\beta\xi-\chi)1_{\{\xi>0\}}+V(x-\xi;\pi_{b_0^*})-V(x;\pi_{b_0^*})\Big)\leq 0,\quad x>0.$$
Denote
\begin{equation*}
H_0(\xi)=\beta\xi-\chi+V(x-\xi;\pi_{b_0^*})-V(x;\pi_{b_0^*}),\quad x>0,
\end{equation*}
and by taking derivative w.r.t. $\xi$, we get
$$H_0^\prime(\xi)=\beta-V'(x-\xi;\pi_{b_0^*})$$
which is always non-positive when $\mu\geq 0$. Hence, the supremum of $H_0$ on $[0,x]$ is attained at $\xi=0$ with value $H_0(0)=-\chi<0$. This shows that the left hand side of (\ref{eqt.HJB2}) is $\max(0,-\chi)=0$.

The above result is restated as the following theorem.

\begin{theorem}\label{Thm.Small.Beta}
	When $\mu\geq 0$ and $0\leq \beta\leq {\gamma}/{(\gamma+\delta)}$, the periodic $b_0^*$ strategy is optimal, where $b_0^*$ is specified in the third item in Proposition \ref{Lemma.convergence.beta}.
\end{theorem}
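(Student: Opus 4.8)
The plan is to invoke the Verification Lemma (Lemma \ref{Verification.lemma}) with $\pi^\ast=\pi_{b_0^\ast}$ and $H(x):=V(x;\pi_{b_0^\ast})$, and to reduce the whole argument to checking the single condition \eqref{eqt.HJB2}. Since $\pi_{b_0^\ast}$ pays nothing outside the periodic decision times and periodic dividends carry no transaction cost, its value function coincides with the value function of the purely periodic control problem solved in \citet{PeYa16}. That reference already supplies $H\geq 0$, $H\in\mathscr{C}^1(\mathbb{R}_+)\cap\mathscr{C}^2(\mathbb{R}_+\setminus\{b_0^\ast\})$ with $H'$ locally bounded, and the periodic HJB inequality \eqref{eqt.HJB1} on $\mathbb{R}_+\setminus\{b_0^\ast\}$. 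Hence conditions 1--4 of Lemma \ref{Verification.lemma} hold for free, with $E=\{b_0^\ast\}$, and only condition 5 remains.

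The key input I would extract from \citet{PeYa16} is the derivative bound on $H$. In each regime---$b_0^\ast>0$, or $b_0^\ast=0$ with $\mu>0$, or $b_0^\ast=0$ with $\mu=0$---concavity of $H$ together with the stated inequalities gives $H'(x)\geq \gamma/(\gamma+\delta)$ for all $x>0$. Because the theorem assumes $\beta\leq\gamma/(\gamma+\delta)$, this immediately yields $H'(x)\geq\beta$ throughout $(0,\infty)$, which is precisely the inequality driving the immediate-dividend comparison.

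To verify \eqref{eqt.HJB2}, I would fix $x>0$ and set $H_0(\xi):=\beta\xi-\chi+H(x-\xi)-H(x)$ for $\xi\in(0,x]$, the net payoff of an immediate dividend of size $\xi$. Differentiating, $H_0'(\xi)=\beta-H'(x-\xi)\leq 0$ by the bound above, so $H_0$ is non-increasing and its supremum over $(0,x]$ is attained as $\xi\downarrow 0$, with limiting value $-\chi\leq 0$. Comparing against the no-dividend choice $\xi=0$ (contributing $0$), the supremum in \eqref{eqt.HJB2} equals $\max(0,-\chi)=0$. Lemma \ref{Verification.lemma} then delivers optimality of $\pi_{b_0^\ast}$ in the enlarged strategy class $\Pi$.

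The hard part is not the computation but securing the uniform derivative bound $H'\geq\gamma/(\gamma+\delta)$ at the edges of the state space: one must control the boundary behaviour as $x\downarrow 0$, handle the kink at $b_0^\ast$, and treat the degenerate case $b_0^\ast=0$ where the sub-cases $\mu>0$ and $\mu=0$ give slightly different bounds on $H'(0+)$. Once that monotonicity of $H'$ against the threshold $\beta$ is in place, the fixed cost $\chi\geq 0$ makes any immediate payment no better than doing nothing, so the purely periodic strategy cannot be improved by admitting extraordinary dividends and remains optimal.
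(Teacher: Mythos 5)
Your proposal is correct and follows essentially the same route as the paper: cite \citet{PeYa16} for conditions 1--4 of Lemma \ref{Verification.lemma} with $E=\{b_0^*\}$ and for the derivative bound $V'(\cdot;\pi_{b_0^*})\geq\gamma/(\gamma+\delta)\geq\beta$, then verify \eqref{eqt.HJB2} by showing $\xi\mapsto\beta\xi-\chi+V(x-\xi;\pi_{b_0^*})-V(x;\pi_{b_0^*})$ is non-increasing so the supremum is $\max(0,-\chi)=0$. This is exactly the argument given in Section \ref{S_betasucks}.
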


\section{The hybrid $(a_p,a_c,b)$ strategy}\label{S.hybridG}

In this section, we calculate the expected present value of dividends of a general hybrid $(a_p,a_c,b)$ strategy and then pick a candidate strategy from the class using a ``maximisation principle''. We will first use scale functions to derive some general results then use the classical PDE method to specialise to the case of diffusions. We make use of the fluctuation theory for L\'evy processes \citep[see, e.g.][Section 6, and references therein]{PeYa16b}.

%\begin{remark}
		%Here we make use of the fluctuation theory for L\'evy processes, and we will borrow results from Section 6 of \citet{PeYa16b}. To fully understand the results, we recommend the work of \citet{Ber98}, \citet{Kyp14}, \citet{LoReZh14}, \citet{Pan17}, \citet{ChDo05}, \citet{PaPeRi15}, \citet{AvPeYa18} and \citet{PeYa16b} (in this order), together with the references therein. 
%\end{remark}

\subsection{Value function}
Denote $\Psi(\theta)=1/t \log\E(e^{\theta X(t)})$, $\theta\in\mathbb{R}$, as the Laplace exponent of a spectrally negative L\'evy process $X$. % as $\Psi$, i.e. 
%\[\Psi(\theta)=\frac{1}{t}\log\E(e^{\theta X(t)}),\quad \theta\in\mathbb{R},\]
Then for $q\geq 0$, the $q$-scale function $W_q$ is the mapping from $\mathbb{R}$ to $[0,\infty)$ that takes the value zero on the negative half-line, while on the positive half-line, it is a strictly increasing function that is defined by its Laplace transform:
\[\int_0^\infty e^{-\theta x}W_q(x)dx=\frac{1}{\Psi(\theta)-q},\quad \theta>\phi(q),%\]
\quad \text{where} \quad
%\[
\phi(q)=\sup\{\lambda\geq 0:\Psi(\lambda)=q\}.\]
In particular, when $X$ is a diffusion process (defined by (\ref{Def.Diffusion})) and $q>0$, we have 
\begin{equation}\label{W.scalefcn.Diffusion}
W_q(x)=\frac{e^{r^{(q)} x}-e^{s^{(q)} x}}{\frac{\sigma^2}{2}(r^{(q)}-s^{(q)})}1_{\{x\geq 0\}},\quad x\in\mathbb{R},
\end{equation}
where $r^{(q)}>0$ and $s^{(q)}<0$ are the two distinct roots of 
$$\Psi(\theta)-q=0\iff \frac{\sigma^2}{2}\theta^2+\mu\theta-q=0.$$

In addition, we also define
\begin{align*}
W_{r,q,a}(x):=~&W_q(x)+r\int_0^{x-a}W_{q+r}(x-y-a)W_q(y+a)dy,\quad x\geq a,\\
\overline{W}_q(x):=~&\int_0^x W_q(y)dy,\quad x\geq 0,\\
\overline{\overline{W}}_q(x):=~&\int_0^x \overline{W}_q(y)dy,\quad x\geq 0.
\end{align*}

In the following, we slightly abuse the notation and assume that the barriers $(a_p,a_c,b)$ are given as $(a,a_c,b)$ and therefore denote the value function $V(x):=V(x;\pi_{a,a_c,b})$. If the dependence on the strategy $\pi$ or costs $1-\beta$ and $\chi$ needs to be stressed, we will write the full version $V(x;\pi_{a,a_c,b})$, or $V_{1-\beta,\chi}(\cdot)$, respectively. The value function $V$ is given by the following lemma.

\begin{lemma}
	\label{lemma.value.function.hybrid.scale}
	For a given hybrid $(a_p,a_c,b)$ strategy (with barrier levels $0< a_p=a\leq a_c<b$), its value function is continuous and is given by
	\begin{equation}\label{value.hybrid}
	V(x)=\begin{cases}
	\frac{V(a)}{\Wq(a)}\Wq(x)&x\in(-\infty,a),\\
	\frac{V(a)}{\Wq(a)}G(a,x-a)-\gamma\WqrBB(x-a)&x\in(a,b),\\
	\beta(x-a_c)-\chi+V(a_c)&x\in(b,\infty),
	\end{cases}
	\end{equation}
	with
	\begin{equation}
	G(a,x-a):=\Wq(x)+\gamma\int_0^{x-a}\Wqr(x-a-y)\Big(\Wq(a+y)-\Wq(a)\Big)dy
	\end{equation}
	where $V(a)$, $V(a_c)$ and $V(b)$ can be found by solving three linear equations in them.
	
	For barrier levels $0= a_p=a\leq a_c<b$, we denote $y=b-a_c,
	    l=a_c-a_p$ and use
	\begin{equation}\label{def.VW0}
	\frac{V(a)}{\Wq(a)}:=\frac{\beta y-\chi+\gamma(\WqrBB(y+l)-\WqrBB(l))}{G(a,y+l)-G(a,l)},
	\end{equation}
	which also holds for the above case when $a_p=a>0$.
\end{lemma}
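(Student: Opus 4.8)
The plan is to pin down $V$ region by region from the infinitesimal dynamics of the controlled surplus, and then to reduce the problem to a small linear system in the boundary values $V(a)$, $V(a_c)$, $V(b)$. First I would condition on an infinitesimal interval $[0,dt]$: a periodic decision arrives with probability $\gamma\,dt$, and otherwise the surplus merely diffuses and is discounted at rate $\delta$. By the strong Markov property (equivalently a Feynman--Kac/Dynkin argument, legitimate because $X$ has continuous paths), this shows that $V$ solves a linear equation on each region, with a source dictated by the periodic rule. On $(0,a)$ no periodic dividend is triggered, so $(\mathscr{A}-\delta)V=0$; on $(a,b)$ a periodic event pays $x-a$ and restarts the surplus at $a$, so $(\mathscr{A}-(\gamma+\delta))V(x)+\gamma\big((x-a)+V(a)\big)=0$; and on $(b,\infty)$ the immediate rule gives $V(x)=\beta(x-a_c)-\chi+V(a_c)$ outright. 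The boundary data are $V(0)=0$ (ruin), continuity and a $C^1$-fit at $a$ (paths cross $a$ continuously and the Poissonian dividends create no local time there, so no singular control acts at $a$), and continuity at $b$.

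Next I would solve these equations with scale functions. On $(0,a)$ the only solution of $(\mathscr{A}-\delta)V=0$ vanishing at $0$ is proportional to $W_\delta$, producing the first branch with coefficient $V(a)/W_\delta(a)$. On $(a,b)$ I would build a particular solution from the diffusion identities $(\mathscr{A}-q)\overline{W}_q\equiv 1$ and $(\mathscr{A}-q)\overline{\overline{W}}_q(x)=x$, which follow by integrating $(\mathscr{A}-q)W_q=0$ once and twice together with $W_q(0)=0$ and $W_q'(0)=2/\sigma^2$; thus $-\gamma\overline{\overline{W}}_{\gamma+\delta}(x-a)$ absorbs the linear source $-\gamma(x-a)$. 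The function $G(a,\cdot)$ is then precisely the homogeneous-plus-constant correction for which $(\mathscr{A}-(\gamma+\delta))G(a,x-a)=-\gamma W_\delta(a)$, which one checks from $(\mathscr{A}-(\gamma+\delta))W_\delta=-\gamma W_\delta$ and the convolution identity $(\mathscr{A}-(\gamma+\delta))\big(W_{\gamma+\delta}\ast g\big)=g$. Crucially, $G(a,0)=W_\delta(a)$ and $\partial_x G(a,0)=W_\delta'(a)$, so the middle branch $\tfrac{V(a)}{W_\delta(a)}G(a,x-a)-\gamma\overline{\overline{W}}_{\gamma+\delta}(x-a)$ meets the first branch continuously and in $C^1$ at $a$ automatically; this is exactly why the representation is written in this form.

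With the matching at $a$ thereby built in, only three relations remain: evaluating the middle branch at $a_c$ and at $b$ expresses $V(a_c)$ and $V(b)$ as affine functions of $V(a)$, and continuity at $b$ gives $V(b)=\beta(b-a_c)-\chi+V(a_c)$. These are three linear equations in $V(a),V(a_c),V(b)$, whose solution determines $V$ everywhere. In the degenerate case $a_p=a=0$, where $V(a)=0=W_\delta(a)$ renders $V(a)/W_\delta(a)$ an indeterminate $0/0$, I would instead subtract the two evaluation relations from the continuity relation at $b$: this cancels $V(a_c)$ and $V(b)$ and leaves a single linear equation for the coefficient itself, yielding (\ref{def.VW0}) with $y=b-a_c$, $l=a_c-a$. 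The identical elimination is valid for $a>0$, which is why (\ref{def.VW0}) holds in both cases; well-definedness reduces to $G(a,y+l)-G(a,l)\neq 0$, which holds because $G(a,\cdot)$ is strictly increasing.

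The hard part will not be the algebra but (i) justifying the integro-differential characterisation rigorously --- in particular handling the Poissonian periodic dividends through fluctuation theory and the behaviour near ruin --- and (ii) assembling $G$ so that the $C^1$-fit at $a$ is free, which is what makes the whole representation collapse to three equations. Once $G$ and the particular solution are correctly identified, verifying continuity at $b$ and the non-degeneracy of the $3\times 3$ system is routine.
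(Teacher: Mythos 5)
Your proposal is correct, but it takes a genuinely different route from the paper's. The paper proves this lemma essentially by citation: it invokes the fluctuation-theoretic identities of P\'erez and Yamazaki (their Sections 5.1, 6.1 and 6.2) with ``minor modification'' to obtain the representation for $a_p>0$, and then handles $a_p=0$ by a limit argument in the spirit of Noba et al. You instead give a self-contained derivation: the region-by-region integro-differential characterisation, the observation that $G(a,\cdot)$ is engineered so that $(\mathscr{A}-(\gamma+\delta))G(a,x-a)=-\gamma W_\delta(a)$ with $G(a,0)=W_\delta(a)$ and $\partial_x G(a,0)=W_\delta'(a)$ (so the $C^1$ fit at $a$ is free), the identities $(\mathscr{A}-q)\overline{\overline{W}}_q(x)=x$ absorbing the linear source, and the reduction to three linear relations among $V(a)$, $V(a_c)$, $V(b)$ -- all of which I have checked and which are sound (for $a=0$ your representation collapses to $cW_{\gamma+\delta}(x)-\gamma\overline{\overline{W}}_{\gamma+\delta}(x)$ via the convolution identity $W_{\gamma+\delta}=W_\delta+\gamma W_{\gamma+\delta}\ast W_\delta$, so the elimination yielding \eqref{def.VW0} covers both cases uniformly without a limit argument). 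What the paper's route buys is generality -- the cited results hold for spectrally negative L\'evy processes, which matters because Proposition \ref{prop.1} is stated at that level -- and brevity. What your route buys is transparency and a uniform treatment of $a_p=0$; it is essentially the ``classical PDE approach'' the paper itself deploys later in Proposition \ref{Prop.Vfcn}, lifted to scale-function notation. The one point you should tighten is the justification of the integro-differential equation and of the smooth fit at $a$: replace the heuristic infinitesimal conditioning by either a strong Markov decomposition at the first of the Poisson arrival and the exit time, or by an It\^o verification that the constructed function equals $V(\cdot;\pi_{a,a_c,b})$; you correctly flag this as the remaining work.
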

\begin{proof}

	Using the notations introduced above, with a minor modification of the proofs in Sections 5.1, 6.1 and 6.2 in \citet{PeYa16b}, we can deduce that for $a_p>0$,
	$$
	V(x)=\frac{\Wq(x)}{\Wq(a_p)}V(a_p),\quad x\in(-\infty,a_p]
	$$
	and
	$$
	V(x)=V(a_p)(\frac{\Wa(x)}{\Wq(a_p)}-\gamma\Wb(x-a_p))-\gamma\Wbb(x-a_p),\quad x\in[a_p,b],
	$$
	which is the same as \eqref{value.hybrid} after some rearrangement. The case for $a=0$ can be carried over by a limit argument as in \citet{NoPeYaYa17}.
\end{proof}

From \eqref{value.hybrid} we see that
\begin{align*}
V(x)=~&\beta x -\chi+\Big(V(a_c)-\beta a_c\Big),\quad x>b,\\
V(x)=~&\Wq(x)\Big(\frac{V(a)}{\Wq(a)}\Big),\quad x\leq a.
\end{align*}
Hence it is reasonable to attempt maximisation of $V(a_c)-\beta a_c$ or $V(a)/\Wq(a)$ w.r.t. the parameters $(a,a_c,b)$ (and we will see both approaches are equivalent).

\subsection{Choice of candidate strategies}

We now proceed to pick a candidate strategy from the class of hybrid $(a_p,a_c,b)$ strategies.
A ``nice'' hybrid $(a_p,a_c,b)$ strategy is characterised by the derivatives of its value function at the boundaries, see e.g. \citet[Remark 9.2]{AvLaWo20d} for an intuitive explanation. We postulate (and later show) that those ``nice'' properties will lead to the optimal set of strategies, and hence refer to those as candidates.

\begin{definition}\label{Def.Nice.hybrid}
	A strategy is said to be a nice hybrid $(a_p,a_c,b)$ strategy if the following are satisfied:
	\begin{enumerate}
		\item It is a hybrid $(a_p,a_c,b)$ strategy (see Definition \ref{Def.hybrid.apacb});
		\item $b\geq a_c+{\chi}/{\beta}$ and $V'(b-)=\beta$;
		\item Either $a_c=a_p$ and $V'(0)\leq \beta$, or $V'(a_c)=\beta$;
		\item Either $a_p=0$ and $V'(0)\leq 1$, or $V'(a_p)=1$.
	\end{enumerate}
\end{definition}

In the following, we will re-parametrise $(a,a_c,b)$ using $(a,l,y)$ with $l:=a_c-a$ and $y:=b-a_c$. 
%\begin{align}
%l:=~&a_c-a,\\
%y:=~&b-a_c.
%\end{align}
The support of $(a,l,y)$ is $[0,\bar{a}]\times [0,\infty)\times [\chi/\beta,\infty)$, where $\bar{a}$ is the unique solution for $\Wq''(x)=0$ if it exists, otherwise $\bar{a}=0$. 
We chose to maximise $V(a_c)-\beta a_c$ with respect to $(a,l,y)$. Regarding the auxiliary function $G$, it is easy to see 
\begin{equation}
\parD{a}G(a,d)=\parD{d}G(a,d)-\gamma \Wq'(a)\WqrB(d)
\quad
\text{and}
\quad
\parD{d}G(a,d)>0.
\end{equation}
For the derivatives of the value function at $a_p=a$, $a_c$ and $b$, we have
\begin{align}
V'(a)=~&\frac{V(a)}{\Wq(a)}\Wq'(a),\label{eq1a}\\
V'(a_c)=~&\frac{V(a)}{\Wq(a)}\parD{l}G(a,l)-\gamma \WqrB(l),\\
V'(b-)=~&\frac{V(a)}{\Wq(a)}\parD{(y+l)}G(a,y+l)-\gamma \WqrB(y+l).\label{eq1c}
\end{align}

We will first show that the derivative conditions are satisfied, provided a maximiser exists for our objective function
\begin{equation}\label{obj.fcn}
V(a_c)-\beta a_c=\frac{V(a)}{\Wq(a)}\Big(\Wq(x)+\gamma\int_0^{l}\Wqr(l-y)\Big(\Wq(a+y)-\Wq(a)\Big)dy\Big)-\gamma\WqrBB(l)-\beta(a+l).
\end{equation}
We will then show the existence of the maximiser. The following proposition illustrate the properties of a set of optimal parameters $(a,l,y)$, assuming its existence. For the moment, we make the following assumption which will be lifted by Proposition \ref{Prop.3}.

\begin{assumption}\label{Ass0}
	For any $a\geq 0$, we have 
	\begin{equation}
	\parD{x}\frac{G(a,x)}{\gamma\WqrB(x)}<0,\quad x> 0.
	\end{equation}
\end{assumption}

\begin{proposition}\label{prop.1}
	Denote $(a^*,l^*,y^*)$ a maximiser of the objective function $V(a_c)-\beta a_c$ and we recall the support is a subset of $a\in[0,\bar{a}]$. 
	Under Assumption \ref{Ass0}, if $(a^*,l^*,y^*)$ lies in the interior of the support, then we can conclude that with such choice of parameters, we have
	\begin{equation}\label{dev.con}
	%\begin{cases}
	V'(a^*)=1,\quad
	V'(a_c^*)=\beta,\quad
	V'(b^*-)=\beta.
	%\end{cases}
	\end{equation}
	Otherwise, if $a^*=0$, then $V'(0)\leq 1$; if $l^*=0$, we have $a^*=l^*=0$ and $V'(0)\leq 1$. These are the only boundary cases.
\end{proposition}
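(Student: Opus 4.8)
The plan is to treat
$$J(a,l,y) := V(a_c)-\beta a_c = C\,G(a,l)-\gamma\WqrBB(l)-\beta(a+l)$$
as a function of the three parameters, where I abbreviate $C:=V(a)/\Wq(a)$. The point is that $C$ is \emph{not} a free variable: it is pinned down by continuity of $V$ at $b$, i.e.\ by \eqref{def.VW0}, which I rewrite as the constraint
\begin{equation*}
C\,\big(G(a,y+l)-G(a,l)\big)=\beta y-\chi+\gamma\big(\WqrBB(y+l)-\WqrBB(l)\big).
\end{equation*}
Writing $\Delta G:=G(a,y+l)-G(a,l)>0$, I would obtain $\partial_a C,\partial_l C,\partial_y C$ by implicitly differentiating this single constraint and then feed them into the three first-order conditions of $J$. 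The whole argument is then bookkeeping that converts stationarity of $J$ into the derivative conditions \eqref{dev.con} through \eqref{eq1a}--\eqref{eq1c}, with Assumption \ref{Ass0} entering at exactly one place. The lower face $y=\chi/\beta$ never binds because Remark \ref{Remark.Rational} already forces $y>\chi/\beta$ strictly, so for an interior point all of $a,l,y$ are strictly positive and $G(a,l)>0$.

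First I would handle $y$. Since $G(a,l)$, $\WqrBB(l)$ and $\beta(a+l)$ are $y$-free, $\partial_y J=(\partial_y C)\,G(a,l)$, so interior stationarity gives $\partial_y C=0$; differentiating the constraint in $y$ (using $\tfrac{d}{dy}\WqrBB(y+l)=\WqrB(y+l)$) then collapses to $C\,\parD{(y+l)}G(a,y+l)=\beta+\gamma\WqrB(y+l)$, which by \eqref{eq1c} is precisely $V'(b-)=\beta$. Next, for $l$, I would compute $\partial_l C$ from the constraint and substitute into $\partial_l J$; using the $y$-relation just obtained, the awkward terms telescope and leave $\partial_l J=\big(V'(a_c)-\beta\big)\,G(a,y+l)/\Delta G$. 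As $G(a,y+l)/\Delta G>0$, stationarity in $l$ yields $V'(a_c)=\beta$.

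The crux is the condition in $a$. Computing $\partial_a C=-C\,(\partial_a\Delta G)/\Delta G$ (the right-hand side of the constraint is $a$-free) and substituting into $\partial_a J$, then eliminating cross-terms with the identity $\parD{a}G(a,d)=\parD{d}G(a,d)-\gamma\Wq'(a)\WqrB(d)$ together with the already established $V'(a_c)=\beta$ and $V'(b-)=\beta$, I expect everything to factor, using $V'(a)=C\Wq'(a)$ from \eqref{eq1a}, as
\begin{equation*}
\partial_a J=\gamma\big(1-V'(a)\big)\Big[\WqrB(l)-\frac{\big(\WqrB(y+l)-\WqrB(l)\big)\,G(a,l)}{\Delta G}\Big].
\end{equation*}
After clearing $\Delta G$, the bracket equals $\WqrB(l)\WqrB(y+l)$ times $\big(G(a,y+l)/\WqrB(y+l)-G(a,l)/\WqrB(l)\big)$. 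This is exactly where Assumption \ref{Ass0} does the work: since $x\mapsto G(a,x)/\big(\gamma\WqrB(x)\big)$ is strictly decreasing and $y>0$, this difference is strictly negative, so the bracket is nonzero; hence stationarity in $a$ forces the other factor to vanish, giving $V'(a)=1$ and thus \eqref{dev.con}. Ruling out this spurious factor is the main obstacle, and it is precisely the role of Assumption \ref{Ass0}.

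For the boundary cases the same factorisations apply with equalities replaced by the appropriate one-sided inequalities. At $a^*=0$ the lower-boundary condition $\partial_a J\le0$, combined with the (now signed, strictly negative) bracket, gives $1-V'(0)\ge0$, i.e.\ $V'(0)\le1$. The delicate case is $l^*=0$: here $a_c=a$, the $l$-inequality gives $V'(a)\le\beta$, and re-running the $a$-computation \emph{without} assuming $V'(a_c)=\beta$ I expect the interior condition $\partial_a J=0$ to reduce to
\begin{equation*}
\big(\beta-V'(a)\big)\,G(a,y)=-\gamma\,\WqrB(y)\,\big(1-V'(a)\big)\,\Wq(a),
\end{equation*}
whose left side is $\ge0$ and right side $\le0$ (as $V'(a)\le\beta\le1$), forcing both to vanish and hence a contradiction for $\beta<1$ unless $a^*=0$; so $l^*=0$ indeed forces $a^*=l^*=0$ and $V'(0)\le1$. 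I would close by excluding the remaining upper endpoint $a^*=\bar a$ using the definition of $\bar a$ ($\Wq''(\bar a)=0$) and the concavity it provides for the branch $V=C\Wq$ on $[0,a]$, so that the two configurations above are the only boundary cases.
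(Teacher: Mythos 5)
Your proposal is correct and follows essentially the same route as the paper's proof in Appendix~\ref{A_Prop55}: stationarity in $y$, then $l$, then $a$, with the constraint \eqref{def.VW0} pinning down $C=V(a)/\Wq(a)$, and Assumption~\ref{Ass0} invoked exactly once to sign the coefficient of $1-V'(a)$ in the $a$-derivative (your bracket is algebraically identical to the paper's quantity $\Delta$ in \eqref{use.ass0}, and your $l^*=0$ reduction matches \eqref{eq.con} at $l=0$). The only substantive difference is cosmetic: the paper rules out $y^*=\chi/\alpha$ via the explicit computation \eqref{PC.geq.0.min} and excludes $a^*=\bar a$ by comparison with the no-transaction-cost value function rather than via concavity of $\Wq$, but these do not change the argument.
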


\begin{proof} 
Since $(a^*,l^*,y^*)$ is a maximiser and the objective function is differentiable in the arguments, all the partial derivatives are zero (except in the boundary which requires extra care). In summary, the proof requires a direct checking in the argument $y$, then $l$ then $a$, assisted with the help of equations \eqref{eq1a}-\eqref{eq1c}; see Appendix \ref{A_Prop55} for details.
\end{proof}
	
Although the proof of Proposition \ref{prop.1} is simple and is similar to existing proofs in the literature \citep[e.g.,][]{Loe08a}, it presents the main ingredients in showing the existence of a candidate strategy characterised by \emph{three} non-zero parameters. This is generally a difficult problem since explicit calculation is often impossible. To our best knowledge, this is the first time such problem has been solved.

\begin{remark}
	From the proof in Appendix \ref{A_Prop55}, we see that maximising $V(a_c)-\beta a_c$ is the same as maximising $V(a)/\Wq(a)$. From the formula of $V(a)/\Wq(a)$ in \eqref{def.VW0}, we see that the $a$-argument of maximiser of $V(a)/\Wq(a)$ cannot live outside $[0,\bar{a}]$, which justifies our choice of a narrower support $a\in[0,\bar{a}]$.
\end{remark}

We now proceed to show the existence of a local maximiser $(a^*,l^*,y^*)$. Due to the complexity of the calculation using scale functions \citep[generally one assumes completely monotonic L\'evy density and proceed with complicated calculations, see e.g.][]{NoPeYaYa17}, we specialise our calculations using the classical PDE methods. 
	Denote the following functions:
\begin{align}
\psi(\theta):=~&\frac{\sigma^2}{2}\theta^2+\mu \theta\\
f(x):=~&e^{r_0 x}-e^{s_0 x},\\
g(x):=~&e^{r_1 x}-e^{s_1 x},\\
J(x):=~&-s_1g(x)+(r_1-s_1)(e^{s_1x}-1),\\
J'(x)=~&-r_1s_1g(x),
\end{align}
where $(r_0,s_0)$ and $(r_1,s_1)$ are the positive and negative roots of equations $\psi(\theta)-\delta=0$ and $\psi(\theta)-\gamma-\delta=0$ respectively, with $|s_i|>|r_i|$, $i=0,1$ (since $\mu>0$). Note $f$, $g$ and $J$ are proportional to the \emph{scale functions} $\Wq$, $\Wqr$ and $\WqrB$, respectively, see equation (\ref{W.scalefcn.Diffusion}). 

Before stating the value function in terms of $f$, $g$ and $J$, we shall discuss the smoothness of the value function for the PDEs to be solved. From the proof of Lemma \ref{lemma.value.function.hybrid.scale}, we can conclude that the value function $V$ is continuous, continuously differentiable except at $\{0\}$, and twice differentiable except at $0$ and at $b$, i.e. $V\in\mathscr{C}(\mathbb{R})\cap\mathscr{C}^1(\mathbb{R}\backslash\{0,b\})\cap\mathscr{C}^2(\mathbb{R}\backslash\{0,b\})$. 

The following proposition provides an alternative characterisation for the value function of a hybrid $(a_p,a_c,b)$ strategy. 
\begin{proposition}\label{Prop.Vfcn}
	For given ($a,a_c,b$) with $b>a_c+\chi/\beta$ and $ a_c\geq a\geq 0$, the value function of the hybrid $(a,a_c,b)$ strategy is given by
	\begin{equation}\label{eq.PDE}
	V(x;\pi_{a,a_c,b})=\begin{cases}
	0,&x\in(-\infty,0)\\
	C(e^{r_0 x}-e^{s_0 x}),&x\in[0,a)\\
	A(e^{r_1(x-a)}-e^{s_1(x-a)})+Be^{s_1(x-a)}+\frac{\gamma}{\gamma+\delta}(x-a+\frac{\mu}{\gamma+\delta}+V(a)),&x\in[a,b)\\
	\beta(x-a_c)-\chi+V(a_c),&x\in[b,\infty)
	\end{cases},
	\end{equation}
	with $l=a_c-a$, $d=b-a$, $g(d,l)=g(d)-g(l)$, $J(d,l)=J(d)-J(l)$,
	\begin{equation}\label{C.formula}
	C=\frac{(r_1-s_1)\big((\beta-\frac{\gamma}{\gamma+\delta})(d-l)-\chi\big)+\frac{\gamma}{\gamma+\delta}g(d,l)+\frac{\gamma\mu}{(\gamma+\delta)^2}J(d,l)}{\frac{\delta}{\gamma+\delta}f(a)J(d,l)+f'(a)g(d,l)},
	\end{equation}
	\begin{equation}\label{B.formula}
	B=\frac{\delta}{\gamma+\delta}Cf(a)-\frac{\gamma}{\gamma+\delta}\frac{\mu}{\gamma+\delta},
	\end{equation}
	\begin{equation}\label{A.formula}
	A=\frac{1}{r_1-s_1}\Big(Cf'(a)-Bs_1-\frac{\gamma}{\gamma+\delta}\Big)
	\end{equation}
	and
	\begin{align}
	V(a)=~&Cf(a)\\
	V(a_c)=~&Ag(l)+Be^{s_1 l}+\frac{\gamma}{\gamma+\delta}(l+\frac{\mu}{\gamma+\delta}+Cf(a)).
	\end{align}
	We also adopt the (unusual) convention that $[0,0)=\emptyset$ in \emph{(\ref{eq.PDE})}.
\end{proposition}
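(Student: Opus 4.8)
The plan is to determine $V(\cdot;\pi_{a,a_c,b})$ region by region by solving, in each region, the differential equation governing the value function of this fixed (stationary Markov) strategy, incorporating the periodic jumps to $a$, and then to pin down the free constants using the smoothness of $V$ recorded just after Lemma~\ref{lemma.value.function.hybrid.scale}. Since $f$, $g$ and $J$ are proportional to $\Wq$, $\Wqr$ and $\WqrB$ respectively, the outcome will also be consistent with (and could alternatively be verified directly against) the scale-function representation \eqref{value.hybrid}.

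First I would write down the governing equations. On $(-\infty,0)$ ruin has occurred so $V\equiv 0$, and \eqref{E_V0} gives $V(0)=0$. On $[0,a)$ the surplus lies strictly below the periodic barrier $a_p=a$, so a periodic decision triggers no payment; hence $V$ solves the homogeneous equation $(\mathscr{A}-\delta)V=0$, whose solutions are spanned by $e^{r_0 x}$ and $e^{s_0 x}$, and imposing $V(0)=0$ collapses this to $V(x)=Cf(x)$ for a single constant $C$. On $[a,b)$ a periodic decision pays the surplus down to $a$, yielding the non-homogeneous equation
\[(\mathscr{A}-(\gamma+\delta))V(x)=-\gamma\big((x-a)+V(a)\big);\]
its homogeneous part is spanned by $e^{r_1(x-a)},e^{s_1(x-a)}$ and a linear particular solution gives exactly the affine term $\frac{\gamma}{\gamma+\delta}\big(x-a+\frac{\mu}{\gamma+\delta}+V(a)\big)$, leaving two constants $A,B$. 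Finally, on $[b,\infty)$ an immediate dividend of size $x-a_c$ is paid, so $V(x)=\beta(x-a_c)-\chi+V(a_c)$ by definition of the strategy, which accounts for the three cases in \eqref{eq.PDE}.

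Next I would fix $C,A,B$ using the regularity of $V$: it is $\mathscr{C}^1$ (indeed $\mathscr{C}^2$) at $a$ but only continuous at $b$. Continuity at $a$ equates the $[0,a)$ and $[a,b)$ expressions and, after using $V(a)=Cf(a)$, yields \eqref{B.formula}; matching $V'(a-)=V'(a+)$ yields \eqref{A.formula}. One checks that $\mathscr{C}^2$-matching at $a$ is then automatic, since both adjacent equations reduce to $\frac{\sigma^2}{2}V''+\mu V'=\delta V(a)$ at $x=a$, so $V'(a-)=V'(a+)$ forces $V''(a-)=V''(a+)$. The remaining condition is continuity at $b$, namely $V(b-)=\beta(b-a_c)-\chi+V(a_c)$; substituting the $[a,b)$ form together with the already-found $A,B$ (both affine in $C$) produces a single linear equation for $C$.

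The main obstacle is this last algebraic step: reducing the continuity-at-$b$ equation to the closed form \eqref{C.formula}. The term $e^{s_1 d}-e^{s_1 l}$ that arises must be re-expressed via the identity $(r_1-s_1)(e^{s_1 d}-e^{s_1 l})=J(d,l)+s_1 g(d,l)$, which follows directly from the definition of $J$; substituting it makes the $B s_1 g(d,l)$ contributions cancel and collects the rest into the stated numerator and denominator. I would eliminate $A$ and then $B$, keeping $\kappa:=\gamma/(\gamma+\delta)$ as shorthand, so that the coefficient of $C$ becomes $f'(a)g(d,l)+\frac{\delta}{\gamma+\delta}f(a)J(d,l)$ and the constant side becomes $(r_1-s_1)\big((\beta-\kappa)(d-l)-\chi\big)+\kappa g(d,l)+\kappa\frac{\mu}{\gamma+\delta}J(d,l)$, giving \eqref{C.formula}. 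The $a=0$ case is then recovered from the convention $[0,0)=\emptyset$ together with $f(0)=0$, so that $V(a)=Cf(a)=0$ is consistent with \eqref{E_V0} and every formula passes to the limit unchanged.
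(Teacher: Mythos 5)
Your proposal is correct and follows one of the two routes the paper's (one-line) proof explicitly names, namely the ``classical PDE approach'': the region-by-region ODEs, the particular solution $\frac{\gamma}{\gamma+\delta}\big(x-a+\frac{\mu}{\gamma+\delta}+V(a)\big)$, the continuity and $\mathscr{C}^1$ matching at $a$ giving \eqref{B.formula} and \eqref{A.formula}, and the continuity-at-$b$ equation reduced via $(r_1-s_1)(e^{s_1 d}-e^{s_1 l})=J(d,l)+s_1 g(d,l)$ to \eqref{C.formula} all check out. The only difference is that you supply the algebra the paper omits, while leaning (legitimately, as the paper itself does) on the regularity of $V$ asserted after Lemma \ref{lemma.value.function.hybrid.scale}.
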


\begin{proof} Formulas can be derived by either directly substitute the formula of \eqref{W.scalefcn.Diffusion} in Lemma \ref{lemma.value.function.hybrid.scale}, or by a classical PDE approach.  \end{proof}

So far, Proposition \ref{prop.1} holds for general spectrally negative L\'evy processes (where Assumption \ref{Ass0} may or may not hold). We now specialise our results in the diffusion setting, and show in Proposition \ref{Prop.3} that Assumption \ref{Ass0} always holds for diffusion processes, so we can use the conclusion of Proposition \ref{prop.1} freely.

\begin{proposition}\label{Prop.3}
	Assumption \ref{Ass0} holds for diffusion processes.
\end{proposition}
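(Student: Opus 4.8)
The plan is to reduce the claimed inequality to a sign condition on a Wronskian-type functional and then to control that functional through a first-order linear ODE. By the quotient rule, and recalling $\parD{x}\Wb(x)=\Wqr(x)$, the target inequality
\[
\parD{x}\frac{G(a,x)}{\gamma\Wb(x)}<0,\qquad x>0,
\]
is equivalent to the negativity of the numerator
\[
w(x):=\big(\parD{x}G(a,x)\big)\,\gamma\Wb(x)-G(a,x)\,\gamma\Wqr(x),\qquad x>0,
\]
because the denominator $(\gamma\Wb(x))^2$ is positive. So I would first reduce the whole statement to showing $w(x)<0$ for all $x>0$.

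The first step I would carry out is to identify the ODEs solved, in the second argument, by $G(a,\cdot)$ and by $\Wb$. For diffusions one has $\Wq(0)=\Wqr(0)=0$ and $\tfrac{\sigma^2}{2}\Wqr'(0)=1$, and $\Wq,\Wqr$ solve the homogeneous equations $(\mathscr{A}-\delta)\Wq=0$ and $(\mathscr{A}-(\gamma+\delta))\Wqr=0$ on $(0,\infty)$. Differentiating $G(a,x)=\Wq(a+x)+\gamma\int_0^x\Wqr(x-y)(\Wq(a+y)-\Wq(a))\,dy$ twice in $x$, the convolution term contributes only the boundary term $\gamma(\Wq(a+x)-\Wq(a))$ coming from $\tfrac{\sigma^2}{2}\Wqr'(0)=1$, the integral remainder being annihilated by the homogeneous equation for $\Wqr$; combined with $(\mathscr{A}-(\gamma+\delta))\Wq(a+x)=-\gamma\Wq(a+x)$ this collapses to the constant-right-hand-side equation $(\mathscr{A}-(\gamma+\delta))G(a,\cdot)=-\gamma\Wq(a)$. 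An analogous, simpler integration of $(\mathscr{A}-(\gamma+\delta))\Wqr=0$ using $\Wqr(0)=0$ and $\tfrac{\sigma^2}{2}\Wqr'(0)=1$ gives $(\mathscr{A}-(\gamma+\delta))\Wb=1$, hence $(\mathscr{A}-(\gamma+\delta))(\gamma\Wb)=\gamma$.

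The second step turns these two second-order equations into a first-order equation for $w$. With $P=G(a,\cdot)$ and $Q=\gamma\Wb$ one has $w=P'Q-PQ'$ and $w'=P''Q-PQ''$ (the cross terms cancel). Substituting $\tfrac{\sigma^2}{2}P''=(\gamma+\delta)P-\mu P'-\gamma\Wq(a)$ and $\tfrac{\sigma^2}{2}Q''=(\gamma+\delta)Q-\mu Q'+\gamma$ and simplifying, the $(\gamma+\delta)PQ$ terms cancel and what remains is
\[
\frac{\sigma^2}{2}w'(x)+\mu\,w(x)=-\gamma\big(\gamma\Wq(a)\Wb(x)+G(a,x)\big).
\]
For $x>0$ the right-hand side is strictly negative, since $\Wb(x)>0$, $\Wq(a)\ge 0$, and $G(a,x)>0$ (the latter because $\Wq$ is increasing, so the integrand defining $G$ is non-negative while $\Wq(a+x)>0$). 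Since $\Wb(0)=\Wqr(0)=0$ forces $w(0)=0$, the integrating-factor representation
\[
w(x)=\frac{2}{\sigma^2}\,e^{-2\mu x/\sigma^2}\int_0^x e^{2\mu s/\sigma^2}\Big(-\gamma\big(\gamma\Wq(a)\Wb(s)+G(a,s)\big)\Big)\,ds
\]
exhibits $w(x)$ as a positive multiple of a strictly negative integral, so $w(x)<0$ for every $x>0$, which is exactly what is needed. The argument is uniform in $a\ge 0$; the case $a=0$ simply uses $\Wq(0)=0$ and $G(0,x)>0$.

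The one delicate point I anticipate is the clean derivation of the constant right-hand side $-\gamma\Wq(a)$ for $G(a,\cdot)$: it hinges on correctly extracting the single boundary term when differentiating the convolution twice and on the normalisation $\tfrac{\sigma^2}{2}\Wqr'(0)=1$, after which everything collapses because $\Wqr$ solves its homogeneous equation. Once both constant-right-hand-side equations are in hand, the reduction to the first-order ODE for $w$ and the integrating-factor conclusion are routine.
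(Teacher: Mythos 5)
Your proof is correct, but it takes a genuinely different route from the paper's. The paper substitutes the explicit exponential expressions for the diffusion scale functions (the functions $f$, $g$, $J$ of Proposition \ref{Prop.Vfcn}), reduces the Wronskian-type numerator $J\,\partial_x G - G\,\partial_x J$ by direct algebra to $-f'(a)\bigl(g'(x)-(r_1-s_1)e^{(r_1+s_1)x}\bigr)+s_1 f(a) r_1 g(x)$, and checks the sign of each term separately (the first via an elementary monotonicity argument). You instead never open up the exponentials: you use only the defining ODEs and boundary normalisations of $\Wq$, $\Wqr$, $\WqrB$ to show that $G(a,\cdot)$ and $\gamma\WqrB$ satisfy $(\mathscr{A}-(\gamma+\delta))G(a,\cdot)=-\gamma\Wq(a)$ and $(\mathscr{A}-(\gamma+\delta))(\gamma\WqrB)=\gamma$ (both identities check out, including the single boundary term $\gamma(\Wq(a+x)-\Wq(a))$ from differentiating the convolution twice), derive the first-order linear ODE $\tfrac{\sigma^2}{2}w'+\mu w=-\gamma\bigl(\gamma\Wq(a)\WqrB(x)+G(a,x)\bigr)$ for the Wronskian $w$, and conclude from $w(0)=0$ and the strictly negative source term via the integrating factor. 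The paper's computation is shorter and entirely concrete for the Brownian case; your argument is structurally more transparent --- it isolates exactly which facts drive the inequality (positivity of $G$ and $\WqrB$, non-negativity of $\Wq(a)$, and the inhomogeneous scale-function ODEs) --- and is the version one would want if extending Assumption \ref{Ass0} beyond the constant-coefficient diffusion setting. Both arguments establish the same sign statement, uniformly in $a\ge 0$.
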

\begin{proof}
	The result stems directly from the explicit formula given by Proposition \ref{Prop.Vfcn}; see Appendix \ref{A.1} for details.
\end{proof}

Thanks to Proposition \ref{Prop.Vfcn}, we have an explicit formula for the value function. We are now ready to show the following proposition.

\begin{proposition}\label{prop.2}
	There exists a triplet $(a^*,l^*,y^*)\in\mathscr{B}$ such that the ``derivative conditions'' \eqref{dev.con} hold.
\end{proposition}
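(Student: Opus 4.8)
The plan is to derive Proposition \ref{prop.2} from the existence of an \emph{interior} maximiser of the objective, letting Proposition \ref{prop.1} do the rest. Write $\Phi(a,l,y):=V(a_c)-\beta a_c$ on the support $\mathscr{B}=[0,\bar a]\times[0,\infty)\times[\chi/\beta,\infty)$, and recall from the remark after Proposition \ref{prop.1} that maximising $\Phi$ is equivalent to maximising $R(a,l,y):=V(a)/\Wq(a)=N(l,y)/D(a,l,y)$, where, by \eqref{def.VW0}, $N(l,y):=\beta y-\chi+\gamma\big(\WqrBB(y+l)-\WqrBB(l)\big)$ depends on $(l,y)$ only and $D(a,l,y):=G(a,l+y)-G(a,l)>0$. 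Proposition \ref{prop.1} already shows that at any interior maximiser the three equalities \eqref{dev.con} hold, and that the only possible boundary cases are $a^*=0$ and $l^*=0$. Hence the whole task reduces to two things: first, proving that the supremum of $R$ over $\mathscr{B}$ is \emph{attained}; second, excluding the two degenerate boundary cases in the regime $\beta>\gamma/(\gamma+\delta)$, $\mu\ge0$. Everything is done on the explicit, real-analytic representation of Proposition \ref{Prop.Vfcn}.

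First I would settle the compact direction: for each fixed $(l,y)$ the map $a\mapsto R$ is continuous on the compact interval $[0,\bar a]$, so a maximum is attained there, and an interior critical point corresponds, as in the proof of Proposition \ref{prop.1} via \eqref{eq1a} and the identity $\parD{a}G(a,d)=\parD{d}G(a,d)-\gamma\Wq'(a)\WqrB(d)$, to $V'(a)=1$. The remaining, genuinely non-compact directions are $l\to\infty$ and $y\to\infty$. Here the central difficulty is that $R$ is \emph{not} coercive: since $\Wqr$ and $G(a,\cdot)$ grow at the common exponential rate $r_1$, both $N$ and $D$ scale like $e^{r_1(l+y)}$, so $R$ tends to \emph{finite} limits along every escape direction of the $(l,y)$-quadrant (one checks, for instance, that as $y\to\infty$ the factor $e^{r_1(l+y)}$ cancels between numerator and denominator). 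Existence of a maximiser therefore cannot come from a growth estimate.

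The hard part, and the crux of the whole proposition, is to show that the interior supremum \emph{strictly exceeds} these finite boundary limits; this is the concrete form taken here by the introduction's warning that the three equations ``have no reason'' to admit a common solution. My approach would be to compute the limiting constants explicitly from Proposition \ref{Prop.Vfcn} and to interpret them: the $y\to\infty$ and $l\to\infty$ limits correspond to strategies in which the continuous-dividend region is pushed away to infinity, i.e. to (pure) periodic strategies. The hypothesis $\beta>\gamma/(\gamma+\delta)$ says precisely that a dollar paid now as an immediate dividend, netting $\beta$, is worth strictly more than the value $\gamma/(\gamma+\delta)$ of deferring it to the next periodic time; hence introducing a genuine continuous barrier strictly improves on the periodic limit, so that $R$ takes at some interior point a value strictly above all the boundary limits (here $\mu\ge0$ is used to keep the relevant one-dimensional sections well behaved). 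Once this strict domination is in hand, continuity confines the search to a compact subset of $\mathscr{B}$ and Weierstrass produces a maximiser $(a^*,l^*,y^*)$.

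It then remains to exclude the two boundary cases identified by Proposition \ref{prop.1}. If $a^*=0$ then Proposition \ref{prop.1} gives $V'(0)\le1$, but profitability $\mu\ge0$ makes the first retained unit of surplus strictly valuable and forces $V'(0)>1$, a contradiction; the case $l^*=0$ (equivalently $a^*=l^*=0$) is excluded by the same inequality together with $\beta>\gamma/(\gamma+\delta)$, which makes a strictly positive continuous-dividend band optimal. The maximiser is therefore interior, and Proposition \ref{prop.1} yields the three derivative conditions \eqref{dev.con}. To summarise, the routine ingredients are continuity, compactness in $a$, and the explicit exponential formulas; the single real obstacle is the non-coercive comparison of interior values against the finite limits at infinity, which is exactly where the assumption $\beta>\gamma/(\gamma+\delta)$ must be used.
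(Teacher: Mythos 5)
Your overall frame---reduce the proposition to the existence of a maximiser of $V(a_c)-\beta a_c$ over $\mathscr{B}$ and let Proposition \ref{prop.1} deliver the derivative conditions---is the same as the paper's, and you correctly identify the non-coercivity of the objective (finite limits as $l,y\to\infty$) as the real obstacle. But your final step, excluding the boundary cases $a^*=0$ and $l^*=0$, is both unnecessary and wrong. The claim that ``profitability $\mu\ge0$ forces $V'(0)>1$'' contradicts the paper itself: Section \ref{Remark.suff.cons} exhibits parameter ranges with $\mu>0$ and $\beta>\gamma/(\gamma+\delta)$ for which the maximiser has $a^*=0$ (essentially $\gamma\le(\sigma^2/2)r_1^2$) or even $a^*=l^*=0$, and Lemma \ref{L_conv} shows that at $\mu=0$ the optimal strategy is the hybrid $(0,0,b)$ strategy, for which $V'(0)\le\beta\le1$. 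Proposition \ref{prop.2} is to be read through Proposition \ref{prop.1} and Definition \ref{Def.Nice.hybrid}, which in these boundary cases replace the equalities $V'(a^*)=1$, $V'(a_c^*)=\beta$ by the inequalities $V'(0)\le1$, $V'(0)\le\beta$. The paper's proof accordingly only rules out every \emph{other} boundary of the box (in particular it shows $y^*$ is interior) and deliberately keeps $a^*=0$ and $l^*=0$ as admissible outcomes; your ``contradiction'' would, if valid, falsify several of the paper's own results.

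The existence argument is also only heuristic where the paper is quantitative. Asserting that the interior supremum strictly exceeds the finite limits at infinity because $\beta>\gamma/(\gamma+\delta)$ makes immediate dividends worthwhile is the right intuition, but you neither prove the strict domination nor the uniformity in the remaining variables needed to confine a three-dimensional search to a compact set. The paper instead pins down signs of partial derivatives outside a compact box: $\frac{\partial}{\partial l}(V(a_c)-\beta a_c)\to\gamma/(\gamma+\delta)-\beta<0$ as $l\to\infty$ uniformly in $(a,y)$; $\frac{\partial C}{\partial y}>0$ at the lower endpoint $y=\chi/\alpha$ for all $(a,l)\in[0,\bar a]\times[0,\bar l]$; and $\frac{\partial C}{\partial y}<0$ for all $y\ge\bar y$ uniformly on that same rectangle. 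These three explicit computations are what actually produce the box $\mathscr{B}=[0,\bar a]\times[0,\bar l]\times[\chi/\alpha,\bar y]$ and are the content your sketch would still need to supply.
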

\begin{proof}
     Thanks to Propositions \ref{Prop.3} and \ref{prop.1}, it remains to construct a large enough box to contain the maximum of the objective $V(a_c)-\beta a_c$. See Appendix \ref{A.prop2} for details.
\end{proof}

\subsection{Sufficient conditions for liquidation strategies}\label{Remark.suff.cons}
%\begin{remark}\label{Remark.suff.cons}
We are now able to derive some sufficient conditions for liquidation strategies to be optimal. Denote the functions
\begin{equation}
    \label{Q.function}
    Q(a):=1-\frac{f(a)/f'(a)}{\mu/\delta}
\end{equation}
and
\begin{equation}\label{I.def}
    I(x,q):=\frac{\beta-\frac{\gamma}{\gamma+\delta}+\Big(\frac{\gamma}{\gamma+\delta}-\frac{-s_1\frac{\gamma}{\gamma+\delta}}{-s_1+q(r_1+s_1)}\Big)
		e^{s_1 x}}{g'(x)+g(x)(-r_1s_1)\frac{\mu}{\gamma+\delta}(1-q)},
\end{equation}
where $Q$ maps the periodic lower barrier $a_p\in[0,\bar{a}]$ to a number $q\in[0,1]$ in an decreasing manner, whereas $I(\cdot,q)$ is a function decreasing to $0$ at infinity after it achieves its maximum.

	For a hybrid $(a_p,a_c,b)$ strategy, if $V'(b)=\beta$, we have (using the formula for $C$ given by \eqref{C.formula})
	
	\begin{align*}
	V'(a_p)-\frac{-s_1\frac{\gamma}{\gamma+\delta}}{-s_1+Q(a_p)(r_1+s_1)}=~&(r_1-s_1)
	I(y+l,q)\\
	<~&(r_1-s_1)\frac{\beta}{g'(g_0)}=:\varepsilon_k,
	\end{align*}
	with $g_0$ is a constant representing the minimum of the denominator of \eqref{I.def} w.r.t. $(x,q)$.
	
	Hence, if
	$$\frac{-s_1\frac{\gamma}{\gamma+\delta}}{r_1}\leq 1-\varepsilon_k,$$
	we will choose $a=0$ ($Q(a)=1$). If we consider $\varepsilon_k=0$, this condition is equivalent to the condition in Remark 4.1(i) in \citet{NoPeYaYa17}, i.e. $\gamma\leq({\sigma^2}/{2})r_1^2$.
	
	Likewise, if 
	$$\frac{-s_1\frac{\gamma}{\gamma+\delta}}{r_1}\leq \beta-\varepsilon_k,$$
	we will choose $a=l=0$ ($a_p=a_c=0$). In fact, for any $q$ such that
	$\frac{-s_1\frac{\gamma}{\gamma+\delta}}{r_1+q(r_1+s_1)}\geq \beta,$
	we have $I(0,q)\leq 0$ and therefore for any $q$ there are always $(l,y+l)$ with $l>0$ such that $I(l,q)=I(y+l,q)$. This implies that when $$1-\varepsilon_k\geq \frac{-s_1\frac{\gamma}{\gamma+\delta}}{r_1}\geq \beta,$$
	we will choose $a=0$ but $l>0$ ($a_p=0<a_c$).
	
	In practice, $\varepsilon_k$ is usually negligible, compared to the size of $\beta$. Likewise, we can see that once we have the smoothness condition $V'(b)=\beta$, the right hand side is (almost) negligible and therefore the derivative at $a$ is (almost) independent of both $y$ and $l$.
	
	In terms of the parameters in the model, we have
	$$\frac{-s_1}{r_1}=1+\frac{1+\sqrt{1+2\nu}}{\nu},\quad \text{ with }\quad \nu:=\Big(\frac{\sigma}{\mu}\Big)^2(\gamma+\delta)$$
	and therefore the above two sufficient conditions can be rewritten as $$\frac{\gamma}{\gamma+\delta}\frac{1+\sqrt{1+2\nu}}{\nu}+\varepsilon_k\leq \frac{\delta}{\gamma+\delta}%$$
	\quad \text{and} \quad
	%$$
	\frac{\gamma}{\gamma+\delta}\frac{1+\sqrt{1+2\nu}}{\nu}+\varepsilon_k\geq \beta-\frac{\gamma}{\gamma+\delta}.$$
	Note further 
	$({1+\sqrt{1+2x}})/{x}$
	is decreasing in $x$, hence there are thresholds $\nu_1(\gamma,\delta)$ and $\nu_\beta(\gamma,\delta)$ (with $\nu_\beta>\nu_1$ unless $\beta\geq 1-\varepsilon_k$) such that 
	$$\begin{cases}
	%\Big(\frac{\sigma}{\mu}\Big)^2(\gamma+\delta)\geq \nu_1(\gamma,\delta)&\implies\quad a=0, ~l>0\\
	\nu_\beta(\gamma,\delta)\geq\Big(\frac{\sigma}{\mu}\Big)^2(\gamma+\delta)\geq \nu_1(\gamma,\delta)&\implies\quad a=0, ~l>0\\
	\Big(\frac{\sigma}{\mu}\Big)^2(\gamma+\delta)\geq \nu_\beta(\gamma,\delta)&\implies\quad a=0,~l=0\\
	\end{cases}.$$
	
	Note the sufficient conditions for liquidation at first opportunity ($a=0$) do not depend on the transaction costs $1-\beta$ and $\chi$ as one can always ignore the opportunities to pay immediate dividends. However, it does depend on the time parameters for the frequency periodic payments and discounting ($\gamma$ and $\delta$), as well as the ``coefficient of variation'' ${\sigma}/{\mu}$, a measurement of the riskiness of the business.
	
%\end{remark}

\section{The derivative of the value function of our candidate strategy}\label{S.VD}

From the previous section, we see that there exists a nice hybrid $(a_p,a_c,b)$ strategy (see Definition \ref{Def.Nice.hybrid} and Propositions \ref{prop.1}, \ref{prop.2}). In other words, there are $(a_p,a_c,b)$ such that $a_p\leq \bar{a}$, $V'(a_p)=1$ (or $a_p=0$ and $V'(0)\leq 1$), $V'(a_c)=\beta$ (or $a_p=a_c=0$ and $V'(0)\leq \beta$) and $V'(b)=\beta$. We will pick this strategy and refer it as our candidate strategy, and use $V$ to denote its value function.

In this section, we first establish some results regarding the derivative of the value function of our candidate strategy. They will then be used to verify the optimality of our strategy in Section \ref{S.Optimality}.

As explained before (e.g., Sections \ref{S_map} and \ref{S_betasucks}) we must have
\begin{equation}\label{Ass.betaLarge}
\frac{\gamma}{\gamma+\delta}<\beta\leq 1,
\end{equation}
which becomes apparent in some areas of the proof. 

Our goal in this section is to establish Lemma \ref{Lemma.VD.Nice}. In order to do that, we need to first establish Lemma \ref{Lemma.VD.opt} below, which concerns the behaviour of the derivative of the value function of our candidate strategy.

\begin{lemma}\label{Lemma.VD.opt}
	Regarding the derivative of the value function, we have the following:
	\begin{enumerate}
		\item Suppose $a_p>0$, then 
		\begin{equation}\label{eqt.VD.opt1}
		V'(x)\begin{cases}
		>1,~&x\in[0,a_p)\\
		=1,~&x=a_p\\
		\in (\beta,1),~&x\in[a_p,a_c)\\
		=\beta,~&x=a_c\\
		\in(0,\beta),~&x\in(a_c,b)\\
		=\beta,~&x\in[b,\infty)
		\end{cases}.
		\end{equation}
		\item Suppose $a_p=0$ and $a_c>0$, then 
		\begin{equation}\label{eqt.VD.opt2}
		V'(x)\begin{cases}
		\in (\beta,1],~&x=0\\
		\in (\beta,1),~&x\in(0,a_c)\\
		=\beta,~&x=a_c\\
		\in(0,\beta),~&x\in(a_c,b)\\
		=\beta,~&x\in[b,\infty)
		\end{cases}.
		\end{equation}
		\item Suppose $a_p=a_c=0$, then 
		\begin{equation}\label{eqt.VD.opt3}
		V'(x)\begin{cases}
		\in (0,\beta],~&x=0\\
		\in(0,\beta),~&x\in(0,b)\\
		=\beta,~&x\in[b,\infty)
		\end{cases}.
		\end{equation}
	\end{enumerate}
	In any case, we have $V'>0$ on $[0,\infty)$.
\end{lemma}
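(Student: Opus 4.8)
The plan is to work directly from the explicit formula for the value function provided in Proposition \ref{Prop.Vfcn}, combined with the ``derivative conditions'' \eqref{dev.con} established in Propositions \ref{prop.1} and \ref{prop.2}. The candidate strategy satisfies $V'(a_p)=1$ (or $a_p=0$ with $V'(0)\le 1$), $V'(a_c)=\beta$ (or $a_c=0$), and $V'(b-)=\beta$, so the \emph{boundary} values of $V'$ in each of the three cases are already pinned down by construction. The entire content of the lemma is therefore the \emph{monotonicity/concavity} statement: that $V'$ interpolates strictly between these pinned values on each subinterval, and in particular stays positive throughout. The natural strategy is to show that $V$ is strictly concave on $(0,b)$ and then read off all the claimed inequalities from the boundary values.

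First I would treat the region $(a_p,b)$, where by \eqref{eq.PDE} the function $V$ has the explicit form $A(e^{r_1(x-a_p)}-e^{s_1(x-a_p)})+Be^{s_1(x-a_p)}+\frac{\gamma}{\gamma+\delta}(x-a_p+\cdots)$. Here $V$ solves the ODE $(\mathscr{A}-\delta)V+\gamma(x-a_p+V(a_p)-V)=0$, i.e. $(\mathscr{A}-(\gamma+\delta))V = -\gamma(\text{affine})$, so $V''$ satisfies a linear relation that lets me control its sign. The cleanest route is to observe that $W:=V''$ satisfies the homogeneous equation $(\mathscr{A}-(\gamma+\delta))W=0$ on $(a_p,b)$ (differentiate the ODE twice and use that the inhomogeneous term is affine), whose solutions are combinations of $e^{r_1 x}$ and $e^{s_1 x}$; together with the boundary data $V'(a_c)=\beta=V'(b-)$ and $V'(a_p)=1>\beta$ this forces $V''<0$ on $(a_p,b)$, giving strict concavity and hence the strict interpolation between $1$, $\beta$, and the value of $V'(b-)=\beta$. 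On $[0,a_p)$ the function is $C(e^{r_0 x}-e^{s_0 x})$, proportional to $\Wq$; since $\Wq$ is strictly increasing and convex on the relevant range (recall $a_p\le\bar a$ where $\Wq''\ge 0$), and $C>0$, we get $V'>V'(a_p)=1$ there. On $(b,\infty)$, $V$ is affine with slope $\beta$, giving $V'=\beta$ directly.

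The boundary/degenerate cases ($a_p=0$, and $a_p=a_c=0$) then follow by the same concavity argument on the surviving interval, using the inequalities $V'(0)\le 1$ respectively $V'(0)\le\beta$ supplied by Proposition \ref{prop.1}; I would also need to note that in case 2 the value $V'(0)$ is in fact $>\beta$, which comes from $V'(a_c)=\beta$ together with strict concavity on $(0,a_c)$. The final claim $V'>0$ on $[0,\infty)$ is immediate once the three cases are assembled, since the smallest value attained is $\beta>0$ on $(b,\infty)$ (and $V'\ge\beta>0$ throughout, or $V'>0$ near $0$ in case 3). I expect the main obstacle to be establishing the strict sign of $V''$ on $(a_p,b)$ rigorously: one must rule out $V''$ vanishing in the interior, which requires a careful argument using that a nonzero solution of $(\mathscr{A}-(\gamma+\delta))W=0$ changing sign is incompatible with the two equal slope conditions $V'(a_c)=V'(b-)=\beta$ unless $V''$ keeps a constant sign; handling the subinterval split at $a_c$ (where $V'$ equals $\beta$ but $V''$ need not vanish) and confirming the sign is negative rather than positive is the delicate point, and I would pin it down by evaluating $V''$ just to the right of $a_p$ using the known slope drop from $1$ to $\beta$.
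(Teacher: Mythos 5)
Your central step fails. You propose to prove that $V$ is strictly concave on $(a_p,b)$ (i.e.\ $V''<0$ there) and to read the lemma off from the boundary values. But the boundary data you yourself list, $V'(a_c)=\beta=V'(b-)$, force $V''$ to vanish somewhere in $(a_c,b)$ by Rolle's theorem, so $V''<0$ on all of $(a_p,b)$ is impossible. Worse, strict concavity of $V$ would make $V'$ monotonically decreasing on $(a_p,b)$, which is incompatible with the very statement you are proving: $V'$ must drop strictly below $\beta$ on $(a_c,b)$ (this is forced by $\int_{a_c}^{b}(\beta-V'(x))\,dx=\chi>0$, which comes from continuity of $V$ at $b$) and then climb back up to $\beta$ at $b-$. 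The correct shape is that $V'$ (not $V$) is convex on $[a_p,b]$: the paper writes $V'(a_p+x)=\widetilde{A}r_1e^{r_1x}+\widetilde{B}s_1e^{s_1x}+\tfrac{\gamma}{\gamma+\delta}$, shows by elimination that $\widetilde{A}>0$ and (in the main case) $\widetilde{B}<0$, deduces $V'''\geq0$, and concludes that $V'$ decreases from $1$ at $a_p$ through $\beta$ at $a_c$, dips below $\beta$, and increases back to $\beta$ at $b-$ (indeed $V''(b-)>0$, see Remark \ref{Remark.VDDatB}). Your proposed sign of $V''$ is therefore not a delicate point to be pinned down later; it is the wrong claim.

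Two further problems. On $[0,a_p)$ you invoke convexity of $\Wq$ on $[0,\bar a]$; in fact $\Wq''<0$ there ($\bar a$ is where $\Wq''$ changes sign from negative to positive, since $|s_0|>|r_0|$ when $\mu>0$), i.e.\ the scale function is \emph{concave} on $[0,\bar a]$. Concavity is what you need: it makes $\Wq'$ decreasing, hence $V'(x)=C\Wq'(x)>C\Wq'(a_p)=1$ for $x<a_p$. Convexity, as you state it, would give the reverse inequality. Finally, your closing claim that $V'\geq\beta$ throughout cannot be used to get $V'>0$, since the lemma itself asserts $V'\in(0,\beta)$ on $(a_c,b)$; positivity there has to come from the explicit representation (each of the three terms $\widetilde{A}r_1e^{r_1x}$, $\widetilde{B}s_1e^{s_1x}$, $\tfrac{\gamma}{\gamma+\delta}$ is positive once the signs $\widetilde{A}>0$, $\widetilde{B}<0$, $s_1<0$ are established), which is exactly the coefficient analysis your plan omits.
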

\begin{proof}
The proof requires analysing the functional form of the value function with the derivative conditions imposed for the candidate strategy; see Appendix \ref{A.Lemma6.1} for details.
\end{proof}

The next Lemma shows that our candidate strategy satisfies the last 2 conditions in Lemma \ref{Verification.lemma}.

\begin{lemma}\label{Lemma.VD.Nice}
	The value function of a nice hybrid $(a_p,a_c,b)$ strategy, $V$, satisfies
	\begin{equation}
	(\mathscr{A}-\delta)V(x)+\gamma \sup_{\xi\in[0,x]}\Big(\xi+V(x-\xi)-V(x)\Big)
	\leq 0, \quad x\in\mathbb{R}^+\backslash\{b\}
	\end{equation}
	and
	\begin{equation}
	\sup_{\xi\in[0,x]}\Big((\beta\xi-\chi)1_{\{\xi>0\}}+V(x-\xi)-V(x)\Big)
	= 0,\quad x\in\mathbb{R}^+.
	\end{equation}
\end{lemma}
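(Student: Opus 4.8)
The plan is to reduce both displayed relations to monotonicity statements about two auxiliary functions, and then to exploit the explicit form of $V$ from Proposition \ref{Prop.Vfcn} together with the sign information on $V'$ recorded in Lemma \ref{Lemma.VD.opt}. Throughout I work with the case $a_p>0$ (case (1) of Lemma \ref{Lemma.VD.opt}); the degenerate cases $a_p=0$ and $a_p=a_c=0$ are handled identically after collapsing the interval $[0,a_p)$ and replacing the smooth-fit conditions by the inequalities $V'(0)\le 1$, resp.\ $V'(0)\le\beta$.

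For the second (equality) statement I would introduce $\Phi(z):=V(z)-\beta z$. Writing $z=x-\xi$, the bracket for $\xi>0$ equals $\Phi(z)-\Phi(x)-\chi$, while $\xi=0$ contributes $0$; hence the claim is equivalent to $\sup_{z\in[0,x]}\Phi(z)\le\Phi(x)+\chi$, the overall supremum being exactly $0$ (attained at $\xi=0$). Since $\Phi'=V'-\beta$, Lemma \ref{Lemma.VD.opt} gives that $\Phi$ increases on $[0,a_c]$, decreases on $[a_c,b]$ and is constant on $[b,\infty)$, so its global maximum is $\Phi(a_c)$. The key identity is $\Phi(a_c)-\Phi(b)=\chi$, which follows directly from the affine form $V(x)=\beta(x-a_c)-\chi+V(a_c)$ on $[b,\infty)$ evaluated at $b$. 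A three-way split on $x$ (namely $x<a_c$, $a_c\le x\le b$, and $x>b$) then yields $\sup_{z\le x}\Phi(z)-\Phi(x)\le\chi$ in each case, using that $\Phi(x)\ge\Phi(b)$ on $[a_c,b]$ for the middle case and $\Phi(x)=\Phi(b)$ for the last.

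For the first statement I would set $\Xi(z):=V(z)-z$, so that $\xi+V(x-\xi)-V(x)=\Xi(x-\xi)-\Xi(x)$; since $\Xi'=V'-1$, Lemma \ref{Lemma.VD.opt} shows $\Xi$ is maximised at $a_p$. Thus the supremum term is $0$ on $[0,a_p)$ and equals $\gamma\big[(x-a_p)+V(a_p)-V(x)\big]$ on $(a_p,\infty)$. On $[0,a_p)$, $V$ is a linear combination of $e^{r_0 x}$ and $e^{s_0 x}$ with $\psi(r_0)=\psi(s_0)=\delta$, so $(\mathscr{A}-\delta)V\equiv 0$ and the inequality holds with equality. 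On $(a_p,b)$ the value function solves, by construction (Proposition \ref{Prop.Vfcn}), the ODE $(\mathscr{A}-\delta)V+\gamma[(x-a_p)+V(a_p)-V(x)]=0$, so again equality holds.

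The remaining and hardest region is $(b,\infty)$, where $V$ is affine with slope $\beta$. Here I would set $F(x):=(\mathscr{A}-\delta)V(x)+\gamma[(x-a_p)+V(a_p)-V(x)]$ and observe that it is affine in $x$ with slope $\gamma-(\gamma+\delta)\beta$, which is strictly negative by \eqref{Ass.betaLarge}; hence $F$ is decreasing and it suffices to control $F(b+)$. Using that $V$ is $C^1$ at $b$ with $V'(b)=\beta$, that $V''(b+)=0$, and that $F(b-)=0$ (from the ODE), subtracting the two one-sided expressions gives $F(b+)=-\tfrac{\sigma^2}{2}V''(b-)$. The proof then closes by noting $V''(b-)\ge 0$: indeed $V'\in(0,\beta)$ on $(a_c,b)$ with $V'(b-)=\beta$, so $V'$ increases up to $\beta$ as $x\uparrow b$. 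The main obstacle is precisely this last step --- relating the two one-sided second derivatives across the non-smooth point $b$ and extracting the sign of $V''(b-)$ --- and I expect the only real subtlety beyond it to be the bookkeeping required for the boundary cases where $[0,a_p)$ degenerates.
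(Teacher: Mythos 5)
Your proof is correct and follows essentially the same route as the paper's: the monotonicity analysis of your auxiliary functions $\Phi=V-\beta\,\mathrm{id}$ and $\Xi=V-\mathrm{id}$ is a repackaging of the paper's study of $H_2(\xi)=\beta\xi-\chi+V(x-\xi)-V(x)$ and $H_1(\xi)=\xi+V(x-\xi)-V(x)$ via Lemma \ref{Lemma.VD.opt}, and your treatment of $(b,\infty)$ through the negative slope $\gamma-(\gamma+\delta)\beta$ together with $V''(b+)=0$ and $V''(b-)\geq 0$ is exactly the paper's telescoping argument using Remark \ref{Remark.VDDatB}. No gaps; the degenerate boundary cases are handled as you indicate.
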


\begin{proof} The result comes as a straightforward consequence of Lemma \ref{Lemma.VD.opt}; see Appendix \ref{A.Lemma6.2} for details.
\end{proof}

\section{Optimality in case of profitable business ($\mu \ge 0$)}\label{S.Optimality}

In this section, we show the optimality of a nice hybrid $(a_p,a_c,b)$ strategy, which is the following theorem.

\begin{theorem}\label{Thm}
	Suppose $\mu\geq 0$. Denote $V$ the value function of a nice hybrid $(a_p,a_c,b)$ strategy. Suppose $\pi\in\Pi$, then $V(x)\geq V(x;\pi)$ for all $x\geq 0$. In other words, any nice hybrid $(a_p,a_c,b)$ strategy is optimal.
\end{theorem}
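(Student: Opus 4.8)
The plan is to invoke the verification lemma (Lemma \ref{Verification.lemma}) and show that the value function $V$ of a nice hybrid $(a_p,a_c,b)$ strategy satisfies all five of its sufficient conditions; once verified, the lemma immediately gives $V(x)=v(x)\geq V(x;\pi)$ for every admissible $\pi\in\Pi$, which is exactly the claim. So the work reduces to checking the five hypotheses one by one, with the heavy lifting already prepared by the preceding lemmas.

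First I would dispatch the regularity and positivity conditions. Condition (1), $H=V\geq 0$, follows because $V'>0$ on $[0,\infty)$ (the final assertion of Lemma \ref{Lemma.VD.opt}) together with the boundary value $V(0)\geq 0$ coming from \eqref{E_V0}; integrating a strictly positive derivative from a nonnegative starting value keeps $V$ nonnegative. Condition (2), the smoothness $V\in\mathscr{C}^1(\mathbb{R}_+)\cap\mathscr{C}^2(\mathbb{R}_+\backslash E)$ with the finite set $E$, is exactly the smoothness statement established just before Proposition \ref{Prop.Vfcn}, namely $V\in\mathscr{C}(\mathbb{R})\cap\mathscr{C}^1(\mathbb{R}\backslash\{0,b\})\cap\mathscr{C}^2(\mathbb{R}\backslash\{0,b\})$; I would take $E=\{b\}$ (noting that the only interior non-$\mathscr{C}^2$ point on $\mathbb{R}_+$ relevant here is $b$, with the candidate being $\mathscr{C}^1$ across $a_p$ and $a_c$ by construction of the derivative conditions \eqref{dev.con}). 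Condition (3), boundedness of $V'$ on compact sets $[1/n,n]$, is immediate from the explicit piecewise-exponential/linear form of $V$ in \eqref{eq.PDE}, which is continuous with a continuous (hence locally bounded) derivative away from the isolated point $b$, and bounded near $b$ since $V'(b-)=\beta=V'(b+)$.

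The substance of the proof is conditions (4) and (5), the two Hamilton--Jacobi--Bellman inequalities \eqref{eqt.HJB1} and \eqref{eqt.HJB2}. But these have already been packaged: Lemma \ref{Lemma.VD.Nice} states precisely that the value function of a nice hybrid $(a_p,a_c,b)$ strategy satisfies both the integro-differential inequality on $\mathbb{R}^+\backslash\{b\}$ and the supremum identity on $\mathbb{R}^+$. Thus I would simply cite Lemma \ref{Lemma.VD.Nice} to establish conditions (4) and (5) with $E=\{b\}$. The genuine analytic obstacle — verifying the HJB relations — has been pushed back into Lemma \ref{Lemma.VD.Nice}, which in turn rests on the detailed sign analysis of $V'$ in Lemma \ref{Lemma.VD.opt}; in the present theorem the main (and essentially only) task is to confirm that the hypotheses of the verification lemma match the conclusions of these earlier lemmas, in particular that the exceptional set $E=\{b\}$ is the correct finite set and that the candidate strategy we constructed (via Propositions \ref{prop.1} and \ref{prop.2}) is indeed admissible so that $H=V(\cdot;\pi^*)$ is a legitimate value function of a strategy $\pi^*\in\Pi$.

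With all five conditions in hand, the verification lemma applies with $\pi^*$ equal to the nice hybrid $(a_p,a_c,b)$ strategy, yielding $V(x;\pi^*)=v(x)=\sup_{\pi\in\Pi}V(x;\pi)$ and hence $V(x)\geq V(x;\pi)$ for all $x\geq 0$ and all $\pi\in\Pi$, completing the proof. The anticipated difficulty is almost entirely bookkeeping rather than new mathematics: one must be careful that the existence of the candidate (guaranteed by Proposition \ref{prop.2} under $\gamma/(\gamma+\delta)<\beta\leq 1$) is invoked so that the object $V$ in the statement is well defined, and that the case distinctions of Lemma \ref{Lemma.VD.opt} (according to whether $a_p>0$, $a_p=0<a_c$, or $a_p=a_c=0$) are all covered by Lemma \ref{Lemma.VD.Nice}, so no sub-case of the constructed strategy escapes verification.
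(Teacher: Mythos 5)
Your proposal is correct and follows exactly the paper's own route: the paper's proof of Theorem \ref{Thm} is precisely an appeal to Lemma \ref{Verification.lemma} with the conditions supplied by Lemmas \ref{Lemma.VD.opt} and \ref{Lemma.VD.Nice} and the exceptional set $E=\{b\}$. Your additional bookkeeping on conditions (1)--(3) and on admissibility is consistent with, and merely expands upon, what the paper leaves implicit.
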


\begin{proof}
	Thanks to Lemma \ref{Verification.lemma}, we only need to check that $V$ satisfies all conditions proposed, which is essentially Lemma \ref{Lemma.VD.opt} and \ref{Lemma.VD.Nice}, with the finite set $E=\{b\}$.
\end{proof}

We now present a corollary regarding the uniqueness of nice hybrid $(a_p,a_c,b)$ strategies.

\begin{corollary}\label{Corrolary.unique.barriers}
	Suppose $\mu\geq 0$. There is one and only one nice hybrid $(a_p,a_c,b)$ strategy. Denote its parameters $(a_p^*,a_c^*,b^*)$. Hence, the hybrid $(a_p^*,a_c^*,b^*)$ strategy is optimal.
\end{corollary}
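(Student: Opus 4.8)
The plan is to obtain uniqueness as a consequence of optimality, rather than by analysing the defining system of derivative equations directly. Existence of a nice hybrid strategy is already secured by Propositions \ref{prop.1} and \ref{prop.2} together with the reparametrisation of Section \ref{S.VD}, so only uniqueness remains. Suppose $\pi_1=\pi_{a_p^{(1)},a_c^{(1)},b^{(1)}}$ and $\pi_2=\pi_{a_p^{(2)},a_c^{(2)},b^{(2)}}$ are both nice hybrid strategies (Definition \ref{Def.Nice.hybrid}). By Theorem \ref{Thm} each is optimal, and since the optimal value function $v$ in \eqref{Def.Optimal} is by construction a single well-defined function, both value functions must coincide with it: $V(\cdot;\pi_1)=v=V(\cdot;\pi_2)$ on $[0,\infty)$. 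Hence it suffices to show that the triplet $(a_p,a_c,b)$ can be recovered as a functional of $v$ alone; the two triplets will then necessarily agree and $\pi_1=\pi_2$.

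The recovery step reads the barriers off the level structure of $v'$ supplied by Lemma \ref{Lemma.VD.opt}. From \eqref{eqt.VD.opt1}--\eqref{eqt.VD.opt3} the set $\{x\ge 0:v'(x)<\beta\}$ is exactly the open interval $(a_c,b)$ (since $v'>\beta$ on $[0,a_c)$, while $v'=\beta$ at $a_c$ and on $[b,\infty)$), so I would define $a_c=\inf\{x\ge 0:v'(x)<\beta\}$ and $b=\sup\{x\ge 0:v'(x)<\beta\}$; equivalently $b$ is the left endpoint of the maximal half-line on which $v$ is affine with slope $\beta$. Likewise $\{x\ge 0:v'(x)>1\}=[0,a_p)$, so $a_p=\sup\{x\ge 0:v'(x)>1\}$ with the convention $\sup\emptyset=0$. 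Each of these three quantities depends only on $v$, hence is identical for $\pi_1$ and $\pi_2$, forcing $a_p^{(1)}=a_p^{(2)}$, $a_c^{(1)}=a_c^{(2)}$ and $b^{(1)}=b^{(2)}$. The final assertion of the corollary, that this unique strategy is optimal, is then immediate from Theorem \ref{Thm}.

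The one point requiring care --- and the only place where the argument could fail --- is making sure these infima and suprema land precisely on $a_c$, $b$ and $a_p$, i.e.\ that the three derivative regions are cleanly separated so that no level is touched tangentially inside a region. This is exactly what the \emph{strict} inequalities in Lemma \ref{Lemma.VD.opt} guarantee: $v'\in(0,\beta)$ throughout $(a_c,b)$, $v'\in(\beta,1)$ throughout $(a_p,a_c)$, and $v'>1$ throughout $[0,a_p)$. In particular, the non-concave behaviour on $(a_c,b)$, where $v'$ dips strictly below $\beta$ and returns to $\beta$ precisely at $b$, is what makes $b$ well defined as the end of strict affinity. The boundary cases then need only a one-line check consistent with Lemma \ref{Lemma.VD.opt}: if $v'(0)\le 1$ the set $\{v'>1\}$ is empty and the convention yields $a_p=0$ (case (2)), and if moreover $v'(0)\le\beta$ then $\inf\{v'<\beta\}=0$ yields $a_c=0$ (case (3)). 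In every case the triplet is read off unambiguously, which delivers uniqueness.
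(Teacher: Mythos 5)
Your proposal is correct and follows essentially the same route as the paper: the paper's own (very terse) proof likewise combines the optimality of every nice hybrid strategy (Theorem \ref{Thm}) with the derivative characterisation in Lemma \ref{Lemma.VD.opt} to conclude that the barriers are determined by the unique optimal value function. You have simply spelled out the recovery of $(a_p,a_c,b)$ from the level sets of $v'$, which the paper delegates to an analogous argument in a cited reference.
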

\begin{proof}
	Lemma \ref{Lemma.VD.opt} characterised the derivative of the value function, which together with the optimality implies uniqueness. A similar proof can be found in \citet[Lemma 9.3]{AvLaWo20d}.
\end{proof}

\section{Optimality in case of unprofitable business ($\mu<0$)}\label{S.mu.neg}

In this section, we discuss the optimal strategy when the business is strictly unprofitable, i.e. $\mu<0$. As such, solely in this section, we make the following assumption.
\begin{assumption}\label{Ass.Mu.Neg}
	We assume $\mu<0$.
\end{assumption}
%The construction of an optimal strategy is done in Section \ref{S.opt.st.mu.neg}. The continuity of the optimal barriers when $\mu\uparrow 0$ will be discussed in Section \ref{S.VaryingMu}.

%\subsection{An optimal strategy}\label{S.opt.st.mu.neg}
%We aim to find an optimal strategy in this section. 
Recall that $\mu<0$ implies that we want to liquidate the business in the most (cost-)efficient way. 
In the following, we focus on the case when $\chi>0$ and the case for $\chi=0$ can be seen as the case when $\chi\downarrow 0$ in terms of the structure of the optimal strategy.%, which can be verified separately by the reader.

Our candidate strategies are the Liquidation $(b_1,b_2)$ strategy, characterised by 2 parameters $0<b_1<b_2\leq \infty$, denoted as $\pi_{b_1,b_2}$ (see Definition \ref{Def.periodic.b}) and the periodic $0$ strategy, denoted as $\pi_0$  (see Definition \ref{def.Liq}). The latter pays $X^{\pi_0}(T_1-)$ when $T_1\leq \tau^{\pi_0}$, and it can also be seen as the limit of a liquidation $(b_1,b_2)$ strategy when $b_2-b_1\downarrow 0$, or simply $b_1\uparrow \infty$. Therefore, it is also denoted as $\pi_{a,a}$, or $\pi_{\infty,\infty}$.

For $\beta> {\gamma}/{(\gamma+\delta)}$, it should be intuitively clear that the form of $\pi_{b,\infty}$ is optimal if we can choose the lower barrier $b$ nicely. On the other hand, for $\beta<{\gamma}/{(\gamma+\delta)}$, we proceed the following. It is known that \citep[e.g. from][]{PeYa16} that $V'(x;\pi_0)$ is increasing in $x$ to ${\gamma}/{(\gamma+\delta)}$. Therefore, for $V'(0;\pi_0)<\beta<{\gamma}/{(\gamma+\delta)}$,  there is a unique $a_\beta>0$ such that 
$V'(a_\beta;\pi_0)=\beta$. If moreover 
\begin{equation}\label{Condition.Liqb1b2.Optimal}
V(a_\beta;\pi_0)<\beta a_\beta-\chi,
\end{equation}
then there is a unique $c_{\beta,\chi}$ such that $0<c_{\beta,\chi}<a_\beta$ with
$V(c_{\beta,\chi};\pi_0)=\beta c_{\beta,\chi}-\chi.$ Note that (\ref{Condition.Liqb1b2.Optimal}) is equivalent to
\begin{equation}\label{Condition.Liqb1b2.Optimal2}
\frac{\chi}{\beta}<a_\beta-\frac{V(a_\beta;\pi_0)}{\beta}.
\end{equation}
Denote the right hand side as a function of $\beta$, i.e. \begin{equation}\label{Def.Lambda}
\Lambda(\beta)=a_\beta-\frac{V(a_\beta;\pi_0)}{\beta},
\end{equation}
then we have that
\begin{equation}\label{Property.Lambda}
\text{$\Lambda$ is increasing from $0$ to the limit $\frac{-\mu}{\gamma+\delta}$ when $\beta$ increases on the interval $[V'(0;\pi_0),\frac{\gamma}{\gamma+\delta})$;}
\end{equation}
a proof of which is provided in Appendix \ref{A.Proof.Lambda}. Hence, (\ref{Condition.Liqb1b2.Optimal}) is only possible when $\chi<\beta({-\mu}/({\gamma+\delta}))$. To further explain what it means, note when the surplus is $x$ and we can choose either (1) liquidate now, or (2) liquidate in the next Poissonian time, we need to consider the trade-off. If we liquidate now, the fixed cost is $\chi$. If we wait, assuming ruin is not an issue, the \emph{expected} (discounted) loss in surplus is then $$\E(-\mu T_1 e^{-\delta T_1})=\frac{-\gamma\mu}{(\gamma+\delta)^2},$$where we recall that $T_1$ is an exponential random variable with mean $1/\gamma$. Hence, when $\chi\geq \beta({-\mu}/({\gamma+\delta}))$ and $\beta\leq {\gamma}/{(\gamma+\delta)}$, we shall never liquidate immediately. On the other hand, if $\chi<\beta({-\mu}/({\gamma+\delta}))$, then in view of (\ref{Condition.Liqb1b2.Optimal2}) and (\ref{Property.Lambda}), there is a $\beta_0\in(V'(0;\pi_0),{\gamma}/{(\gamma+\delta)})$ defined by $\Lambda(\beta_0)={\chi}/{\beta}$ such that (\ref{Condition.Liqb1b2.Optimal}) does not hold whenever $\beta\in(V'(0),\beta_0)$ and (\ref{Condition.Liqb1b2.Optimal}) holds whenever $\beta\in(\beta_0,{\gamma}/{(\gamma+\delta)})$.

In light of the above analysis, we should not be surprised with the results in this section. They are summarised by the following theorem.

\begin{theorem}\label{Thm.mu.neg}
	For $\mu<0$, we have the following:
	
	\begin{enumerate}
		\item[Case 1:] $\chi\geq \beta\frac{-\mu}{\gamma+\delta}$. We have
		\begin{enumerate}
			\item For $\beta\in(0,{\gamma}/{(\gamma+\delta)}]$, the periodic $0$ strategy is optimal.
			\item For $\beta\in({\gamma}/{(\gamma+\delta)},1]$, a liquidation $(b,\infty)$ strategy is optimal with $b>0$ characterised by
			$$V'(b-;\pi_{b,\infty})=\beta=V'(b+;\pi_{b,\infty}).$$
		\end{enumerate}
		
		\item[Case 2:] $\chi<\beta\frac{-\mu}{\gamma+\delta}$. Denote $\beta_0:=\Lambda^{-1}(\frac{\chi}{\beta})$, we have
		\begin{enumerate}
			\item For $\beta\in(0,\beta_0]$, the periodic $0$ strategy is optimal.
			\item For $\beta\in(\beta_0,{\gamma}/{(\gamma+\delta)})$, a liquidation $(b_1,b_2)$ strategy is optimal, with $(b_1,b_2)$ such that $0<c_{\beta,\chi}<b_1<a_\beta<b_2<\infty$ and
			$$V'(b_1-;\pi_{b_1,b_2})=V'(b_1+;\pi_{b_1,b_2})=\beta=V'(b_2-;\pi_{b_1,b_2})=V'(b_2+;\pi_{b_1,b_2}).$$
			\item For $\beta\in[{\gamma}/{(\gamma+\delta)},1]$, a liquidation $(b,\infty)$ strategy is optimal, with $b>0$ characterised by
			$$V'(b-;\pi_{b,\infty})=\beta=V'(b+;\pi_{b,\infty}).$$
		\end{enumerate}
	\end{enumerate}
	
\end{theorem}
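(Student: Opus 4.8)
The plan is to apply the Verification Lemma (Lemma \ref{Verification.lemma}) separately to each of the three candidate strategies ($\pi_0$, the liquidation $(b,\infty)$ strategy, and the liquidation $(b_1,b_2)$ strategy), thereby reducing optimality in every subcase to checking the two Hamilton--Jacobi--Bellman conditions (4) and (5) of that lemma. The starting point is an explicit piecewise description of each candidate value function $H$. Since all three strategies only ever \emph{liquidate} (a single terminal payment that either exhausts the surplus periodically, or exhausts it immediately once the surplus enters a ``liquidation zone''), $H$ is built from two kinds of pieces: on a periodic-liquidation region $H$ solves the linear ODE $(\mathscr{A}-\delta-\gamma)H(x)+\gamma x=0$ with $H(0)=0$, with the decaying solution imposed as $x\to\infty$ where relevant; on an immediate-liquidation region $H(x)=\beta x-\chi$, which is precisely the net proceeds of liquidating surplus $x$. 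The free boundaries are pinned down by the first-order (smooth-fit) conditions listed in the theorem, namely $V'(\cdot)=\beta$ at each barrier. Conditions (1)--(3) of Lemma \ref{Verification.lemma} (non-negativity, the $\mathscr{C}^1$ smoothness guaranteed by smooth fit with the barriers forming the finite set $E$ where $H''$ may jump, and local boundedness of $H'$) follow directly from these explicit forms.

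Next I would verify conditions (4) and (5) region by region. In the periodic-liquidation regions condition (4) holds \emph{with equality} by construction: since $H'\le 1$ throughout, the inner supremum $\sup_{\xi}(\xi+H(x-\xi)-H(x))$ is attained at $\xi=x$ and equals $x-H(x)$, so the left-hand side of (4) is exactly the ODE $(\mathscr{A}-\delta-\gamma)H+\gamma x$. On an immediate-liquidation region, where $H(x)=\beta x-\chi$, condition (5) holds with equality (liquidating to $0$ attains the supremum), and the remaining task is condition (4): the left-hand side becomes the \emph{affine} function $\mu\beta+(\gamma-\beta(\gamma+\delta))x+(\gamma+\delta)\chi$, whose slope has the sign of $\gamma/(\gamma+\delta)-\beta$. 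Hence its maximum over the zone is at whichever endpoint abuts an ODE region, and there a short computation shows its value equals $-\frac{\sigma^2}{2}H''$ taken from the ODE side (the lower-order terms cancelling by smooth fit). Thus condition (4) in the linear regions is equivalent to the \emph{second-order} (super-contact) inequality $H''\ge 0$ at the adjacent barrier.

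The two remaining checks, condition (5) in the ODE regions and these super-contact inequalities, both follow once the convexity (equivalently, the monotonicity of $H'$) of each ODE piece is established. Writing $G(x)=H(x)-\beta x$, condition (5) away from the liquidation zone reduces to the sign statements $\beta x-\chi\le H(x)$ (so liquidating-to-zero never strictly helps) and $G\le 0$ below the lower barrier, both immediate from $G(\text{barrier})=-\chi$, $G'(\text{barrier})=0$ and convexity of $G$. The location and existence of the barriers is supplied by the pre-theorem analysis: $V'(\cdot;\pi_0)$ increases to $\gamma/(\gamma+\delta)$, which yields $a_\beta$ and, under $\chi/\beta<\Lambda(\beta)$, the crossing point $c_{\beta,\chi}$ and the threshold $\beta_0=\Lambda^{-1}(\chi/\beta)$ via \eqref{Def.Lambda}--\eqref{Property.Lambda}. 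For the periodic strategy $\pi_0$ (Cases 1(a), 2(a)) there is no free boundary, and condition (5) reduces, via the convexity of $V(\cdot;\pi_0)$, to the single inequality $V(a_\beta;\pi_0)\ge \beta a_\beta-\chi$, i.e.\ the negation of \eqref{Condition.Liqb1b2.Optimal}, which is precisely the regime $\beta\le\beta_0$ (resp.\ $\beta\le\gamma/(\gamma+\delta)$ with $\chi\ge\beta(-\mu)/(\gamma+\delta)$). For the $(b,\infty)$ strategy one shows that the single smooth-fit equation $V'(b-)=\beta$ has a unique positive root.

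I expect the main obstacle to be the two-barrier liquidation $(b_1,b_2)$ strategy of Case 2(b). Here one must first establish existence of a pair $(b_1,b_2)$ solving the two smooth-fit equations together with the strict ordering $0<c_{\beta,\chi}<b_1<a_\beta<b_2<\infty$; although the two equations decouple (the lower and upper ODE pieces each meet the linear zone only at their own barrier), confirming that each root exists, is unique, and lands on the correct side of $a_\beta$ requires careful monotonicity arguments tied to $\Lambda$ and to the behaviour of $V'(\cdot;\pi_0)$. Second, and more delicately, one must establish the convexity of \emph{both} ODE pieces (below $b_1$ and above $b_2$), since this is what simultaneously delivers the super-contact inequalities $H''(b_1-)\ge0$ and $H''(b_2+)\ge0$ needed for condition (4) in the middle zone and the sign conditions on $G$ needed for condition (5). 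Because the sign of $\gamma/(\gamma+\delta)-\beta$ controls which endpoint the affine expression is maximised at, and here $\beta<\gamma/(\gamma+\delta)$, the critical check falls at the \emph{upper} barrier $b_2$; extracting the correct second-order behaviour of the decaying ODE solution there, from its explicit exponential form, is where the real work lies.
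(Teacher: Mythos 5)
Your proposal follows essentially the same route as the paper's proof: verification via Lemma \ref{Verification.lemma}, explicit piecewise value functions as in Lemma \ref{Lemma.ValueFunction}, existence of the smooth-fit barriers by intermediate-value arguments anchored at $c_{\beta,\chi}$ and $a_\beta$ (Lemmas \ref{Lemma.b1b2.Exist}--\ref{Lemma.b.Exist}), and monotonicity of $V'$ on the ODE pieces (via $A>0$ and $B>0$) as the key property delivering both HJB inequalities. Your explicit handling of the linear zone --- the affine expression maximised at the endpoint determined by the sign of $\gamma/(\gamma+\delta)-\beta$, reducing to a super-contact inequality at the adjacent barrier, with the critical check at $b_2$ in Case 2(b) --- correctly fills in the step the paper only defers to as ``similar to Sections \ref{S.Verification} and \ref{S.VD}''.
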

%The impact of $\chi$ will be much clearer in Section \ref{S.NumericalN}.

In order to prove Theorem \ref{Thm.mu.neg}, we will need the following lemmas. Our first lemma calculates the value function for each strategy.

\begin{lemma}\label{Lemma.ValueFunction}
	The value function of a periodic $0$ strategy is given by
	\begin{equation}
	V(x;\pi_0)=-\frac{\gamma\mu}{(\gamma+\delta)^2}e^{s_1x}+\frac{\gamma}{\gamma+\delta}(x+\frac{\mu}{\gamma+\delta}),\quad x\geq 0.
	\end{equation}
	
	The value function of a liquidation $(b_1,b_2)$ strategy (with $0<b_l<b_2<\infty$) is given by 
	\begin{equation}\label{value.Liq.b1b2}
	V(x;\pi_{b_1,b_2})=\begin{cases}
	Ag(x)-\frac{\gamma\mu}{(\gamma+\delta)^2}e^{s_1x}+\frac{\gamma}{\gamma+\delta}(x+\frac{\mu}{\gamma+\delta}),\quad& x\in[0,b_1)\\
	\beta x-\chi,\quad& x\in[b_1,b_2)\\
	Be^{s_1 x}+\frac{\gamma}{\gamma+\delta}(x+\frac{\mu}{\gamma+\delta}),\quad& x\in[b_2,\infty)
	\end{cases},
	\end{equation}
	with
	\begin{equation}\label{ADef}
	A=A(b_1):=\frac{(\beta-\frac{\gamma}{\gamma+\delta})b_1-\chi-\frac{\gamma\mu}{(\gamma+\delta)^2}(1-e^{s_1b_1})}{g(b_1)}
	\end{equation}
	and $B$ can be determined using $V(b_2-)=V(b_2)$, in the case where $b_2<\infty$. 
	
	The value function of a liquidation $(b,\infty)$ strategy, with $b>0$ is simply (\ref{value.Liq.b1b2}) without the $x\in[b_2,\infty)$ branch, where the formula for $A$ is still the same and we do not need to compute $B$.
\end{lemma}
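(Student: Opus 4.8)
The plan is to derive each value function as the unique solution of a second-order linear ODE on the regions where the controlled surplus evolves freely, supplemented by boundary conditions coming from ruin, from liquidation at a barrier, and from growth at infinity. On any interval where no dividend is paid except at a periodic opportunity (at which the whole surplus is liquidated and the process stops), a first-step/generator argument gives, for $V=V(\cdot;\pi)$,
\[(\mathscr{A}-\delta)V(x)+\gamma\big(x+V(0)-V(x)\big)=0,\]
which, using $V(0)=0$ from \eqref{E_V0}, reduces to $\frac{\sigma^2}{2}V''(x)+\mu V'(x)-(\gamma+\delta)V(x)+\gamma x=0$. The homogeneous part is spanned by $e^{r_1x}$ and $e^{s_1x}$ (the roots of $\psi(\theta)-(\gamma+\delta)=0$), and a linear particular solution is $\frac{\gamma}{\gamma+\delta}\big(x+\frac{\mu}{\gamma+\delta}\big)$; hence every branch has the form $c_1 e^{r_1x}+c_2 e^{s_1x}+\frac{\gamma}{\gamma+\delta}(x+\frac{\mu}{\gamma+\delta})$, and it remains only to pin down $c_1,c_2$ on each region.

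For $\pi_0$ there is a single region $[0,\infty)$: an a priori linear bound on $V(\cdot;\pi_0)$ forces $c_1=0$ (the term $e^{r_1x}$ grows exponentially), and $V(0)=0$ then fixes $c_2=-\gamma\mu/(\gamma+\delta)^2$, giving the stated formula. For the liquidation $(b_1,b_2)$ strategy I would split $[0,\infty)$ into the three regions of \eqref{value.Liq.b1b2}: on $[b_1,b_2)$ immediate liquidation gives $V(x)=\beta x-\chi$ directly; on $[0,b_1)$ I solve the ODE with $V(0)=0$ and the value-matching condition $V(b_1-)=\beta b_1-\chi$, writing the solution in the engineered form $Ag(x)-\frac{\gamma\mu}{(\gamma+\delta)^2}e^{s_1x}+\frac{\gamma}{\gamma+\delta}(x+\frac{\mu}{\gamma+\delta})$ so that $g(0)=0$ makes $V(0)=0$ automatic and the single remaining condition at $b_1$ yields $A$ as in \eqref{ADef}; on $[b_2,\infty)$ the linear growth bound again removes $e^{r_1x}$, and $V(b_2-)=V(b_2)$ determines $B$. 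The liquidation $(b,\infty)$ strategy is the special case with no upper region (equivalently the hybrid $\pi_{0,0,b}$, so one may also read it off Proposition \ref{Prop.Vfcn} with $a=a_c=0$), while $\pi_0$ arises as the limit $b_1\uparrow\infty$, providing a useful cross-check.

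The main obstacle is rigour rather than computation. I must justify (i) the generator equation, i.e.\ that the Poisson liquidation contributes the term $\gamma(x-V(x))$, cleanest via It\^o/Dynkin applied to $e^{-\delta t}V(X^\pi(t))$ together with the compensated jump at the periodic times, stopped at ruin or at the first barrier crossing; (ii) that value-matching (continuity), and not smooth fit, is the correct condition at the fixed barriers $b_1,b_2$, since these are value functions of \emph{prescribed} strategies and $V$ need not be $\mathscr{C}^1$ there; and (iii) the a priori linear bound $V(x;\pi)=O(x)$ that excludes $e^{r_1x}$ on the unbounded regions, which follows by bounding the liquidated surplus by $x+\mu T_1+\sigma W(T_1)$ under discounting. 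Once these are in place, uniqueness of the ODE solution under the stated boundary data makes the identification of the listed formulas routine.
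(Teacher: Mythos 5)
Your proposal is correct and follows essentially the same route as the paper, which omits the proof with the remark that the formulas are ``obtained easily by solving PDE with the continuity of the value functions on the boundaries being the boundary conditions'': you derive the same inhomogeneous ODE $\frac{\sigma^2}{2}V''+\mu V'-(\gamma+\delta)V+\gamma x=0$ on each free region, use $V(0)=0$ and value matching at $b_1,b_2$, and read off $A$ and $B$. Your explicit additions --- the linear-growth bound that eliminates the $e^{r_1x}$ mode on the unbounded regions, and the observation that value matching (not smooth fit) is the right condition for a \emph{prescribed} strategy --- are exactly the ingredients the paper's one-line justification implicitly relies on, so no gap remains.
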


We omit the proof for Lemma \ref{Lemma.ValueFunction} as it can be obtained easily by solving PDE with the continuity of the value functions on the boundaries being the boundary conditions.

\begin{remark}\label{Remark.mu.neg.obs}
	The following results will be repeatedly used in what follows, $$V(x;\pi_{b_1,b_2})=V(x;\pi_0)+Ag(x),\quad \text{and} \quad
	V''(x;\pi_{b_1,b_2})=Ag''(x)-\frac{\gamma\mu}{(\gamma+\delta)^2}s_1^2e^{s_1x},\quad x\leq b_1,$$
	where $b_2$ can possibly be infinity. Therefore, we have $V''(x)>0$ if $A\geq 0.$
\end{remark}

The next lemma is also an argument which will be used over time.
\begin{lemma}\label{Lemma.A.increasing}
	For a liquidation $(b,b_2)$ strategy, with $0<b<b_2\leq \infty$, we have
	$$\parD{b}A(b)>0\text{ if and only if }V'(b-;\pi_{b,b_2})<\beta.$$
	We can replace the inequalities by equalities simultaneously.
\end{lemma}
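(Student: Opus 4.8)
The plan is to reduce the claim to a single clean identity relating $\parD{b}A(b)$ and $V'(b-;\pi_{b,b_2})$, obtained by direct differentiation of the explicit formula \eqref{ADef}. Write $A(b)=N(b)/g(b)$ with numerator
\begin{equation}
N(b):=\Big(\beta-\tfrac{\gamma}{\gamma+\delta}\Big)b-\chi-\tfrac{\gamma\mu}{(\gamma+\delta)^2}\big(1-e^{s_1 b}\big),
\end{equation}
and recall from Lemma \ref{Lemma.ValueFunction} that on $[0,b)$ we have $V'(x;\pi_{b,b_2})=A g'(x)-\tfrac{\gamma\mu}{(\gamma+\delta)^2}s_1 e^{s_1 x}+\tfrac{\gamma}{\gamma+\delta}$, so in particular
\begin{equation}\label{E.Vprimeb}
V'(b-;\pi_{b,b_2})=A\,g'(b)-\tfrac{\gamma\mu}{(\gamma+\delta)^2}s_1 e^{s_1 b}+\tfrac{\gamma}{\gamma+\delta}.
\end{equation}
First I would apply the quotient rule to $A=N/g$, giving $\parD{b}A(b)=\big(N'(b)g(b)-N(b)g'(b)\big)/g(b)^2$.

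The key simplification is to use $N(b)=A(b)g(b)$ to factor a $g(b)$ out of the numerator, which yields $\parD{b}A(b)=\big(N'(b)-A\,g'(b)\big)/g(b)$. Differentiating $N$ gives $N'(b)=\big(\beta-\tfrac{\gamma}{\gamma+\delta}\big)+\tfrac{\gamma\mu}{(\gamma+\delta)^2}s_1 e^{s_1 b}$; substituting the value of $A\,g'(b)$ read off from \eqref{E.Vprimeb} then makes the exponential term $\tfrac{\gamma\mu}{(\gamma+\delta)^2}s_1 e^{s_1 b}$ and the constant $\tfrac{\gamma}{\gamma+\delta}$ cancel in pairs, leaving the identity
\begin{equation}\label{E.Aprime}
\parD{b}A(b)=\frac{\beta-V'(b-;\pi_{b,b_2})}{g(b)}.
\end{equation}
This is the heart of the argument; the only care needed is bookkeeping of signs in the cancellation.

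To finish, I would note that $g(x)=e^{r_1 x}-e^{s_1 x}>0$ for $x>0$ (since $r_1>0>s_1$), so the denominator in \eqref{E.Aprime} is strictly positive at $b>0$. Hence the sign of $\parD{b}A(b)$ coincides with the sign of $\beta-V'(b-;\pi_{b,b_2})$, which gives the stated equivalence $\parD{b}A(b)>0 \iff V'(b-;\pi_{b,b_2})<\beta$; and since \eqref{E.Aprime} is an exact identity, replacing both inequalities by equalities is immediate ($\parD{b}A(b)=0\iff V'(b-;\pi_{b,b_2})=\beta$). I do not anticipate a genuine obstacle here: the whole lemma is a consequence of the explicit formula \eqref{ADef}, and the only subtlety is recognising that rewriting $N(b)=A(b)g(b)$ is precisely what produces the cancellation of the $e^{s_1 b}$ terms against the derivative of the value function.
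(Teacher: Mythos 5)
Your proposal is correct and matches the paper's proof, which is exactly the ``direct computation using \eqref{ADef}'' that the authors leave to the reader: writing $A=N/g$, using $N(b)=A(b)g(b)$ to reduce the quotient rule to $\parD{b}A(b)=\big(N'(b)-A\,g'(b)\big)/g(b)$, and substituting $A\,g'(b)$ from the expression for $V'(b-;\pi_{b,b_2})$ does yield the identity $\parD{b}A(b)=\big(\beta-V'(b-;\pi_{b,b_2})\big)/g(b)$ with $g(b)>0$, from which both the inequality and equality versions of the claim follow immediately.
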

\begin{proof}
	This can be derived through direct computation using (\ref{ADef}).
\end{proof}

The next 2 lemmas establish the existence of the candidate strategy described in Theorem \ref{Thm.mu.neg} (for every possible case).

\begin{lemma}\label{Lemma.b1b2.Exist}
	If $V'(0;\pi_0)<\beta<\frac{\gamma}{\gamma+\delta}$ and $V(a_\beta;\pi_0)<\beta a_\beta-\chi$, then there are $(b_1,b_2)$ with $0<c_\beta<b_1<a_\beta<b_2<\infty$ such that $V'(b_1-;\pi_{b_1,b_2})=V'(b_1+;\pi_{b_1,b_2})=\beta=V'(b_2-;\pi_{b_1,b_2})=V'(b_2+;\pi_{b_1,b_2}).$
\end{lemma}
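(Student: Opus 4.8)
The plan is to exploit a decoupling between the two barriers. In the value function of Lemma~\ref{Lemma.ValueFunction} the coefficient $A$ of the left branch is pinned down by continuity at $b_1$ alone (this is exactly $A(b_1)$ from \eqref{ADef}), while the coefficient $B$ of the right branch is pinned down by continuity at $b_2$ alone. Hence the two smooth-fit requirements $V'(b_1-)=\beta$ and $V'(b_2+)=\beta$ each reduce to a scalar equation in a single barrier, and I would establish existence for each independently by the intermediate value theorem. Note that the companion identities $V'(b_1+)=\beta$ and $V'(b_2-)=\beta$ are automatic, since the middle branch $\beta x-\chi$ has slope $\beta$.

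For the lower barrier I would work directly with $A(b)$. Using the defining relation $V(c_\beta;\pi_0)=\beta c_\beta-\chi$, continuity forces $A(c_\beta)=0$, whereas the hypothesis $V(a_\beta;\pi_0)<\beta a_\beta-\chi$ gives $A(a_\beta)>0$ (recall $g>0$ on $(0,\infty)$). By Remark~\ref{Remark.mu.neg.obs} one has $V(x;\pi_{b,b_2})=V(x;\pi_0)+A(b)g(x)$ for $x\le b$, so the one-sided derivative is $V'(b-)=V'(b;\pi_0)+A(b)g'(b)$. At $b=c_\beta$ this equals $V'(c_\beta;\pi_0)<\beta$ (as $A(c_\beta)=0$ and $c_\beta<a_\beta$ with $V'(\cdot;\pi_0)$ increasing to $\frac{\gamma}{\gamma+\delta}$), while at $b=a_\beta$ it equals $V'(a_\beta;\pi_0)+A(a_\beta)g'(a_\beta)>\beta$ (as $V'(a_\beta;\pi_0)=\beta$ and $A(a_\beta)g'(a_\beta)>0$). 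Since $b\mapsto V'(b-)$ is continuous, the intermediate value theorem yields $b_1\in(c_\beta,a_\beta)$ with $V'(b_1-)=\beta$; equivalently, by Lemma~\ref{Lemma.A.increasing}, $A'$ switches from positive to negative on $(c_\beta,a_\beta)$, so $b_1$ is the interior maximiser of $A$.

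For the upper barrier I would eliminate $B$ via continuity at $b_2$ and reduce $V'(b_2+)=\beta$ to a root of $\Phi(x):=V'(x;\pi_0)+s_1\big(\beta x-\chi-V(x;\pi_0)\big)-\beta$ on $(a_\beta,\infty)$. At $x=a_\beta$ the identity $V'(a_\beta;\pi_0)=\beta$ and the hypothesis give $\Phi(a_\beta)=s_1\big(\beta a_\beta-\chi-V(a_\beta;\pi_0)\big)<0$, since $s_1<0$. As $x\to\infty$, $e^{s_1 x}\to0$ so $V(x;\pi_0)\sim\frac{\gamma}{\gamma+\delta}(x+\frac{\mu}{\gamma+\delta})$ and $\beta x-\chi-V(x;\pi_0)\to-\infty$ (because $\beta<\frac{\gamma}{\gamma+\delta}$); multiplying by $s_1<0$ shows $\Phi(x)\to+\infty$, while $V'(x;\pi_0)\to\frac{\gamma}{\gamma+\delta}$ stays bounded. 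Thus $\Phi$ changes sign on $(a_\beta,\infty)$ and the intermediate value theorem delivers a finite $b_2>a_\beta$ with $V'(b_2+)=\beta$. Combining the two yields $0<c_\beta<b_1<a_\beta<b_2<\infty$ together with all four smooth-fit identities, as required.

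The main obstacle I anticipate is not the existence mechanism, which is just two sign-change arguments, but the bookkeeping that legitimises them: checking that continuity genuinely determines $A$ and $B$ so the problems decouple, and getting every sign correct in the endpoint evaluations, where the facts $\beta<\frac{\gamma}{\gamma+\delta}$, $s_1<0$, $g,g'>0$ and the two defining properties of $c_\beta$ and $a_\beta$ all intervene. The most delicate point is the asymptotics of $\Phi$ (equivalently, that the right branch eventually lies above the liquidation line), and it is precisely here that the non-profitability-adjusted inequality $\beta<\frac{\gamma}{\gamma+\delta}$ is indispensable.
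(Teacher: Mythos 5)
Your proposal is correct and follows essentially the same route as the paper: the barriers decouple because $A$ is fixed by continuity at $b_1$ and $B$ by continuity at $b_2$, the lower barrier is obtained from the sign change of $V'(b-;\pi_{b,b_2})-\beta$ between $c_{\beta,\chi}$ (where $A=0$) and $a_\beta$, and the upper barrier from the sign change on $(a_\beta,\infty)$ of the continuity-reduced equation (which is in fact affine in $b_2$ with positive slope, so your asymptotic argument and the paper's linearity argument coincide). The one genuine improvement is your observation that $A(b)=\bigl(\beta b-\chi-V(b;\pi_0)\bigr)/g(b)$, so the hypothesis directly forces $A(a_\beta)>0$; this makes the paper's separate treatment of the case $A(a_\beta)\le 0$ (via the monotonicity of $A$ and Lemma \ref{Lemma.A.increasing}) vacuous and streamlines the endpoint evaluation at $a_\beta$.
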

\begin{proof}
The proof is based on continuity arguments. For example, for $b_1$, if one denotes the objective function $b_1\mapsto \widetilde{O}(b_1):=V'(b_1+)-\beta$, it suffices to show that $\widetilde{O}(c_\beta)\widetilde{O}(a_\beta)<0$. Details are provided in Appendix \ref{A.Lemma8.6}. 
\end{proof}

\begin{lemma}\label{Lemma.b.Exist}
	If ${\gamma}/{(\gamma+\delta)}<\beta\leq 1$, then there is a $b>{\chi}/{\beta}$ such that $V'(b-;\pi_{b,\infty})=\beta$.
\end{lemma}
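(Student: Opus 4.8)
The plan is to reduce the statement to the existence of an \emph{interior} maximiser of the map $b\mapsto A(b)$ defined in \eqref{ADef}, and then to read off both the smooth-fit condition and the constraint $b>\chi/\beta$ from the properties of that maximiser. The key device is Lemma \ref{Lemma.A.increasing}, which (taking $b_2=\infty$) states that $\parD{b}A(b)=0$ if and only if $V'(b-;\pi_{b,\infty})=\beta$. Hence it suffices to exhibit a point $b^*>\chi/\beta$ at which $A'$ vanishes, and the natural candidate is a global maximiser of $A$ on $(0,\infty)$.

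First I would record the endpoint behaviour of $A(b)=N(b)/g(b)$, where $N(b)=(\beta-\frac{\gamma}{\gamma+\delta})b-\chi-\frac{\gamma\mu}{(\gamma+\delta)^2}(1-e^{s_1 b})$ and $g(b)=e^{r_1 b}-e^{s_1 b}$. By hypothesis $\beta-\frac{\gamma}{\gamma+\delta}>0$, and since $\mu<0$ the constant $-\frac{\gamma\mu}{(\gamma+\delta)^2}$ is positive, so $N(b)=(\beta-\frac{\gamma}{\gamma+\delta})b-\chi+(-\tfrac{\gamma\mu}{(\gamma+\delta)^2})(1-e^{s_1 b})$. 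As $b\downarrow 0$ we have $N(b)\to-\chi<0$ while $g(b)\downarrow 0^+$, so $A(b)\to-\infty$; as $b\to\infty$, $N(b)$ grows linearly whereas $g(b)\sim e^{r_1 b}$ grows exponentially, so $A(b)\to 0^+$ and in particular $A(b)>0$ for all large $b$. Consequently $\sup_{(0,\infty)}A>0=\lim_{b\to\infty}A(b)$, and because $A\to-\infty$ at the left end and $\to 0$ at the right end, the supremum is attained on a compact subinterval. Since $(0,\infty)$ is open, the maximiser $b^*$ is interior, and smoothness of $A$ at $b^*$ (where $g(b^*)>0$) gives $\parD{b}A(b^*)=0$. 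By Lemma \ref{Lemma.A.increasing} this is exactly $V'(b^*-;\pi_{b^*,\infty})=\beta$.

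It then remains to verify $b^*>\chi/\beta$. Here I would use that $A(b)$ is defined precisely so that the value function is continuous at the barrier; concretely, the decomposition of Remark \ref{Remark.mu.neg.obs} together with \eqref{ADef} gives $A(b)g(b)+V(b;\pi_0)=\beta b-\chi$ for every $b>0$. Evaluating at $b^*$ and using $A(b^*)=\sup A>0$, $g(b^*)>0$, and the nonnegativity $V(\cdot;\pi_0)\ge 0$ of the expected present value of dividends under the periodic $0$ strategy, I obtain $\beta b^*-\chi=A(b^*)g(b^*)+V(b^*;\pi_0)>0$, i.e.\ $b^*>\chi/\beta$, which completes the argument.

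The main obstacle is the middle step: justifying that the maximiser of $A$ is attained in the interior of $(0,\infty)$ rather than only approached at a boundary. This rests on controlling the two endpoint asymptotics carefully---the blow-up $A(b)\to-\infty$ as $b\downarrow 0$ and, crucially, the \emph{sign} of the limit $A(b)\to 0^+$, which guarantees that the positive part of $A$ strictly exceeds its limiting value at infinity. Once interiority is secured, the smooth-fit identity and the inequality $b^*>\chi/\beta$ follow quickly from Lemma \ref{Lemma.A.increasing} and the continuity-defining identity for $A$, so no further delicate estimates are needed.
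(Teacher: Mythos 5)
Your proof is correct, but it takes a genuinely different route from the paper's. The paper argues by the intermediate value theorem applied directly to $b\mapsto V'(b-;\pi_{b,\infty})$ on $[\chi/\beta,\infty)$: it first establishes the auxiliary inequality $\tfrac{-\mu}{\gamma+\delta}(1-e^{s_1\kappa})-\kappa<0$ for $\kappa>0$, uses it to show $A(\chi/\beta)<0$ and hence $V'(\chi/\beta-;\pi_{\chi/\beta,\infty})<\beta$, and then notes $A(b)g'(b)\to+\infty$ so that $V'(b-)$ eventually exceeds $\beta$; the constraint $b>\chi/\beta$ is automatic because the crossing is located to the right of $\chi/\beta$ by construction. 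You instead maximise $A$ itself: the endpoint asymptotics $A(0+)=-\infty$ (valid since $\chi>0$, the standing assumption of Section \ref{S.mu.neg}) and $A(b)\to 0$ with $A>0$ for large $b$ force an interior global maximiser $b^*$, the first-order condition $\parD{b}A(b^*)=0$ translates into smooth fit via Lemma \ref{Lemma.A.increasing}, and $b^*>\chi/\beta$ is recovered a posteriori from the identity $A(b^*)g(b^*)+V(b^*;\pi_0)=\beta b^*-\chi$ together with $A(b^*)>0$ and $V(\cdot;\pi_0)\geq 0$. Your route buys two things: it bypasses the auxiliary exponential inequality and the evaluation at $\chi/\beta$ entirely, and it hands you $A(b^*)=\sup A>0$ for free, which is precisely the positivity needed in the subsequent lemma to conclude $V''>0$ on $[0,b^*]$ (the paper obtains this instead by taking the smallest crossing point and invoking the monotonicity of $A$ up to it). What the paper's IVT argument buys is slightly more explicit information about where the barrier sits relative to $\chi/\beta$ from the outset. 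Both arguments are elementary and complete.
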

\begin{proof}
The proof is similar to that of Lemma \ref{Lemma.b1b2.Exist}; see Appendix \ref{A.Lemma8.7}
\end{proof}

Lemmas \ref{Lemma.b1b2.Exist}-\ref{Lemma.b.Exist} shows that our candidate strategy (as described in Theorem \ref{Thm.mu.neg}) exists. 
Our last lemma below shows that the derivative of its value functions is increasing, which essentially completes the proof of Theorem \ref{Thm.mu.neg}.

\begin{lemma}
	In each case considered in Theorem \ref{Thm.mu.neg}, the derivative of the value function is increasing for our candidate strategy.
\end{lemma}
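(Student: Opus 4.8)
The plan is to prove the (weak) monotonicity of $V'$ by showing that $V''\ge 0$ on each interval where the value function is smooth, and then patching these pieces together using the continuity of $V'$ guaranteed by the derivative conditions imposed on the candidate strategy (Lemmas \ref{Lemma.b1b2.Exist}--\ref{Lemma.b.Exist}). Since $V\in\mathscr{C}^1$ with $V'$ continuous at the free boundaries (all the relevant one-sided derivatives there equal $\beta$, resp.\ match across the kink), non-negativity of $V''$ on each open piece immediately yields that $V'$ is non-decreasing on all of $[0,\infty)$. The first step is to record the sign facts I will use throughout: since $\mu<0$, the roots of $\psi(\theta)=\gamma+\delta$ satisfy $r_1>-s_1>0$ (their sum is $r_1+s_1=-2\mu/\sigma^2>0$), so $g(x)=e^{r_1x}-e^{s_1x}$ obeys $g>0$, $g'>0$ and $g''>0$ on $(0,\infty)$; moreover the surplus term contributes $-\frac{\gamma\mu}{(\gamma+\delta)^2}s_1^2e^{s_1x}>0$. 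Consequently, by Remark \ref{Remark.mu.neg.obs}, on any $\pi_0$- or $Ag$-type branch one has $V''>0$ as soon as the constant $A\ge 0$.

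For the periodic $0$ strategy this is immediate, as $V''(x;\pi_0)=-\frac{\gamma\mu}{(\gamma+\delta)^2}s_1^2e^{s_1x}>0$. For a liquidation $(b,\infty)$ strategy (the cases with $\beta\ge\gamma/(\gamma+\delta)$), I would read off $A$ from the derivative condition $V'(b-)=\beta$: differentiating the $[0,b)$-branch of \eqref{value.Liq.b1b2} gives
\[
Ag'(b)=\Big(\beta-\tfrac{\gamma}{\gamma+\delta}\Big)+\tfrac{\gamma\mu s_1}{(\gamma+\delta)^2}e^{s_1b},
\]
whose right-hand side is strictly positive because $\beta-\gamma/(\gamma+\delta)\ge 0$ and $\mu s_1>0$; since $g'(b)>0$ this forces $A>0$, hence $V''>0$ on $[0,b)$, while $V'\equiv\beta$ on $[b,\infty)$. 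Continuity of $V'$ at $b$ then gives the monotonicity.

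The main obstacle is the liquidation $(b_1,b_2)$ strategy, where $\beta<\gamma/(\gamma+\delta)$ and the previous sign trick fails (the bracket $\beta-\gamma/(\gamma+\delta)$ is now negative). Here the key is the identity, obtained by comparing the numerator of \eqref{ADef} with $V(\cdot;\pi_0)$,
\[
A(b)=\frac{\beta b-\chi-V(b;\pi_0)}{g(b)},
\]
so that $A(b_1)\ge 0$ is \emph{equivalent} to $V(b_1;\pi_0)\le\beta b_1-\chi$. To establish this I would use that $h(x):=V(x;\pi_0)-(\beta x-\chi)$ is convex (as $V(\cdot;\pi_0)$ is), that $h(c_{\beta,\chi})=0$ by definition of $c_{\beta,\chi}$, and that $h'(a_\beta)=V'(a_\beta;\pi_0)-\beta=0$ makes $a_\beta$ the unique minimiser of $h$; hence $h<0$ on $(c_{\beta,\chi},a_\beta)$. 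Because Lemma \ref{Lemma.b1b2.Exist} places $b_1$ in $(c_{\beta,\chi},a_\beta)$, this gives $A(b_1)>0$ and thus $V''>0$ on $[0,b_1)$. On $[b_1,b_2)$ the value function is linear, so $V'\equiv\beta$; and on $[b_2,\infty)$ the condition $V'(b_2+)=\beta$ yields $Bs_1e^{s_1b_2}=\beta-\gamma/(\gamma+\delta)<0$, whence $B>0$ and $V''=Bs_1^2e^{s_1x}>0$. Stitching the three pieces together through the continuity of $V'$ at $b_1$ and $b_2$ completes the argument. The only genuinely delicate point is the convexity-plus-ordering argument securing $A(b_1)\ge 0$; everything else reduces to a direct sign check.
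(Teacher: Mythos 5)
Your proof is correct, and its skeleton coincides with the paper's: reduce everything to the signs of the coefficients $A$ and $B$, use Remark \ref{Remark.mu.neg.obs} to get $V''>0$ on each smooth piece, and glue with the $\mathscr{C}^1$-fit at the free boundaries. Where you genuinely diverge is in how you obtain $A(b_1)>0$. The paper recycles Lemma \ref{Lemma.A.increasing} (monotonicity of $b\mapsto A(b)$ precisely while $V'(b-)<\beta$) together with the convention that $b_1$ is chosen as the \emph{smallest} root of the derivative condition, so that $A$ has increased strictly from its zero at $c_{\beta,\chi}$. You instead rewrite \eqref{ADef} as $A(b)g(b)=\beta b-\chi-V(b;\pi_0)$ and deduce $A>0$ on all of $(c_{\beta,\chi},a_\beta)$ from the strict convexity of $V(\cdot;\pi_0)$, the root $h(c_{\beta,\chi})=0$ and the minimiser $a_\beta$. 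This buys you something: positivity of $A$ at \emph{every} admissible $b_1$ located by Lemma \ref{Lemma.b1b2.Exist}, not just the smallest one, and no appeal to Lemma \ref{Lemma.A.increasing}. Likewise, for the $(b,\infty)$ case your direct evaluation
\begin{equation*}
A\,g'(b)=\Big(\beta-\tfrac{\gamma}{\gamma+\delta}\Big)+\tfrac{\gamma\mu s_1}{(\gamma+\delta)^2}e^{s_1 b}>0
\end{equation*}
is a cleaner sign check than the paper's terse ``same argument with $b$ smallest''. The treatment of the $[b_2,\infty)$ branch ($Bs_1e^{s_1b_2}=\beta-\gamma/(\gamma+\delta)<0$ forces $B>0$) and of $\pi_0$ is identical to the paper's. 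All the preliminary sign facts you record ($r_1>-s_1>0$ when $\mu<0$, hence $g,g',g''>0$) check out, so the argument is complete.
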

\begin{proof}
	We only need to prove the case when the liquidation $(b_1,b_2)$ strategy (with $b_2\leq \infty$) is optimal.
	
	From the proof of Lemma \ref{Lemma.b1b2.Exist}, if we choose $b_1$ to be the smallest one, then we have that $A(b_1)>0$, as $A$ is increasing from $0$ at $c_\beta$ to $b_1$. This implies that $V''(x;\pi_{b_1,b_2})>0$ and hence $V'(x;\pi_{b_1b_2})$ is an increasing function on $[0,b_2]$. Now, from $$V'(b_2-;\pi_{b_1,b_2})=Bs_1e^{s_1b}+\frac{\gamma}{\gamma+\delta}=\beta,$$
	we can deduce that $B>0$ and consequently that $V'(x;\pi_{b_1,b_2})$ is also increasing on $[b_2,\infty)$. This completes the proof for the case when  $V'(0;\pi_0)<\beta<{\gamma}/{(\gamma+\delta)}$ and $V(a_\beta;\pi_0)<\beta a_\beta-\chi$.
	
	From the proof of Lemma \ref{Lemma.b.Exist}, if we choose $b$ to be the smallest one, using the same argument, we have that $A(b)>0$, which shows that $V'(x;\pi_{b,\infty})$ is increasing on $[0,b]$ and hence on $[0,\infty)$.
\end{proof}

Note the value function of the candidate strategy in each case is continuously differentiable and twice differentiable except at the boundary points. Therefore the first 3 conditions of Lemma \ref{Verification.lemma} are automatically satisfied.
What is yet to be shown are the last 2 conditions of Lemma \ref{Verification.lemma}, which can be done in a similar way to what has been done in Sections \ref{S.Verification} and \ref{S.VD}. Hence, Theorem \ref{Thm.mu.neg} holds. %We will leave it for the reader.

\section{On the convergence of strategies across solution thresholds}\label{S.Convergence}

In this section, we discuss the convergence of strategies when $\mu\uparrow 0$ and when $\beta\downarrow{\gamma}/{(\gamma+\delta)}$. These correspond to major thresholds in Table \vref{S_map}. By convergence in strategy, we mean point-wise converge of the value function at each $x\geq 0$. Since barriers determine the value function, convergence in the barriers implies convergence in strategy.

\subsection{When $\mu\uparrow 0$}\label{S.VaryingMu}

{The objective of this section is to investigate how the strategies and their parameters ``connect'' when $\mu$ goes from negative to positive.}

{We start by some numerical exploration.} Our baseline parameters are $(\sigma,\chi)=(0.3,0.15)$ (dollar parameters), and $(\gamma,\delta)=(1,0.15)$ (time parameters). Note that the crucial constant $\gamma/(\gamma+\delta)\approx 0.87$, and the form of optimal strategies (rather, their connection here) will potentially differ on either side of that constant. Hence, we will illustrate the cases $\beta=0.7$ and $\beta=0.9$ separately.

\begin{figure}[htb]
	\centering
	\begin{minipage}{0.4\textwidth}
		\centering
		\includegraphics[width=1\textwidth]{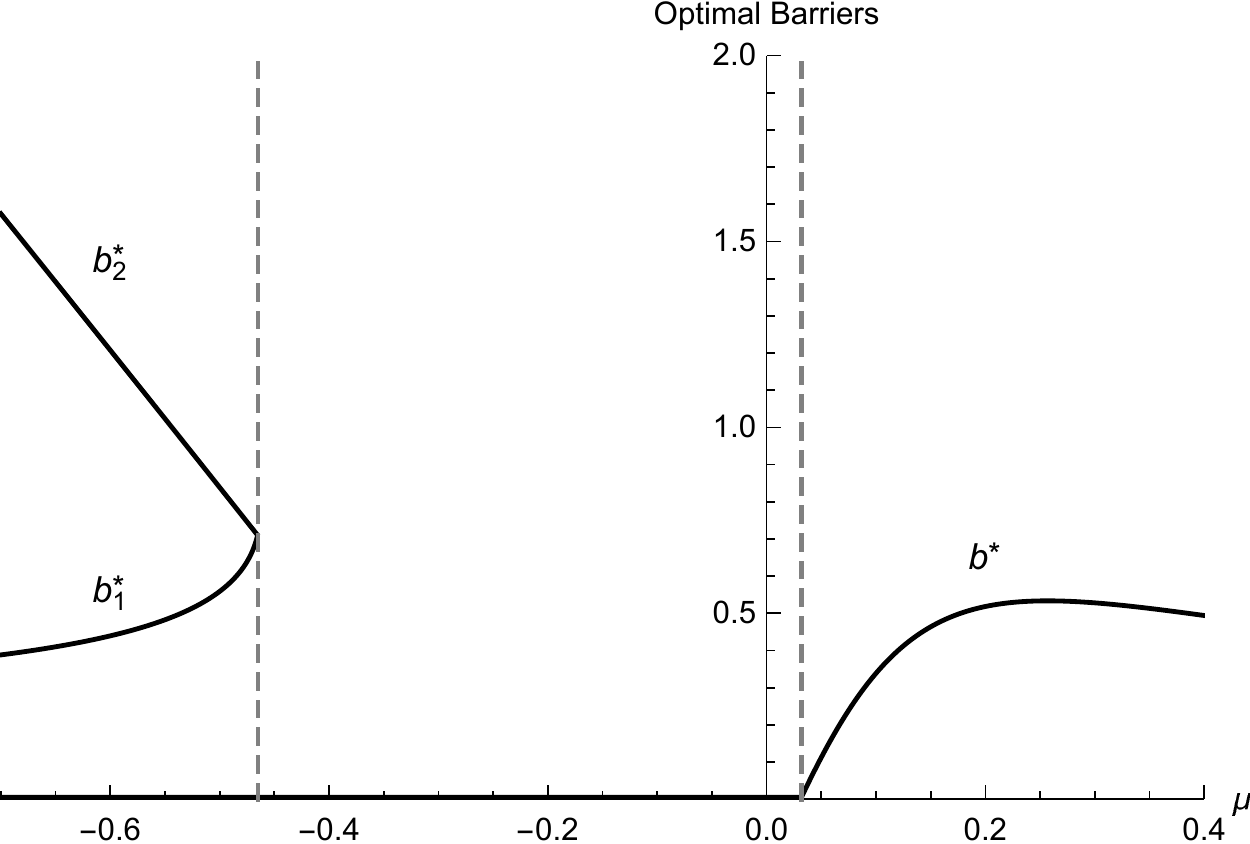}
		\subcaption[first caption.]{$\beta\corr{~<\gamma/(\gamma+\delta)}$}\label{fig.a}
	\end{minipage}%
	\hspace{0.08\textwidth}
	\begin{minipage}{0.4\textwidth}
		\centering
		\includegraphics[width=1\textwidth]{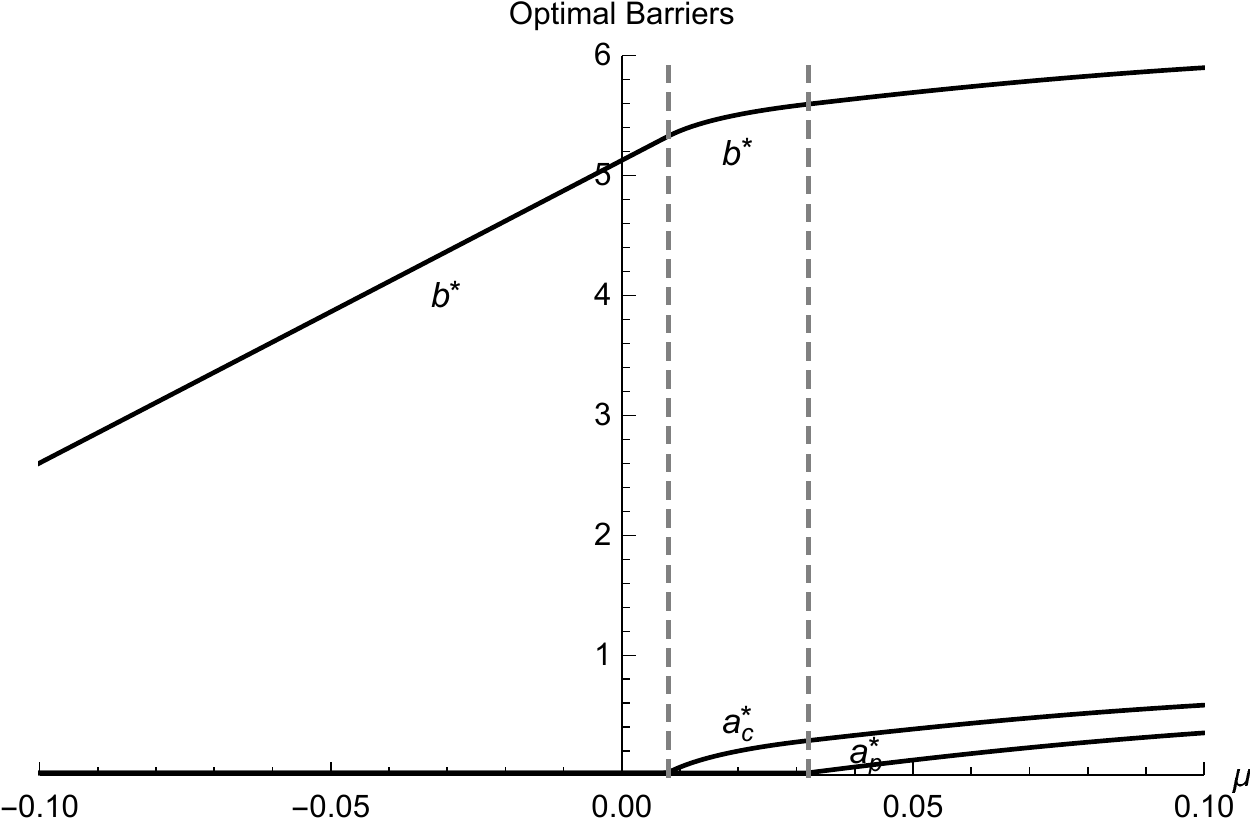}
		\subcaption[second caption.]{$\beta\corr{~>\gamma/(\gamma+\delta)}$}\label{fig.b}
	\end{minipage}%
	\caption{Continuity of optimal barriers at $\mu=0$\corr{. Left:$\beta=0.7$. Right: $\beta=0.9$.}} \label{fig.TB}
\end{figure}

From Figure \ref{fig.a} (high proportional transaction cost $1-\beta$), we can see that around the neighbourhood of $0$, the periodic $0$ strategy is optimal. On the other hand, when the proportional cost is low, we can see from Figure \ref{fig.b} that a liquidation $(b,\infty)$, or equivalently a hybrid $(0,0,b)$ strategy is optimal around the neighbourhood of $0$. This observation is true in general thanks to Lemma \ref{L_conv} below.

Further analysis of Figure \ref{fig.a} is interesting. The vertical grey dashed line in Figure \ref{fig.a} corresponds to the threshold between the second-last and last columns of Table \ref{S_map} (in the second row), so that obviously $\beta_0<0.7$ here. The right-hand side (where $b_0$ becomes strictly positive) corresponds to the first column.

%Some of those behaviours are summarised (and subsequently proved) in Lemma \ref{L_conv} below.

\begin{lemma}\label{L_conv}
	When $\beta>{\gamma}/{(\gamma+\delta)}$, a hybrid $(0,0,b)$ strategy is optimal for $\mu=0$. Moreover, when $\mu\uparrow 0$, the lower barrier of the optimal $(b_1,\infty)$ strategy, denoted as $b_1=b_1(\mu)$ converges to $b$.
	When $\beta\leq {\gamma}/{(\gamma+\delta)}$, the periodic barrier strategy with barrier level $0$, $\pi_0$, is optimal on the neighbourhood of $\mu=0$.
\end{lemma}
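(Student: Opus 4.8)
The plan is to dispatch the three assertions separately, exploiting the symmetry of the drift-free diffusion at $\mu=0$ to collapse the hybrid barriers, and then a continuity argument for the convergence of $b_1(\mu)$. Take first $\beta>\gamma/(\gamma+\delta)$ with $\mu=0$. At $\mu=0$ the roots are symmetric, $r_0=-s_0$ and $r_1=-s_1$, so $\Wq''(x)\propto r_0^2\big(e^{r_0x}-e^{-r_0x}\big)=2r_0^2\sinh(r_0x)$ vanishes only at $x=0$; hence $\bar a=0$ and the admissible range $a\in[0,\bar a]$ forces $a_p=0$ in \emph{every} nice hybrid strategy. By Theorem~\ref{Thm} and Corollary~\ref{Corrolary.unique.barriers} the unique nice hybrid $(a_p^*,a_c^*,b^*)$ is optimal, so it remains only to show $a_c^*=0$. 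For this I would use the exact identity of Section~\ref{Remark.suff.cons} (valid because our candidate satisfies $V'(b-)=\beta$),
\begin{equation*}
V'(a_p)-\frac{-s_1\frac{\gamma}{\gamma+\delta}}{-s_1+Q(a_p)(r_1+s_1)}=(r_1-s_1)\,I(y+l,q),\qquad q=Q(a_p),
\end{equation*}
and specialise it to $\mu=0$, $a_p=0$: since $r_1+s_1=0$ one has $Q(0)=1$, the subtracted term equals $\gamma/(\gamma+\delta)$, and $I(\cdot,q)$ loses its $q$-dependence, collapsing to $I(x,q)=\big(\beta-\gamma/(\gamma+\delta)\big)/g'(x)$ with $g'(x)=2r_1\cosh(r_1x)$.

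Substituting $r_1-s_1=2r_1$ then yields the closed form
\begin{equation*}
V'(0)=\frac{\gamma}{\gamma+\delta}+\frac{\beta-\frac{\gamma}{\gamma+\delta}}{\cosh\!\big(r_1(y+l)\big)}.
\end{equation*}
Because a fixed cost is present, $y=b-a_c\ge\chi/\beta>0$, so $\cosh(r_1(y+l))>1$ and therefore $\gamma/(\gamma+\delta)<V'(0)<\beta$ strictly. Were $a_c^*>0$, then (with $a_p^*=0$) case~2 of Lemma~\ref{Lemma.VD.opt}, namely \eqref{eqt.VD.opt2}, would force $V'(0)\in(\beta,1]$, contradicting $V'(0)<\beta$. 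Hence $a_c^*=0$, the candidate is a hybrid $(0,0,b^*)$ strategy (equivalently a liquidation $(b^*,\infty)$ strategy), and it is optimal by Theorem~\ref{Thm}.

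For the convergence, fix $\beta>\gamma/(\gamma+\delta)$ and let $\mu<0$. By Theorem~\ref{Thm.mu.neg} (Cases~1(b)/2(c)) the optimal barrier $b_1(\mu)$ is the smooth-fit root of $\Phi(b,\mu):=V'(b-;\pi_{b,\infty})-\beta=0$. Using the explicit value function of Lemma~\ref{Lemma.ValueFunction}, $\Phi$ is jointly continuous (indeed smooth) in $(b,\mu)$ via the continuous dependence of $r_1,s_1$ and of $A(b)$ on $\mu$. The first case gives $\Phi(b^*,0)=0$. Differentiating $\Phi(\cdot,0)$ in $b$ and invoking Lemma~\ref{Lemma.A.increasing} (so that $\partial_bA=0$ at a smooth-fit barrier) leaves $\partial_b\Phi(b^*,0)=V''(b^*-;\pi_{b^*,\infty})$, which is strictly positive since $A(b^*)>0$ (as in the final lemma of Section~\ref{S.mu.neg}). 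The implicit function theorem then supplies a unique continuous branch of roots through $(b^*,0)$, giving $b_1(\mu)\to b^*$ as $\mu\uparrow0$.

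Finally, for $\beta\le\gamma/(\gamma+\delta)$, the periodic $0$ value function of Lemma~\ref{Lemma.ValueFunction} collapses at $\mu=0$ to $V(x;\pi_0)=\frac{\gamma}{\gamma+\delta}x$, so $V'(\cdot;\pi_0)\equiv\gamma/(\gamma+\delta)\ge\beta$. I would verify the five conditions of Lemma~\ref{Verification.lemma} directly: (1)--(3) are immediate; \eqref{eqt.HJB1} holds with equality because $(\mathscr A-\delta)V+\gamma\sup_{\xi}(\xi+V(x-\xi)-V(x))=-\delta\tfrac{\gamma}{\gamma+\delta}x+\gamma\tfrac{\delta}{\gamma+\delta}x=0$; and \eqref{eqt.HJB2} is attained at $\xi=0$ since $V'\ge\beta$ makes the bracket $\le-\chi\le0$ for $\xi>0$. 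Thus $\pi_0$ is optimal at $\mu=0$ (so $b_0^*=0$). For $\mu<0$ near $0$ with $\chi>0$ we have $\beta(-\mu)/(\gamma+\delta)<\chi$, placing us in Case~1 of Theorem~\ref{Thm.mu.neg}, whose part~(a) again yields $\pi_0$; hence $\pi_0$ is optimal on the whole left-neighbourhood $[-\chi(\gamma+\delta)/\beta,0]$, with $b_0^*(\mu)\downarrow0$ from the right. I expect the genuine obstacle to be the proof that $a_c^*=0$ in the first case: the crude $\varepsilon_k$-bounds of Section~\ref{Remark.suff.cons} require $\beta\ge\gamma/(\gamma+\delta)+\varepsilon_k$ and so do not cover all $\beta>\gamma/(\gamma+\delta)$, which is why the exact evaluation of $I(\cdot,q)$ at $\mu=0$ and the resulting strict inequality $V'(0)<\beta$ are indispensable; everything else is continuity bookkeeping.
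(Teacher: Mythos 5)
Your proposal is correct and lands on the same conclusions as the paper's own proof (Appendix \ref{A.Lemma9.1}), but by a partly different route, and it is worth recording the differences. For the central step --- optimality of a hybrid $(0,0,b)$ strategy at $\mu=0$ --- the paper writes down the value function $V(x)=Ag(x)+\frac{\gamma}{\gamma+\delta}x$ of the candidate $(0,0,b)$ strategy with $V'(b)=\beta$, notes that at $\mu=0$ one has $g''=r_1^2g>0$ hence $g'(0)<g'(b)$ and $A>0$, and concludes $V'(0)\leq\beta$, so the strategy is nice and Theorem \ref{Thm} applies. You instead argue top-down: $\Wq''$ has its unique zero at $0$ when $\mu=0$, which forces $a_p^*=0$, and the identity of Section \ref{Remark.suff.cons} collapses (since $r_1+s_1=0$ kills the $q$-dependence) to $V'(0)=\frac{\gamma}{\gamma+\delta}+\bigl(\beta-\frac{\gamma}{\gamma+\delta}\bigr)/\cosh\bigl(r_1(y+l)\bigr)<\beta$, contradicting case 2 of Lemma \ref{Lemma.VD.opt} unless $a_c^*=0$. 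The underlying computation is identical --- $1/\cosh(r_1 b)=g'(0)/g'(b)<1$ is exactly the paper's convexity of $g$ --- but your version characterises the unique nice strategy rather than verifying a candidate, yields the strict inequality $V'(0)<\beta$, and, as you rightly note, avoids the $\varepsilon_k$-bounds of Section \ref{Remark.suff.cons}, which indeed do not cover all $\beta>\gamma/(\gamma+\delta)$; the cost is reliance on an identity whose derivation the paper does not spell out (it survives at $\mu=0$ by continuity of \eqref{C.formula}). For the convergence $b_1(\mu)\to b$, the paper merely observes that both barriers are smooth-fit roots and that the $\mu=0$ barrier is unique by Corollary \ref{Corrolary.unique.barriers}; your implicit-function-theorem argument with $\partial_b\Phi(b^*,0)=V''(b^*-)>0$ is more explicit and arguably more rigorous, though you should add a sentence explaining why the \emph{smallest} root selected in Lemma \ref{Lemma.b.Exist} lies on the IFT branch (uniqueness of the root at $\mu=0$ plus a compactness argument closes this). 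Finally, for $\beta\leq\gamma/(\gamma+\delta)$ you verify $\pi_0$ at $\mu=0$ directly through Lemma \ref{Verification.lemma} (equality in \eqref{eqt.HJB1}, maximiser $\xi=0$ in \eqref{eqt.HJB2}), where the paper defers to the literature and omits the check; like the paper, you then need $\chi>0$ to place small $|\mu|$ in Case 1 of Theorem \ref{Thm.mu.neg}, and neither argument addresses the right neighbourhood $\mu>0$ beyond asserting $b_0^*(\mu)\downarrow 0$.
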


\begin{proof}%[Proof of convergence]
	For each case, one needs to check the corresponding barrier converges to the barrier at $\mu=0$ when $\mu\uparrow 0$; see details in Appendix \ref{A.Lemma9.1}.
\end{proof}

\subsection{When $\beta\downarrow \gamma/(\gamma+\delta)$}\label{SubS.Convergence.Beta}

The convergence of the barriers when $\beta\downarrow{\gamma}/{(\gamma+\delta)}$ is described in the following proposition.

\begin{proposition}\label{Lemma.convergence.beta}
	Recall the function $Q$ in \eqref{Q.function}. When $\beta\downarrow\frac{\gamma}{\gamma+\delta}$, we have the following:
	
	\begin{enumerate}
		\item If $\frac{-s_1}{r_1}\frac{\gamma}{\gamma+\delta}>1$, $$ a_p=Q^{-1}\Big(\frac{s_1\frac{\delta}{\gamma+\delta}}{r_1+s_1}\Big),\quad a_c=\infty,\quad b=\infty,$$
		where ``$=$'' is in limit sense.
		\item If $\frac{-s_1}{r_1}\frac{\gamma}{\gamma+\delta}\leq 1$, 
		$$ a_p=0,\quad a_c=\infty,\quad b=\infty,$$
		where ``$=$'' is in limit sense.
		\item $Q^{-1}\Big(\frac{s_1\frac{\delta}{\gamma+\delta}}{r_1+s_1}\Big)=b_0^*$ and $\lim_{a_c\rightarrow \infty,b\rightarrow\infty}V(x;\pi_{a_p,a_c,b})=V(x;\pi_{b_0^*})$ for all $x\geq 0$. That is, both the barrier and the value function exhibit continuity behaviour at $\beta=\frac{\gamma}{\gamma+\delta}$.
	\end{enumerate}
	
\end{proposition}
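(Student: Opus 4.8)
The plan is to follow the unique nice hybrid triple, written as $(a_p,a_c,b)$ or equivalently $(a,l,y)$ with $l=a_c-a$ and $y=b-a_c$, as $\beta\downarrow\gamma/(\gamma+\delta)$, using the explicit value function of Proposition~\ref{Prop.Vfcn} together with the derivative conditions \eqref{dev.con} that characterise it (Propositions~\ref{prop.1} and \ref{prop.2}). Set $\alpha:=\beta-\gamma/(\gamma+\delta)\downarrow0$. The skeleton is: (i) show $b\to\infty$, hence $y+l=b-a\to\infty$ because $a\in[0,\bar{a}]$ is bounded; (ii) pass to the limit in the identity of Section~\ref{Remark.suff.cons} to pin down $a_p$ and obtain the dichotomy of parts~1--2; (iii) identify the limit with $b_0^*$; (iv) deduce pointwise convergence of the value functions and $a_c\to\infty$. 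For (i) I would use that an immediate dividend at the trigger $b$ is worthwhile only if its size exceeds the break-even level $\chi/\alpha$ (the comparison around Table~\ref{T_roadmap}); since the candidate is a genuine hybrid ($V'(a_p)=1$ with $\beta>\gamma/(\gamma+\delta)$), this forces $b-a_c\ge\chi/\alpha\to\infty$, whence $b\to\infty$ and $y+l\to\infty$.

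With $y+l\to\infty$ established, step (ii) is clean. The identity from Section~\ref{Remark.suff.cons},
\[
V'(a_p)-\frac{-s_1\frac{\gamma}{\gamma+\delta}}{-s_1+Q(a_p)(r_1+s_1)}=(r_1-s_1)\,I\bigl(y+l,Q(a_p)\bigr),
\]
combined with $V'(a_p)=1$ and the fact (stated below \eqref{I.def}) that $I(\cdot,q)\to0$ at infinity, gives in the limit $1=\tfrac{-s_1\gamma/(\gamma+\delta)}{-s_1+Q(a_p)(r_1+s_1)}$, i.e. $Q(a_p)\to\frac{s_1\frac{\delta}{\gamma+\delta}}{r_1+s_1}$. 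Since $Q$ in \eqref{Q.function} is continuous and strictly decreasing with $Q(0)=1$, this determines $a_p$ in the limit. The target value is $<1$ — so that $a_p=Q^{-1}(\cdot)>0$ — precisely when $\frac{-s_1}{r_1}\frac{\gamma}{\gamma+\delta}>1$, which is part~1; otherwise the limit sits at the boundary $a_p=0$, which is part~2.

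For part~3 I would first check directly that $b_0^*$ solves $Q(b_0^*)=\frac{s_1\frac{\delta}{\gamma+\delta}}{r_1+s_1}$. Using the value-matching and smooth-fit conditions characterising the optimal periodic barrier (Section~\ref{S_betasucks} and \citet{PeYa16}) — namely $C_0=1/f'(b_0^*)$, the absence of the $e^{r_1x}$ term on $x>b_0^*$, and $V'(b_0^*)=1$ — one gets $f(b_0^*)/f'(b_0^*)=1/s_1+\gamma\mu/(\delta(\gamma+\delta))$; substituting into \eqref{Q.function} and using $r_1+s_1=-2\mu/\sigma^2$ and $r_1s_1=-2(\gamma+\delta)/\sigma^2$ collapses the expression to the claimed value, so $Q^{-1}(\cdot)=b_0^*$ and $a_p\to b_0^*$. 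I would then pass to the limit in the coefficient formulas \eqref{C.formula}, \eqref{B.formula}, \eqref{A.formula}: $C=1/f'(a)\to1/f'(b_0^*)$ and $B\to\delta/((\gamma+\delta)s_1)$, whereupon \eqref{A.formula} forces the exact cancellation $A\to\frac{1}{r_1-s_1}\bigl(1-\frac{\delta}{\gamma+\delta}-\frac{\gamma}{\gamma+\delta}\bigr)=0$. The growing exponential thus vanishes from the middle branch of \eqref{eq.PDE}, that branch extends to $[b_0^*,\infty)$ and reduces to the value function of $\pi_{b_0^*}$, and the outer branch $[b,\infty)$ recedes to infinity; pointwise convergence $V(x;\pi_{a_p,a_c,b})\to V(x;\pi_{b_0^*})$ for every $x\ge0$ follows. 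Convergence of $V'$ on compacta then forces $a_c\to\infty$, since a bounded limit of $a_c$ would make $V'(a_c)=\beta\downarrow\gamma/(\gamma+\delta)$ contradict $V'(\cdot;\pi_{b_0^*})>\gamma/(\gamma+\delta)$ strictly on finite $x>b_0^*$.

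I expect the main obstacle to be steps (i) and (iv) in tandem: making the divergence $b\to\infty$ fully rigorous (the break-even heuristic must be turned into a genuine bound via the explicit $C$-formula and the smooth-fit condition $V'(b-)=\beta$) while simultaneously controlling the delicate cancellation that drives $A\to0$, so that both free boundaries escape to infinity at the correct rate yet $a_p$ settles exactly at $b_0^*$. By comparison, the algebra identifying $Q(b_0^*)$ with $\frac{s_1\delta/(\gamma+\delta)}{r_1+s_1}$ is routine.
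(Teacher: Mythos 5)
Your proposal is correct and follows essentially the same route as the paper's own proof: the identity
$V'(a_p)-\frac{-s_1\frac{\gamma}{\gamma+\delta}}{-s_1+Q(a_p)(r_1+s_1)}=(r_1-s_1)I\big(y+l,Q(a_p)\big)$
from Section \ref{Remark.suff.cons}, combined with $y\to\infty$ (hence $I\to 0$) and $V'(a_p)=1$, pins down $Q(a_p)$ and yields the dichotomy of parts 1--2, while the cancellation $A\to 0$ in \eqref{A.formula} gives the convergence of the value function exactly as in the paper. The only localized differences are that the paper identifies the limiting barrier with $b_0^*$ by verifying the third-derivative smooth-fit condition $V'''(a_p+)=V'''(a_p-)$ of \citet{NoPeYaYa17}, whereas you compute $f(b_0^*)/f'(b_0^*)=1/s_1+\gamma\mu/(\delta(\gamma+\delta))$ directly from the pure-periodic smooth fit (an equivalent and correct calculation), and that the paper deduces $l\uparrow\infty$ from the shape of $I(\cdot,q)$ and the requirement $I(l,q)=I(y+l,q)$ rather than from your contradiction argument via convergence of $V'$.
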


\begin{proof}
	 The proof requires functions $Q$ and $I(x,q)$ introduced in Section \ref{Remark.suff.cons}. The first two parts requires an investigation of the condition $I(l,Q(a))=I(l+y,Q(a))$, whereas the last part calculates require a verification of the condition $V'''(b^*_0+)=V'''(b^*_0-)$ with the given formula for $b^*_0$. Details are provided in Appendix \ref{A.proof.Convergence.beta}.
\end{proof}

\subsection{Continuity for different cases in Theorem \ref{Thm.mu.neg}}

Note when $\mu<0$, similar continuity results hold, as shown in Appendix \ref{A_Thm.mu.neg.cont} sequentially for the $4$ different cases enumerated in Theorem \ref{Thm.mu.neg}.

\section{Numerical illustrations}\label{S.Numerical}

In this section, We illustrate numerically some results from previous sections. The first and the second sections are devoted to the case when $\mu>0$ and $\mu<0$ respectively. The {connection between both cases (when $\mu=0$) has been discussed in Section \ref{S.VaryingMu}.}

\subsection{When the business is profitable ($\mu> 0$)}\label{S.numericalP}

Our baseline setting includes: scale parameters $(\mu,\sigma,\chi)=(1,0.3,0.01)$, time parameters $(\gamma,\delta)=(1,0.15)$ and proportional transaction cost parameter $\corr{\beta=0.9}$. \corr{Except the parameter under consideration, all other parameters will be set to the baseline.} In particular, \corr{under the baseline,} we have $\beta>{\gamma}/{(\gamma+\delta)}$ which guarantees that the hybrid $(a_p^*,a_c^*,b^*)$ strategy is optimal. In the following, we will study the impact of the parameters on the optimal barriers.

\subsubsection{Transaction costs}
\begin{figure}[htb]\label{fig.muP.costs}
	\centering
	\begin{minipage}{0.33\textwidth}
		\centering
		\includegraphics[width=1\textwidth]{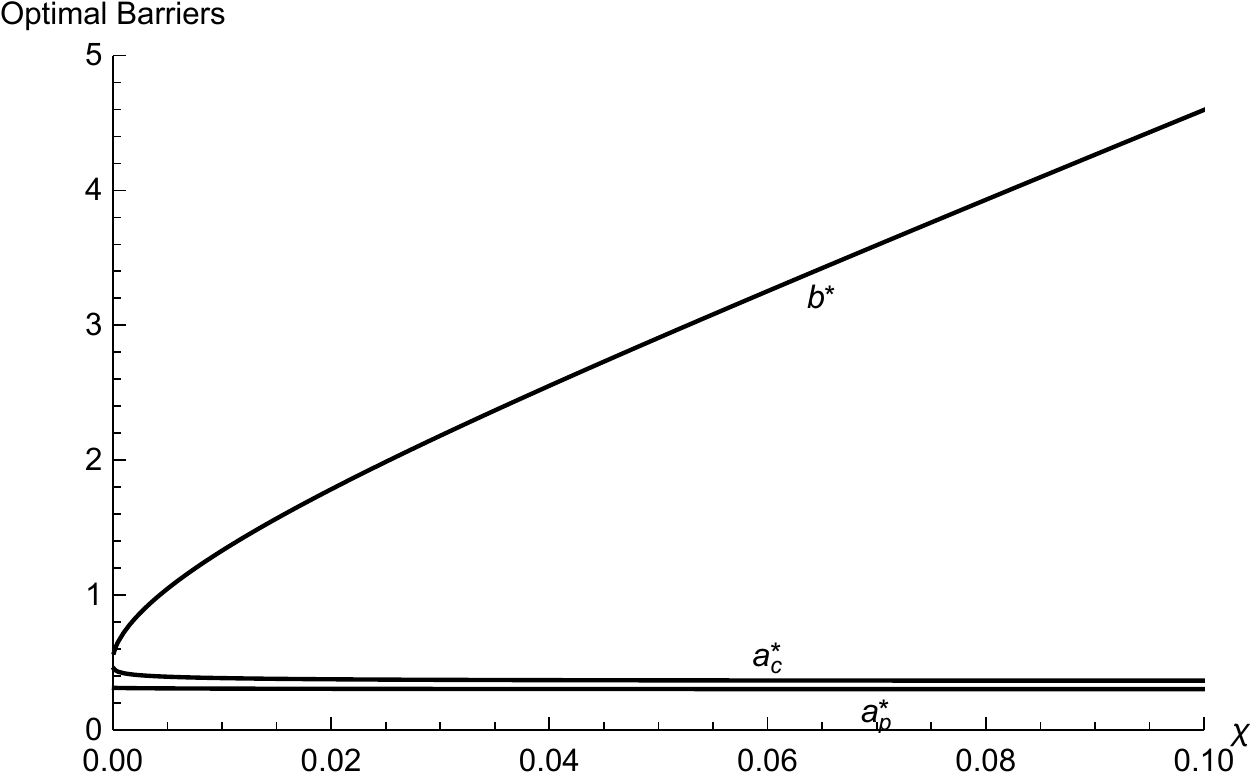}
		\subcaption[first caption.]{Fixed cost $\chi$}\label{fig.muP.a}
	\end{minipage}%\hspace{0.05\textwidth}
	\begin{minipage}{0.33\textwidth}
		\centering
		\includegraphics[width=1\textwidth]{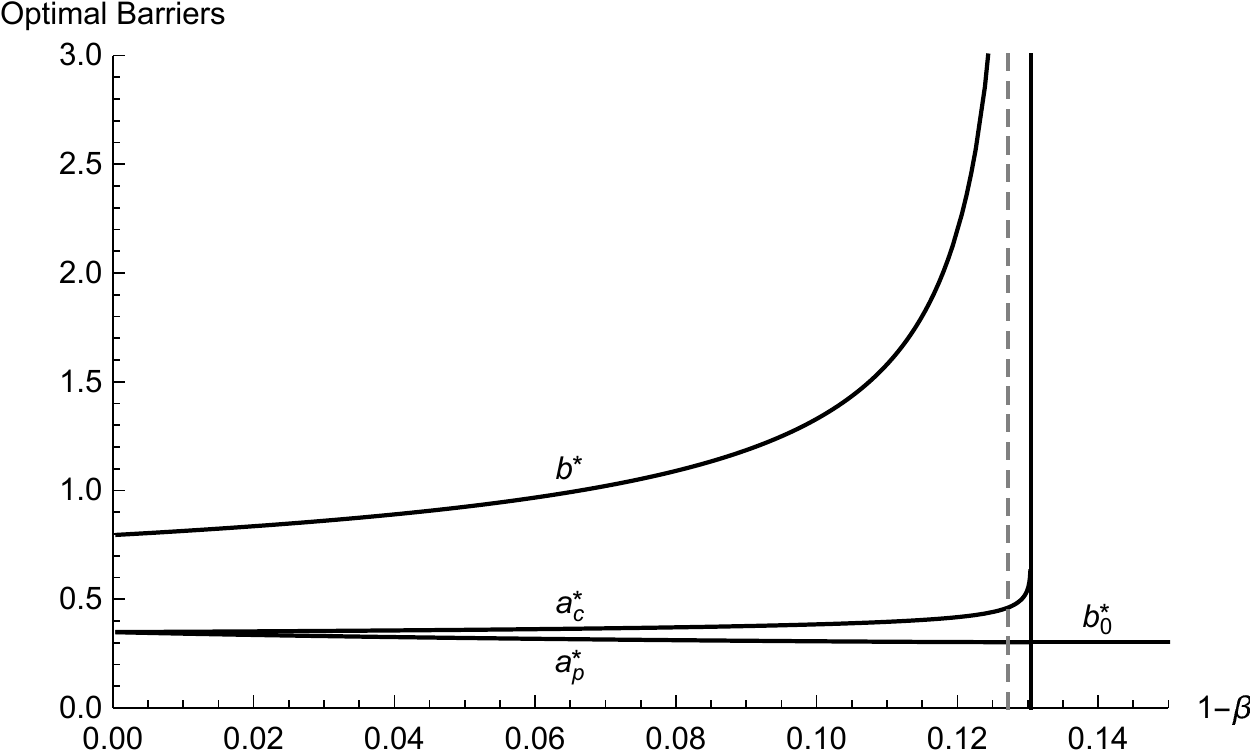}
		\subcaption[second caption.]{Proportional cost $1-\beta$}\label{fig.muP.b}
	\end{minipage}%
	\begin{minipage}{0.33\textwidth}
		\centering
		\includegraphics[width=1\textwidth]{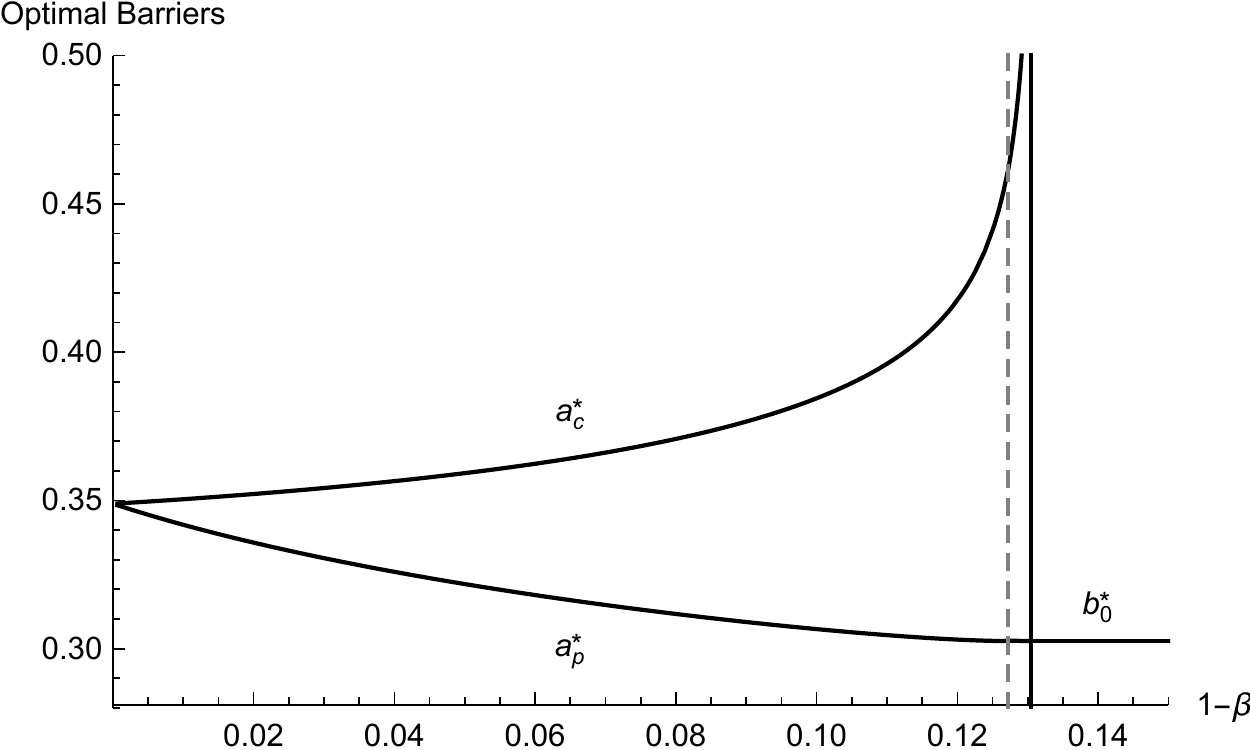}
		\subcaption[third caption.]{$1-\beta$ (zoomed in)}\label{fig.muP.c}
	\end{minipage}	
	\caption{Impact of transaction costs. Solid vertical line: $\beta={\gamma}/{(\gamma+\delta)}$. Dotted line: approximation line (see Appendix \ref{S_comput})}
\end{figure}

Figure \ref{fig.muP.a} plots the barriers $(a_p^*,a_c^*,b^*)$ when the fixed cost $\chi$ increases from $0.001$ to $0.1$. As we can see, the increase in $\chi$ is compensated primarily by the increase in $b^*$, with almost insignificant drops in both $a_p^*$ and $a_c^*$. This makes sense because with an increased difficulty in paying dividends outside the periodic times, one would simply choose to pay more often at the periodic times. Although not obvious in the figure, one should expect that $a_p^*$ and $b^*$ coincide when $\chi=0$. This corresponds to the special case described in \citet[without fixed transaction costs]{AvTuWo16}.

Figure \ref{fig.muP.b} plots the barriers $(a_p^*,a_c^*,b^*)$ when the proportional cost rate $1-\beta$ increases from $0$ to $0.15$, i.e. across and beyond the threshold ${\delta}/({\gamma+\delta})$. As $1-\beta$ increases from $0$, the two barriers $a_p^*$ and $a_c^*$ split. 
In addition, from Figure \ref{fig.muP.c}, we can see that while $a_p^*$ decreases with a converging behaviour (to $b_0^*$), $a_c^*$ is increasing with a diverging behaviour, as predicted and due to Lemma \ref{Lemma.convergence.beta}. As another illustration of Lemma \ref{Lemma.convergence.beta}, there is a continuity behaviour between $a_p^*$ and the optimal periodic barrier $b_0^*$ at $\beta={\gamma}/{(\gamma+\delta)}$.

\subsubsection{Volatility}
\begin{figure}[htb]\label{fig.muP.volatility}
	\centering
	\begin{minipage}{0.45\textwidth}
		\centering
		\includegraphics[width=1\textwidth]{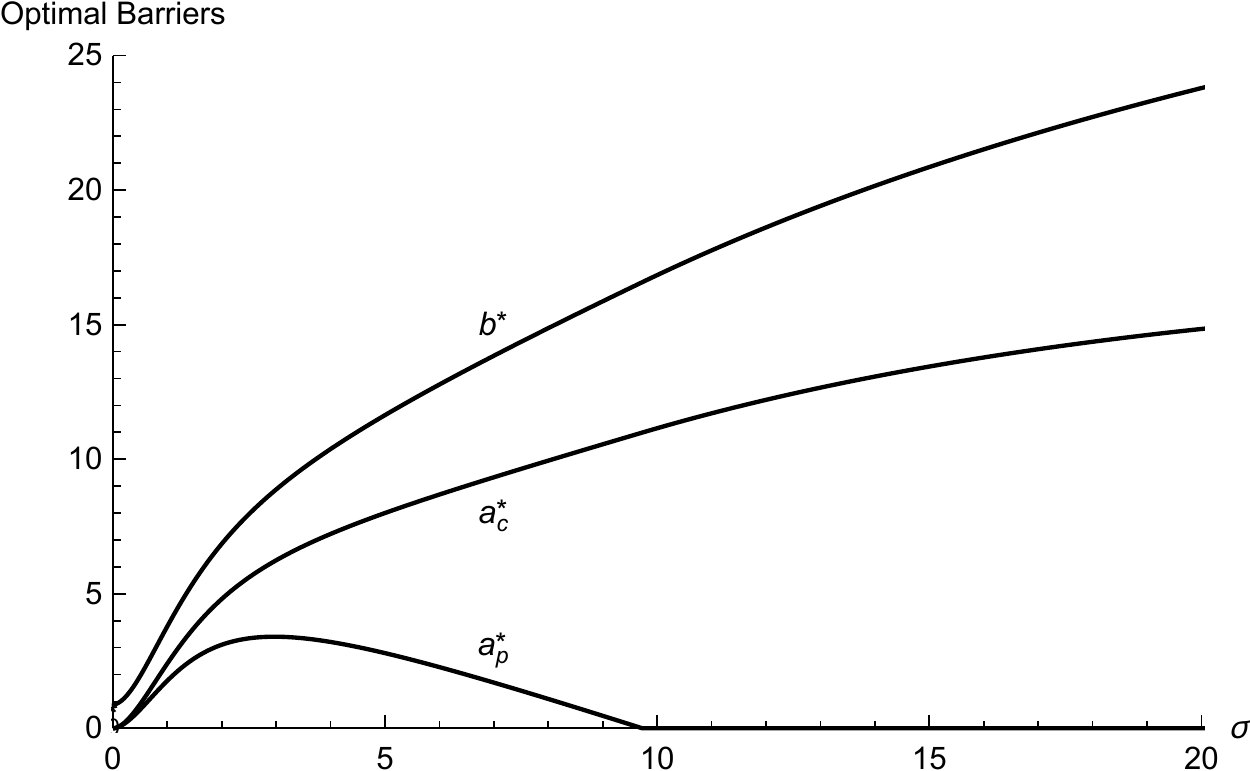}
		\subcaption[fourth caption.]{Small $\sigma$}\label{fig.muP.d}
	\end{minipage}\hspace{0.05\textwidth}
	\begin{minipage}{0.45\textwidth}
		\centering
		\includegraphics[width=1\textwidth]{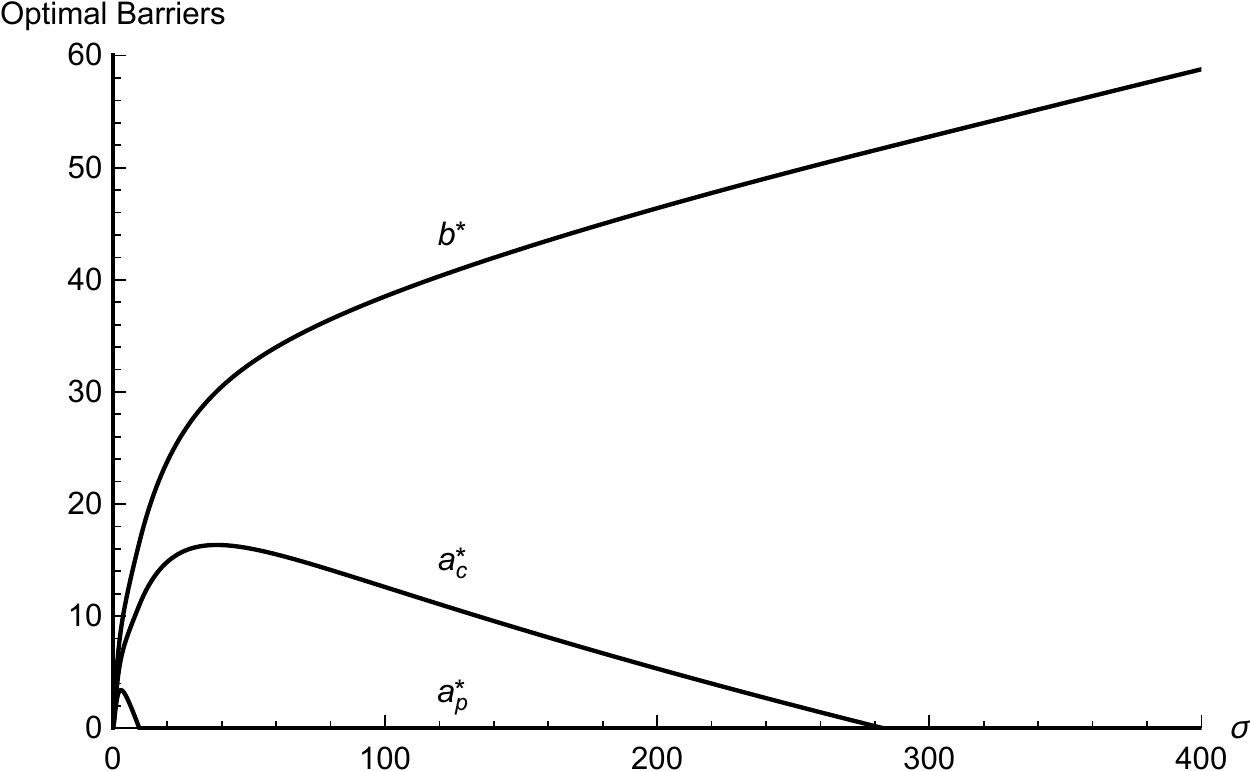}
		\subcaption[fourth caption.]{Large $\sigma$}\label{fig.muP.d2}
	\end{minipage}
	
	\caption{Sensitivities to the volatility parameter $\sigma$.}
\end{figure}

Figure \ref{fig.muP.d} plots the barriers $(a_p^*,a_c^*,b^*)$ when the volatility parameter $\sigma$ increases from $0.01$ to $20$. When the volatility is small, the business is virtually riskless and excess capital is not needed as a buffer. Therefore, both $a_p^*$ and $a_c^*$ are close to zero. When $\sigma$ increases, the business becomes more risky and hence all 3 barriers increase. However, as $\sigma$ further increases beyond a certain level, the business is deemed too risky and early exit would be a better choice. This is reflected by the decrease in the lower barrier $a_p^*$. Furthermore, we can see from Figure \ref{fig.muP.d2} that $a_c^*$ is also going down eventually but the behaviour of $b^*$ is unclear. Heuristically we expect that $b^*\uparrow\infty$ so that the optimal strategy converges to a ``liquidation at first opportunity'' strategy. The idea is that $\sigma\uparrow\infty$ is equivalent  to $\mu,\kappa\downarrow 0$ (after scaling), where a hybrid $(0,0,b)$ strategy is optimal (Lemma \ref{L_conv}). Converting to the original scale, we have $b^*\uparrow\infty$.

\subsubsection{Time parameters}

\begin{figure}[htb]
	\centering
	
	\begin{minipage}{0.45\textwidth}
		\centering
		\includegraphics[width=1\textwidth]{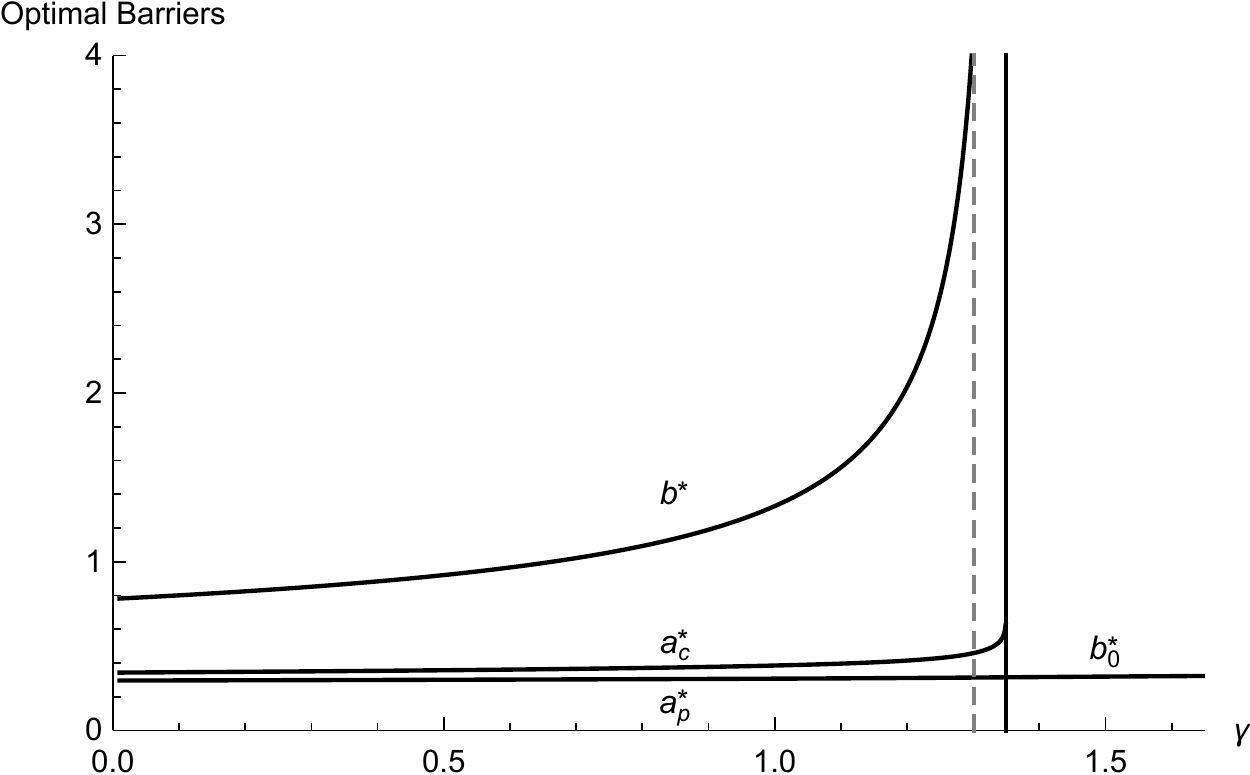}
		\subcaption[fifth caption.]{$\gamma$ }\label{fig.muP.e}
	\end{minipage}\hspace{0.05\textwidth}
	\begin{minipage}{0.45\textwidth}
		\centering
		\includegraphics[width=1\textwidth]{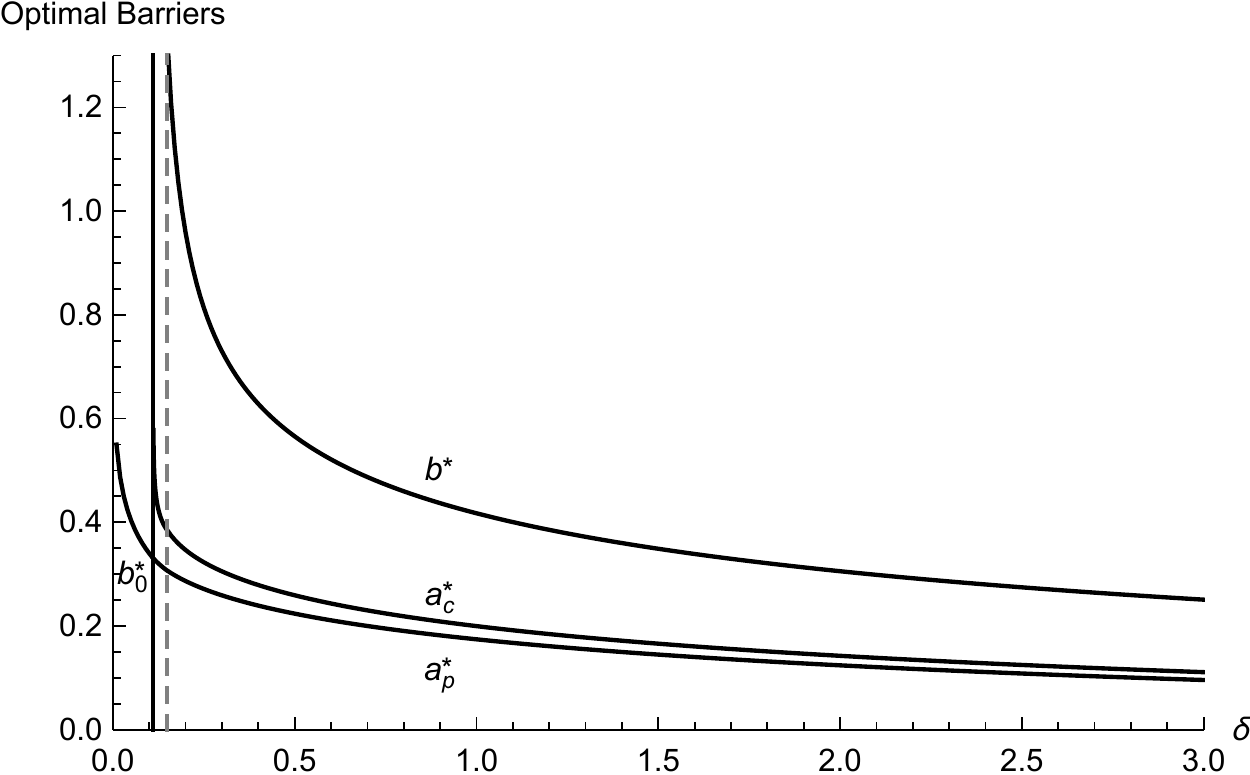}
		\subcaption[sixth caption.]{$\delta$}\label{fig.muP.f}
	\end{minipage}

	\caption{Sensitivities to the time parameters Solid line: $\beta={\gamma}/{(\gamma+\delta)}$. Dotted line: approximation line (see Appendix \ref{S_comput})}
\end{figure}

Figure \ref{fig.muP.e} plots the barriers $(a_p^*,a_c^*,b^*)$ when the dividend frequency parameter $\gamma$ increases from $0.01$ to beyond $1.5$. It is clear that all three barriers are increasing with $\gamma$. This is consistent with the intuition that with more frequent chances to pay periodic dividends (which attract no fixed costs), one does not have the urgency to pay more which puts the company at risk. When $\gamma$ increases to the point that ${\gamma}/{(\gamma+\delta)}$ approaches $\beta$, both $a_c^*$ and $b^*$ should increase to infinity while $a_p^*$ increases to $b_0^*$, which resemble a periodic $b_0^*$ strategy. This behaviour is similar to the change in $1-\beta$ studied in Figure \ref{fig.muP.b}.

Finally, Figure \ref{fig.muP.f} plots the barriers $(a_p^*,a_c^*,b^*)$ when the time preference parameter $\delta$ ranges from $0.01$ to $3$. Unsurprisingly, the effect is qualitatively the reverse of that of $\gamma$ in Figure \ref{fig.muP.e}, with also a smooth connection with the periodic $b_0^*$ strategy.

\subsection{When the business is unprofitable ($\mu<0$)}\label{S.NumericalN}

Our baseline setting includes: scale parameters $(\mu,\sigma,\chi)=(-1,0.3,0.15)$, time parameters $(\gamma,\delta)=(1,0.15)$ and proportional transaction cost parameter $\beta=0.7$. \corr{Except the parameter under consideration, all other parameters will be set to the baseline.} Section \ref{SubS.cost.N} explores the impact of the 2 types of costs to the optimal barriers. Following that, Section \ref{SubS.sens.N} illustrates the sensitivities of other parameters to the optimal barriers.

\subsubsection{Transaction costs}\label{SubS.cost.N}

We start by discussing the impact of the two types of transaction costs (proportional and fixed) on the optimal barriers. Remember that $\beta\le 1$ is the ratio of net dividends of immediate dividends, as compared to periodic (see Remark \ref{R_betacp}). This means that $1-\beta$ is the net level of proportional transaction costs and high levels further penalise the immediate dividends as compared to the periodic ones. Recall as well that immediate liquidation occurs as soon as the surplus level its the area between $b_1$ and $b_2$, and at first opportunity otherwise.

\begin{figure}[htb]
	\centering
	\begin{minipage}{0.45\textwidth}
		\centering
		\includegraphics[width=1\textwidth]{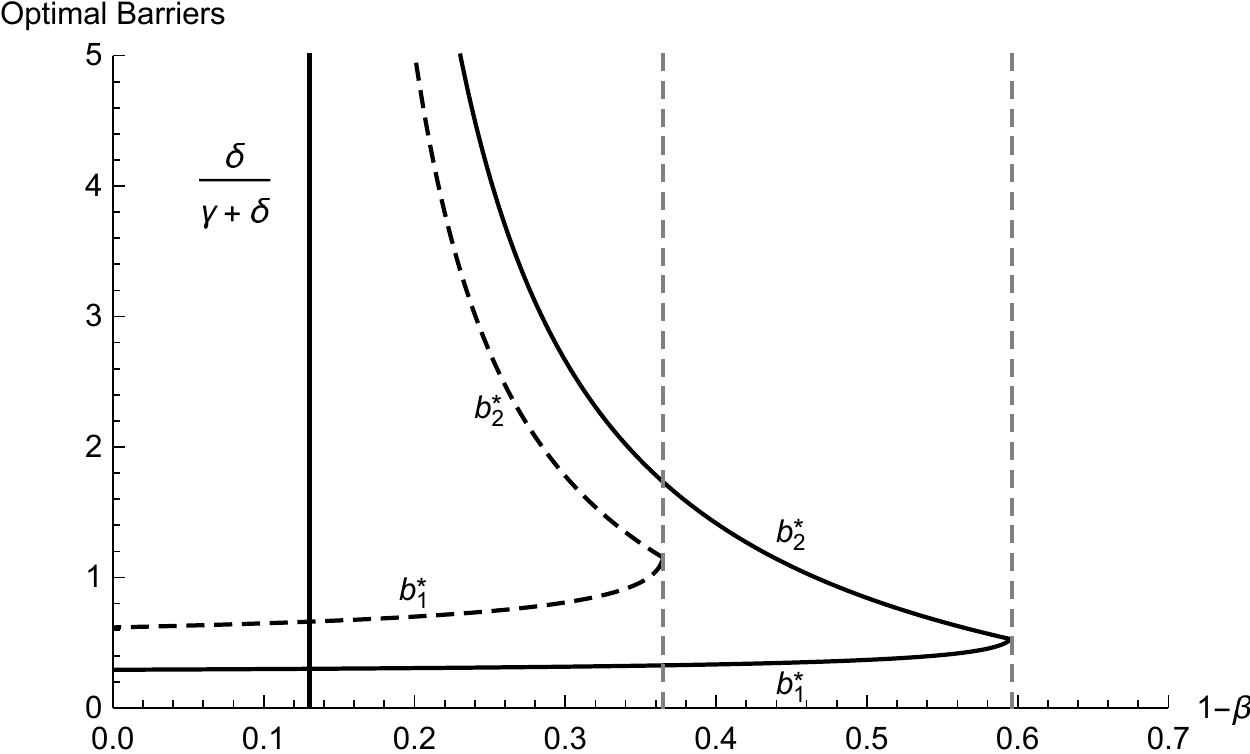}
		\subcaption[first caption.]{Small fixed transaction costs $\chi$ \\ Solid line: $\chi=0.15$; Dashed line: $\chi=0.3$}\label{fig.muN.betaKappaSmall}
	\end{minipage}%
	\hspace{0.08\textwidth}
	\begin{minipage}{0.45\textwidth}
		\centering
		\includegraphics[width=1\textwidth]{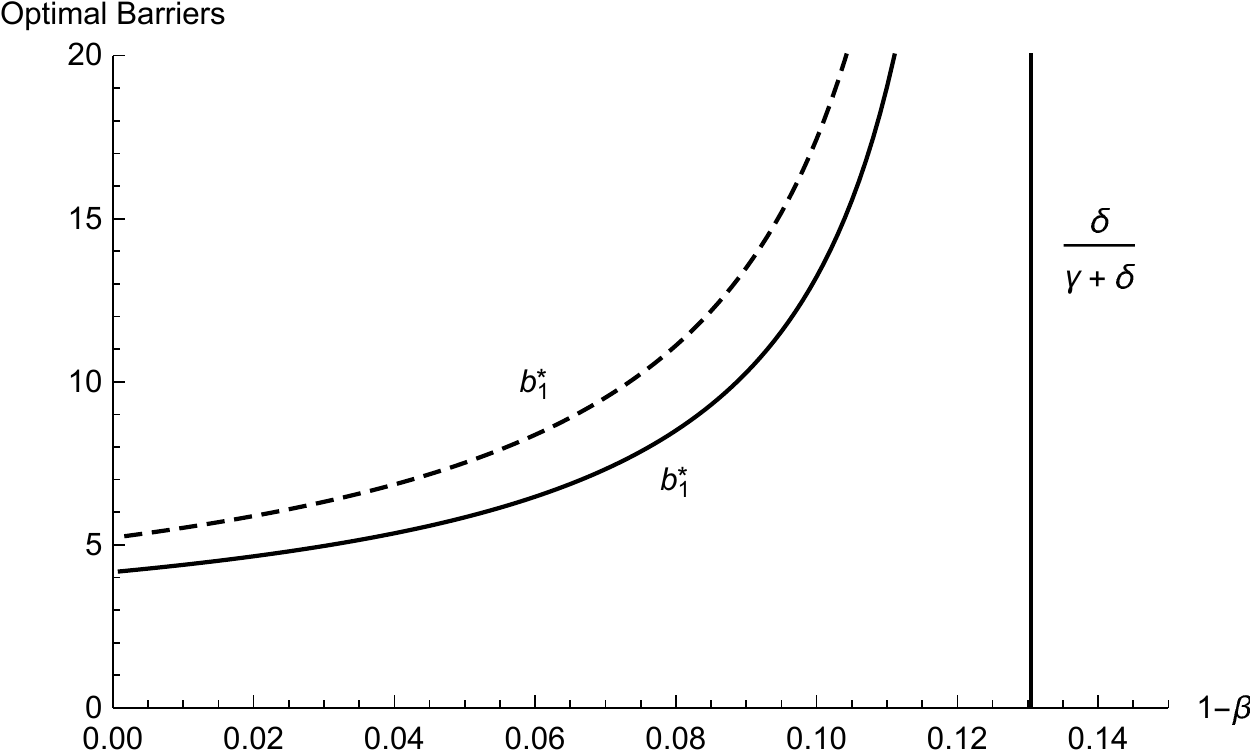}
		\subcaption[second caption.]{Large fixed transaction costs $\chi$ \\ Solid line: $\chi=0.9$; Dashed line: $\chi=1$}\label{fig.muN.betaKappaLarge}
	\end{minipage}%
	
	\caption{Interplay between proportional and fixed transaction costs. An empty region means $\pi_0$ is optimal.} \label{fig.muN.betaKappa}
\end{figure}

Figure \ref{fig.muN.betaKappa} illustrates the change in the optimal barriers $(b_1^*,b_2^*)$ with increasing proportional transaction cost $1-\beta$ and different fixed transaction costs $\chi$. First, when $\chi$ is relatively large, the periodic $0$ strategy is optimal for $1-\beta>{\delta}/{(\gamma+\delta)}$ (high proportional transaction cost), which is evident in Figure \ref{fig.muN.betaKappaLarge}, compared to Figure \ref{fig.muN.betaKappaSmall}. If we imagine the periodic $0$ strategy as a liquidation $(b_1,b_2)$ strategy with both barriers being infinity, then the two graphs are consistent. Hence, we can focus on Figure \ref{fig.muN.betaKappaSmall}.

From Figure \ref{fig.muN.betaKappaSmall}, we can see that when the proportional transaction cost $1-\beta$ decreases, immediate dividends become optimal, and the associated two barriers appear and diverge. The upper barriers increase to infinity when $1-\beta$ approaches ${\delta}/{(\gamma+\delta)}$ from above, and the lower barrier stabilises to a certain level. On the other hand, we can see that the two barriers degenerate to one level which corresponds to the periodic $0$ strategy when the proportional transaction cost $1-\beta$ is large. This continuity feature is quite surprising and remarkable, especially the continuity of the lower barrier $b_1^*$ at $\beta={\gamma}/{(\gamma+\delta)}$.

The collapse of \corr{the area} between the two barriers is quite intuitive as an increase in the cost $1-\beta$ makes the decision to liquidate the company immediately very expensive compared to waiting for the next dividend decision time and liquidate at that first opportunity. Further, when $1-\beta$ is too large, we totally ignore the option to liquidate the company immediate and choose to wait. Similarly, when the fixed cost $\chi$ increases, the option to liquidate the company now becomes less favourable. This is indicated by the smaller area covered by the 2 dotted lines compared to the solid lines, when $\chi$ increases from $0.15$ to $0.3$. Obviously, when $\chi$ increases, we should expect an increase in $b_1^*$, as displayed in Figure \ref{fig.muN.betaKappaSmall}.

\subsubsection{Sensitivities }\label{SubS.sens.N}

\begin{figure}[htb]
	\centering
	\begin{minipage}{0.33\textwidth}
		\centering
		\includegraphics[width=1\textwidth]{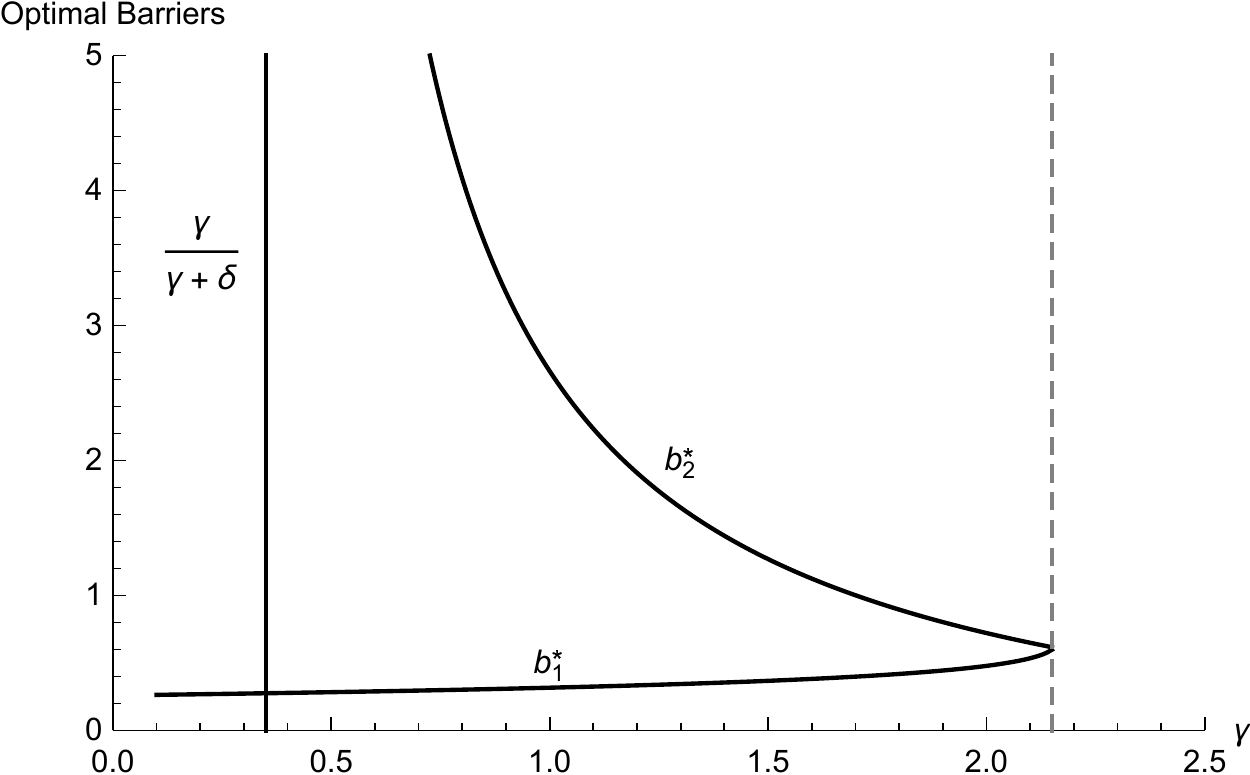}
		\subcaption[first caption.]{$\gamma$}\label{fig.muN.a}
	\end{minipage}%
	\begin{minipage}{0.33\textwidth}
		\centering
		\includegraphics[width=1\textwidth]{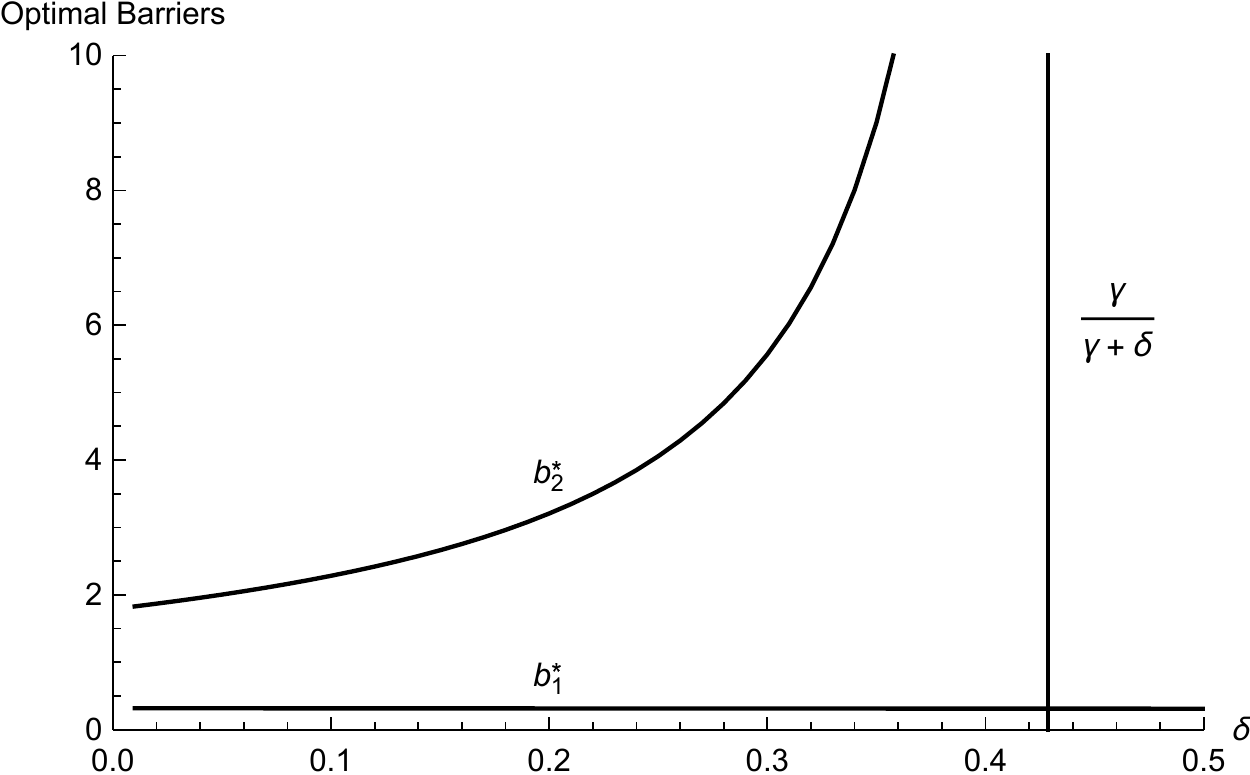}
		\subcaption[second caption.]{$\delta$}\label{fig.muN.b}
	\end{minipage}%
	%\\
	\begin{minipage}{0.33\textwidth}
		\centering
		\includegraphics[width=1\textwidth]{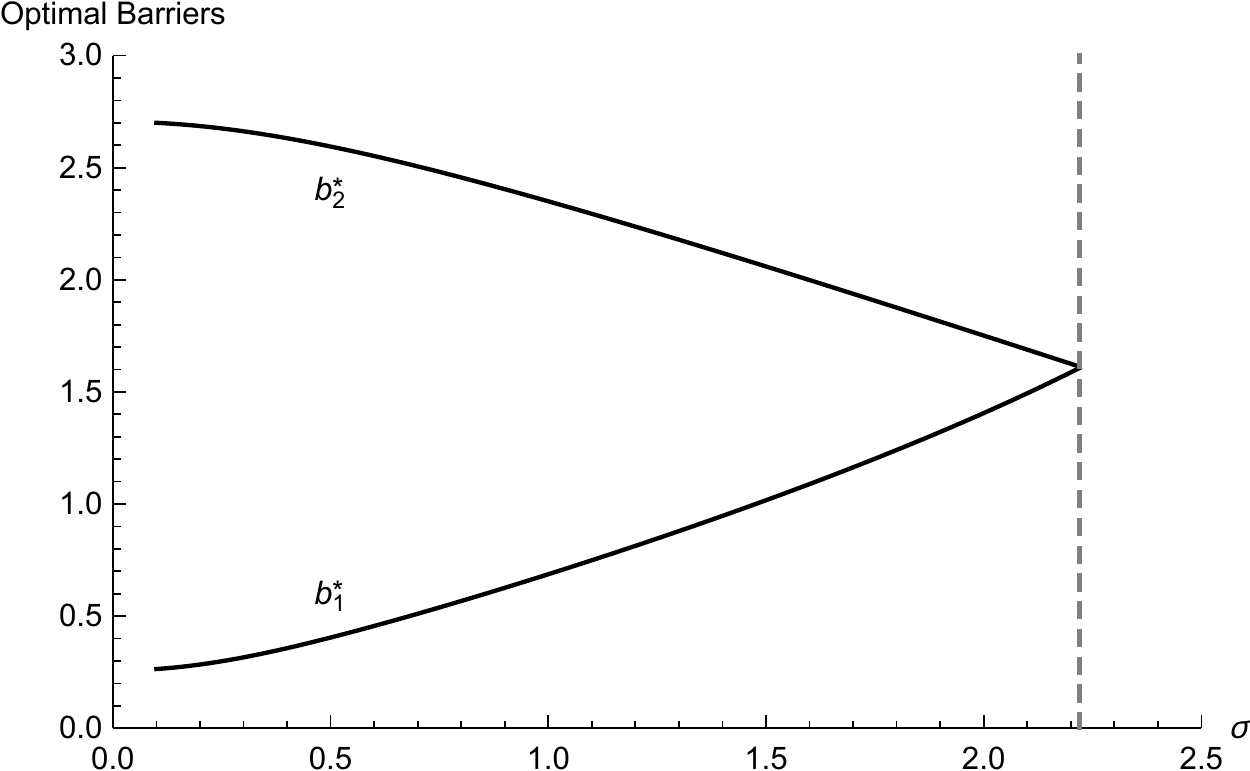}
		\subcaption[third caption.]{$\sigma$}\label{fig.muN.c}
	\end{minipage}
	
	\caption{Sensitivities to parameters. Empty region means $\pi_0$ is optimal.} \label{fig.muN}
\end{figure}

Figure \ref{fig.muN} displays the sensitivities of the barriers to the parameters $\gamma$, $\delta$ and $\sigma$. Note that the (baseline) fixed cost $\chi$ is chosen to be ``small'' to showcase the presence of two barriers. When $\gamma$ increases, the chance of being able to liquidate the company at low cost improves. This favours the option to wait instead of liquidating the company now and is clearly indicated in Figure \ref{fig.muN.a} where the area between the two barriers is shrinking. The opposite effect is present for the impatience parameter $\delta$. Note if we chose a larger base value for $\gamma$, we will see both barriers meet just as in Figure \ref{fig.muN.a}. This is because $\delta$ and $\gamma$ have somewhat inverse roles, and are both functions of how time is defined.

Because the company is non-profitable (negative $\mu$), waiting is speculative because there is nothing left if the company gets ruined before it is liquidated. Figure \ref{fig.muN.c} shows that increased volatility makes such a speculation increasingly worthwhile. When $\sigma$ is low, the lower barrier to be very close to ${\chi}/{\beta}$, which means we will liquidate the company as long as the outcome gives us a positive value, since there is no chance of recovering: indeed the negative drift $\mu<0$ will occur with little chance of being compensated by a positive random diffusion path because $\sigma$ is too low. On the other hand a very high $\sigma$ means it is worth trying one's luck and wait. Note that Figure \ref{fig.muN.c} uses $\beta\leq {\gamma}/{\gamma+\delta}$; the case when $\beta>{\gamma}/{\gamma+\delta}$ is similar, except we do not have $b_2^*$ as it is infinity.

\section{Conclusion}\label{S.conclusion}
In this paper, we considered a diffusion model for the retained cash earnings of a risk business, and studied comprehensively its optimal control via dividends (cash payments) of two different types as observed in real life. Under realistic transaction cost assumptions, we were able to replicate dividend payment behaviour actually observed as optimal. In particular, for realistic ranges of parameters a \emph{hybrid} dividend strategy is optimal, whereby periodic dividends are paid regularly, and extraordinary dividends are paid when the surplus becomes too high. All results summarised in Section \ref{S_map} and Table \ref{T_roadmap} are rigorously shown in the paper and its online supplements.

\section*{Acknowledgments}

This paper was presented at the 23rd International Congress on Insurance: Mathematics and Economics (IME) in July 2019 (Munich, Germany) and at the 54th Actuarial Research Conference (ARC) in August 2019 (Purdue University, USA). The authors are grateful for constructive comments received from colleagues who attended those events\corr{, as well as comments from two anonymous referees, which led to significant improvements of the paper.}

This research was  supported under Australian Research Council's Linkage (LP130100723) and Discovery (DP200101859) Projects funding schemes.  Hayden Lau acknowledges financial support from an Australian Postgraduate Award and supplementary scholarships provided by the UNSW Australia Business School. The views expressed herein are those of the authors and are not necessarily those of the supporting organisations. 

\section*{References}

\bibliographystyle{elsarticle-harv}
\bibliography{libraries}

\newpage

\appendix

%\noindent \textbf{Electronic Appendix of B. Avanzi, Hayden Lau and Bernard Wong, On the optimality of joint periodic and extraordinary dividend strategies, submitted for publication in European Journal of Operational Research}

%\title

\section{Proof of Lemma \ref{Verification.lemma}}\label{A.ver.lemma}

Based on Remark \ref{Remark.Rational}, we can restrict the strategies to have non-negative contribution to the value function. We denote the collection of those strategies $\widetilde{\Pi}$.

By the definition of $v$, it suffices to show that under the hypothesis, we have $H(x)\geq V(x;\pi)$ for all $\pi\in\widetilde{\Pi}$.

We first prove the case when $D^\pi(0)=0$, i.e. there is no dividend at time $0$.

Since $H\in \mathscr{C}^1(\mathbb{R}^+)\cap\mathscr{C}^2(\mathbb{R}^+\backslash E)$, we need to use It\=o-Meyer \citep[e.g. Thm IV.70 in][]{Pro05}, where \citet{Pes05} shows that $H\in\mathscr{C}^1$ is enough to kill the local time at $E$. As a result, we can still apply the It\=o Lemma in its standard form. \corr{Therefore, in the following proof, the local time term will be omitted.}

There is nothing to prove when $x=0$, see \eqref{E_V0}. Hence, we assume $x>0$. For each $n\in\mathbb{N}$, we define a family of increasing stopping time $(T_n,n\in\mathbb{N})$ with $T_n:=\inf\{t> 0: X^\pi(t)> n\text{ or }X^\pi(t)< \frac{1}{n}\}$. By applying the It\=o Lemma to the semi-martingale $\{e^{-\delta(t\wedge T_n)}H(X^\pi(t\wedge T_n));t\geq 0\}$ (with $a\wedge b=\min(a,b)$ for $a,b\in\mathbb{R}$), conditioning on the event $\{X(0)=x\}$, we have
\begin{align*}
&e^{-\delta(t\wedge T_n)}H(X^\pi(t\wedge T_n))-H(x)\\
=~&\int_0^{t\wedge T_n}-\delta e^{-\delta s}H(X^\pi(s-))ds+\int_0^{t\wedge T_n}e^{-\delta s} H'(X^\pi(s-))dX^\pi(s)\\&+\frac{1}{2}\int_0^{t\wedge T_n}e^{-\delta s}H''(X^\pi(s-))1_{\{X^\pi(s-)\notin E\}}d[X^\pi,X^\pi]^c(s)\\&+\sum_{0<s\leq t\wedge T_n}e^{-\delta s}\Big(H(X^\pi(s))-H(X^\pi(s-))-H'(X^\pi(s-))\Delta X^\pi(s)\Big),
\end{align*}
where for a function $F$, $\Delta F(t)=F(t)-F(t-)=F(t)-\lim_{s\uparrow t}F(s)$.

As $X^\pi=X-D^\pi$, $X$ is a diffusion process and $D^\pi$ is a finite variation (FV) process, we have that $d[X^\pi,X^\pi]^c(s)=d[X,X]^c(s)=\sigma^2 ds$. On the other hand, $X$ being a diffusion implies that all the jumps in $X^\pi$ come from $D^\pi$. Therefore, we can rewrite the above as 
{\small\begin{align*}
&e^{-\delta(t\wedge T_n)}H(X^\pi(t\wedge T_n))-H(x)\\
=~&\int_0^{t\wedge T_n}-\delta e^{-\delta s}H(X^\pi(s-))ds+
\int_0^{t\wedge T_n} e^{-\delta s}H'(X^\pi(s-))dX(s)\\&-\int_0^{t\wedge T_n}e^{-\delta s}H'(X^\pi(s-))dD^\pi(s)+\frac{\sigma^2}{2}\int_0^{t\wedge T_n}e^{-\delta s}H''(X^\pi(s-))1_{\{X^\pi(s-)\notin E\}} ds\\&+\sum_{0<s\leq t\wedge T_n}e^{-\delta s}\Big(H(X^\pi(s-)-\Delta D^\pi(s))-H(X^\pi(s-))+H'(X^\pi(s-))\Delta D^\pi(s)\Big)\\
%=~&\int_0^{t\wedge T_n}-\delta e^{-\delta s}H(X^\pi(s-))ds+
%\int_0^{t\wedge T_n} e^{-\delta s}H'(X^\pi(s-))(\mu ds +\sigma dW(s))\\&-\int_0^{t\wedge T_n}e^{-\delta s}H'(X^\pi(s-))dD^\pi(s)+\frac{\sigma^2}{2}\int_0^{t\wedge T_n}e^{-\delta s}H''(X^\pi(s-))1_{\{X^\pi(s-)\notin E\}} ds\\&+\sum_{0<s\leq t\wedge T_n}e^{-\delta s}\Big(H(X^\pi(s-)-\Delta D^\pi(s))-H(X^\pi(s-))+H'(X^\pi(s-))\Delta D^\pi(s)\Big)\\
%=~&\int_0^{t\wedge T_n}e^{-\delta s}(\mathscr{A}-\delta)H(X^\pi(s-))1_{\{X^\pi(s-)\notin E\}}ds+
%\int_0^{t\wedge T_n} e^{-\delta s}H'(X^\pi(s-))\sigma dW(s)\\&-\int_0^{t\wedge T_n}e^{-\delta s}H'(X^\pi(s-))dD^\pi(s)\\&+\sum_{0<s\leq t\wedge T_n}e^{-\delta s}\Big(H(X^\pi(s-)-\Delta D^\pi(s))-H(X^\pi(s-))+H'(X^\pi(s-))\Delta D^\pi(s)\Big)\\
=~&\int_0^{t\wedge T_n}e^{-\delta s}(\mathscr{A}-\delta)H(X^\pi(s-))1_{\{X^\pi(s-)\notin E\}}ds+
\int_0^{t\wedge T_n} e^{-\delta s}H'(X^\pi(s-))\sigma dW(s)\\&+\sum_{0<s\leq t\wedge T_n}e^{-\delta s}\Big(H(X^\pi(s-)-\Delta D^\pi(s))-H(X^\pi(s-))\Big),
\end{align*}}
where $W=\{W(t);t\geq 0\}$ is a standard Brownian motion. We now decompose $D^\pi$ into $D^\pi_p(t)=\int_0^tdD^\pi_p(s)dN_\gamma(s)$ and $D^\pi_c(t)$, where we denote the jump times of $D^\pi_c$ as $\Tc$. In this sense, after some algebraic effort, we can rewrite the above as 
\begin{align*}
&e^{-\delta(t\wedge T_n)}H(X^\pi(t\wedge T_n))-H(x)\\
=~&\int_0^{t\wedge T_n}e^{-\delta s}\Big((\mathscr{A}-\delta)H(X^\pi(s-))+\gamma\Big(\Delta D^\pi_p(s)+H(X^\pi(s-)-\Delta D^\pi_p(s))-H(X^\pi(s-))\Big)\Big)1_{\{X^\pi(s-)\notin E\}}ds\\&+\int_0^{t\wedge T_n}e^{-\delta s}\Big(\Delta D^\pi_p(s)+H(X^\pi(s-)-\Delta D^\pi_p(s))-H(X^\pi(s-))\Big)(dN_\gamma(s)-\gamma ds)\\
&+\sum_{s\in(0, t\wedge T_n]\cap \Tc}e^{-\delta s}\Big(\beta \Delta D^\pi_c(s)-\chi+H(X^\pi(s-)-\Delta D^\pi_c(s))-H(X^\pi(s-))\Big)\\&+
\int_0^{t\wedge T_n} e^{-\delta s}H'(X^\pi(s-))\sigma dW(s)\\
&-\Big(\int_0^{t\wedge T_n}e^{-\delta s}\Delta D^\pi_p(s)dN_\gamma(s)+\sum_{s\in(0, t\wedge T_n]\cap \Tc}e^{-\delta s}(\beta \Delta D^\pi_c(s)-\chi)\Big).
\end{align*}
By denoting
{\small$$M(t):=\int_0^{t}e^{-\delta s}\Big(\Delta D^\pi_p(s)+H(X^\pi(s-)-\Delta D^\pi_p(s))-H(X^\pi(s-))\Big)(dN_\gamma(s)-\gamma ds)+\int_0^{t} e^{-\delta s}H'(X^\pi(s-))\sigma dW(s),$$}
we can rewrite the above as
%\begin{align*}
%&e^{-\delta(t\wedge T_n)}H(X^\pi(t\wedge T_n))-H(x)\\
%=~&\int_0^{t\wedge T_n}e^{-\delta s}\Big((\mathscr{A}-\delta)H(X^\pi(s-))+\gamma\Big(\Delta D^\pi_p(s)+H(X^\pi(s-)-\Delta D^\pi_p(s))-H(X^\pi(s-))\Big)\Big)1_{\{X^\pi(s-)\notin E\}}ds\\
%&+\sum_{s\in(0, t\wedge T_n]\cap \Tc}e^{-\delta s}\Big(\beta\Delta D^\pi_c(s)-\chi+H(X^\pi(s-)-\Delta D^\pi_c(s))-H(X^\pi(s-))\Big)\\
%&-\Big(\int_0^{t\wedge T_n}e^{-\delta s}\Delta D^\pi_p(s)dN_\gamma(s)+\sum_{s\in(0, t\wedge T_n]\cap \Tc}e^{-\delta s}(\beta\Delta D^\pi_c(s)-\chi)\Big)+M(t\wedge T_n),
%\end{align*}
%which can be rewritten as
{\small\begin{align*}
&H(x)\\=~&e^{-\delta(t\wedge T_n)}H(X^\pi(t\wedge T_n))\\
&-\int_0^{t\wedge T_n}e^{-\delta s}\Big((\mathscr{A}-\delta)H(X^\pi(s-))+\gamma\Big(\Delta D^\pi_p(s)+H(X^\pi(s-)-\Delta D^\pi_p(s))-H(X^\pi(s-))\Big)\Big)1_{\{X^\pi(s-)\notin E\}}ds\\
&-\sum_{s\in(0, t\wedge T_n]\cap \Tc}e^{-\delta s}\Big(\beta\Delta D^\pi_c(s)-\chi+H(X^\pi(s-)-\Delta D^\pi_c(s))-H(X^\pi(s-))\Big)\\
&+\Big(\int_0^{t\wedge T_n}e^{-\delta s}\Delta D^\pi_p(s)dN_\gamma(s)+\sum_{s\in(0, t\wedge T_n]\cap \Tc}e^{-\delta s}(\beta\Delta D^\pi_c(s)-\chi)\Big)-M(t\wedge T_n).
\end{align*}}
Now, by hypothesis (Conditions 1,2,4,5), the first 3 lines on the right hand side of the equation are non-negative, which implies 
\begin{equation*}
H(x)\geq \Big(\int_0^{t\wedge T_n}e^{-\delta s}\Delta D^\pi_p(s)dN_\gamma(s)+\sum_{s\in(0, t\wedge T_n]\cap \Tc}e^{-\delta s}(\beta\Delta D^\pi_c(s)-\chi)\Big)-M(t\wedge T_n)
\end{equation*}
Note that $M$ is a zero-mean martingale as all the terms in the inegral are finite, by hypothesis (Condition 3). Hence, by taking expectation, we have
\begin{equation*}
H(x)\geq \Ex\Big(\int_0^{t\wedge T_n}e^{-\delta s}\Delta D^\pi_p(s)dN_\gamma(s)+\sum_{s\in(0, t\wedge T_n]\cap \Tc}e^{-\delta s}(\beta\Delta D^\pi_c(s)-\chi)\Big).
\end{equation*}
Finally, we observe that $T_n\rightarrow \tau^\pi$ a.s. and the terms inside the expectation are non-negative. Hence, by applying Fatou's Lemma, we get
\begin{align*}
H(x)\geq~&\liminf_{t,n\uparrow \infty} \Ex\Big(\int_0^{t\wedge T_n}e^{-\delta s}\Delta D^\pi_p(s)dN_\gamma(s)+\sum_{s\in(0, t\wedge T_n]\cap \Tc}e^{-\delta s}(\beta\Delta D^\pi_c(s)-\chi)\Big)\\
\geq ~&\Ex\Big(\liminf_{t,n\uparrow \infty}\Big(\int_0^{t\wedge T_n}e^{-\delta s}\Delta D^\pi_p(s)dN_\gamma(s)+\sum_{s\in(0, t\wedge T_n]\cap \Tc}e^{-\delta s}(\beta\Delta D^\pi_c(s)-\chi)\Big)\Big)\\
=~&\Ex\Big(\int_0^{\tau^\pi}e^{-\delta s}\Delta D^\pi_p(s)dN_\gamma(s)+\sum_{s\in(0,\tau^\pi]\cap\Tc}e^{-\delta s}(\beta\Delta D^\pi_c(s)-\chi)\Big)\\
=~&V(x;\pi),
\end{align*}
which completes the proof for strategies $\pi\in\widetilde{\Pi}$ such that $D^\pi(0)=0$. For strategies $\pi\in\widetilde{\Pi}$ such that $D^\pi(0)>0$, we denote $\widetilde{\pi}$ the same strategy for $t>0$, i.e. $D^{\widetilde{\pi}}(t)=\{D^\pi_p(t),D^\pi_c(t)-D^\pi(0)\}$. Then we have
\begin{equation*}
V(x;\pi)=\Ex(\beta D^\pi(0)-\chi+V(x-D^\pi(0);\widetilde{\pi}))\leq\Ex( \sup_{\xi\in(0,x]}\Big(\beta \xi-\chi+H(x-\xi)\Big))\leq H(x)
\end{equation*} 
by an application of the previous result for $\widetilde{\pi}$ and Condition 5.

\section{Proof of Proposition \ref{prop.1}} \label{A_Prop55}

Note Existence is established in Proposition \ref{prop.2} and here we assume $(a^*,l^*,y^*)$ exists.
	
	It is clear from \eqref{obj.fcn} that the objective function is differentiable w.r.t. $(a,l,y)$. Therefore, being optimal implies the partial derivatives are zero (except at the boundary). It is straight-forward to show 
	$$\parD{y}\frac{V(a)}{\Wq(a)}=0 \iff V'(b-)=\beta.$$
	From this, we see that at $(a^*,l^*,y^*)$
	\begin{align*}
	&0=\parD{y}\Big(V(a_c)-\beta a_c\Big)=\parD{y}\Big(\frac{V(a)}{\Wq(a)}G(a,l)\Big)=\Big(\parD{y}\frac{V(a)}{\Wq(a)}\Big)G(a,l)\\
	\implies ~&\parD{y}\frac{V(a)}{\Wq(a)}=0\iff V'(b-)=\beta.
	\end{align*}
	A further calculation (to appear later in \eqref{PC.geq.0.min}) shows that it is never optimal to have $y={\chi}/{\beta}$ (i.e. at the boundary) so the equality above always hold.
	
	Now, using 
	$$V(a_c)+\beta y-\chi=V(b)=\frac{V(a)}{\Wq(a)}G(a,y+l)-\gamma\WqrBB(y+l),$$
	we get
	\begin{align*}
	\parD{l}\Big(V(a_c)-\beta a_c\Big)
	%=~&\parD{l}\Big(V(a_c)+\beta(y-\kappa)\Big)-\beta\\
	=~&\parD{l}\Big(\frac{V(a)}{\Wq(a)}G(a,y+l)-\gamma\WqrBB(y+l)\Big)-\beta\\
	%=~&\Big(\parD{l}\frac{V(a)}{\Wq(a)}\Big)G(a,y+l)+\frac{V(a)}{\Wq(a)}\parD{l}G(a,y+l)-\gamma\WqrB(y+l)-\beta\\
	=~&\Big(\parD{l}\frac{V(a)}{\Wq(a)}\Big)G(a,y+l)+V'(b)-\beta\\
	=~&\Big(\parD{l}\frac{V(a)}{\Wq(a)}\Big)G(a,y+l)
	\end{align*}
	so we have
	\begin{align*}
	\parD{l}\Big(V(a_c)-\beta a_c\Big)=\parD{l}V(a_c)-\beta
	=~&\parD{l}\Big(\frac{V(a)}{\Wq(a)}G(a,l)-\gamma\WqrBB(l)\Big)-\beta\\
	=~&\Big(\parD{l}\frac{V(a)}{\Wq(a)}\Big)G(a,l)+\frac{V(a)}{\Wq(a)}\parD{l}G(a,l)-\gamma\WqrB(l)-\beta\\
	=~&\Big(\parD{l}\frac{V(a)}{\Wq(a)}\Big)G(a,l)+V'(a_c)-\beta\\
	\implies \Big(\parD{l}\frac{V(a)}{\Wq(a)}\Big)\Big(G(a,y+l)-G(a,l)\Big)=~&V'(a_c)-\beta
	\end{align*}
	so we have 
	$$V'(a_c)=\beta.$$
	if $l^*>0$. Otherwise, if $l^*=0$ (i.e. at the boundary), we see that $l\mapsto V(a_c)-\beta a_c$ is decreasing in $l$ near zero, i.e. $$0\geq\parD{l}\Big(V(a_c)-\beta a_c\Big)=\Big(\parD{l}\frac{V(a)}{\Wq(a)}\Big)G(a,y+l)$$ and therefore by noting $G(a,y+l)>G(a,l)$ we get
	$$V'(a_c)\leq \beta.$$
	
	Similarly, we have
	\iffalse
	\begin{align*}
	&\parD{a}\Big(V(a_c)-\beta a_c\Big)=\parD{a}V(b)-\beta\\
	=~&G(a,y+l)\parD{a}\frac{V(a)}{\Wq(a)}+\frac{V(a)}{\Wq(a)}\parD{a}G(a,y+l)-\beta\\
	=~&G(a,y+l)\parD{a}\frac{V(a)}{\Wq(a)}+\frac{V(a)}{\Wq(a)}\Big(\parD{(y+l)}G(a,y+l)-\gamma\Wq'(a)\WqrB(y+l)\Big)-\beta\\
	=~&G(a,y+l)\parD{a}\frac{V(a)}{\Wq(a)}+\frac{V(a)}{\Wq(a)}\Big(\parD{(y+l)}G(a,y+l)-\gamma\WqrB(y+l)\Big)-\beta\\&+\gamma\WqrB(y+l)\Big(1-\frac{V(a)}{\Wq(a)}\Wq'(a)\Big)\\
	=~&G(a,y+l)\parD{a}\frac{V(a)}{\Wq(a)}+\Big(V'(b)-\beta\Big)+\gamma\WqrB(y+l)\Big(1-V'(a)\Big)\\
	=~&G(a,y+l)\parD{a}\frac{V(a)}{\Wq(a)}+\gamma\WqrB(y+l)\Big(1-V'(a)\Big)
	\end{align*}
	and 
	\begin{align*}
	&\parD{a}\Big(V(a_c)-\beta a_c\Big)\\
	=~&G(a,l)\parD{a}\frac{V(a)}{\Wq(a)}+\frac{V(a)}{\Wq(a)}\parD{a}G(a,l)-\beta\\
	=~&G(a,l)\parD{a}\frac{V(a)}{\Wq(a)}+\frac{V(a)}{\Wq(a)}\Big(\parD{l}G(a,l)-\gamma\Wq'(a)\WqrB(l)\Big)-\beta\\
	=~&G(a,l)\parD{a}\frac{V(a)}{\Wq(a)}+\frac{V(a)}{\Wq(a)}\Big(\parD{l}G(a,l)-\gamma\WqrB(l)\Big)-\beta+\gamma\WqrB(l)\Big(1-\frac{V(a)}{\Wq(a)}\Wq'(a)\Big)\\
	=~&G(a,l)\parD{a}\frac{V(a)}{\Wq(a)}+\Big(V'(a_c)-\beta\Big)+\gamma\WqrB(l)\Big(1-V'(a)\Big).
	\end{align*}
	
	To summarise, we have
	\fi
	\begin{align}
	\parD{a}\Big(V(a_c)-\beta a_c\Big)=~&\Big(V'(b)-\beta\Big)+G(a,y+l)\parD{a}\frac{V(a)}{\Wq(a)}+\gamma\WqrB(y+l)\Big(1-V'(a)\Big),\label{eq1}\\
	\parD{a}\Big(V(a_c)-\beta a_c\Big)=~&\Big(V'(a_c)-\beta\Big)+G(a,l)\parD{a}\frac{V(a)}{\Wq(a)}+\gamma\WqrB(l)\Big(1-V'(a)\Big).\label{eq2}
	\end{align}
	Now, Assumption \ref{Ass0} implies for $l>0$
	\begin{equation}\label{use.ass0}
	 \Delta:=G(a,y+l)\gamma\WqrB(l)- G(a,l)\gamma\WqrB(y+l)<0,
	\end{equation}
	which also holds for $l=0$ as the first term is null and the second term is positive.
	Hence, we can eliminate the term with $\parD{a}(V(a)/\Wq(a))$ to get
	\begin{equation}\label{eq.con}
	\Big(G(a,y+l)-G(a,l)\Big)\parD{a}\Big(V(a_c)-\beta a_c\Big)=G(a,y+l)\Big(V'(a_c)-\beta\Big)+|\Delta| (V'(a)-1).
	\end{equation}
	
	Now, suppose $a^*=\bar{a}$, then we have
	$$\frac{V(\bar{a})}{\Wq(\bar{a})}\leq \frac{\frac{\Wq(\bar{a})}{\Wq'(\bar{a})}}{\Wq(\bar{a})}=\frac{1}{\Wq'(\bar{a})}\implies V'(a)=V'(\bar{a})=\frac{V(\bar{a})}{\Wq(\bar{a})}\Wq(\bar{a})\leq 1,$$
	because the value function (of our strategy) is smaller in the current setting than the setting when there is no transaction costs (e.g. in \cite{Loe08}) and the optimal value function at $\bar{a}$ is given above. Hence the right hand side of the above equation is negative and so as the left hand side. This means it is impossible for $a^*=\bar{a}$ to be a maximiser for $V(a_c)-\beta a_c$. On the other hand, it is possible for $a^*=0$. In that case, we have $\parD{a}(V(a_c)-\beta a_c)\leq 0$. If furthermore $l^*>0$, we have $V'(a_c)=\beta $ and therefore we can conclude $V'(a)\leq 1$. 
	
	Suppose $a^*>0$, i.e. $\parD{a}(V(a_c)-\beta a_c)=0$. If $l^*=0$, we have $V'(a)=V'(a_c)\leq \beta$ which is a contradiction in view of \eqref{eq.con}. Therefore, we must have $l^*>0$ and therefore we can conclude $V'(a)=1$.
	
	This completes the proof.
	
\section{Proof of Proposition \ref{Prop.3}}\label{A.1}

Note $\gamma\WqrB(x)=kJ(x)$ for some positive constant $k$, and recall from equation \eqref{Vac} and the definition of $J$ that
\begin{align}
G(a,x)=~&g(x)\frac{f'(a)-s_1\frac{\delta}{\gamma+\delta}f(a)}{r_1-s_1}+e^{s_1x}\frac{\delta}{\gamma+\delta}f(a)+\frac{\gamma}{\gamma+\delta}f(a),\\
J(x)=~&g(x)(-s_1)+e^{s_1}(r_1-s_1)+(-(r_1-s_1)),
\end{align}
we want to show that (for any $a\geq 0$)
\begin{equation}
J(x)\parD{x}G(a,x)-G(a,x)\parD{x}J(x)<0,\quad x>0.
\end{equation}
By direct computation, we see that
\begin{align}
&J(x)\parD{x}G(a,x)-G(a,x)\parD{x}J(x)\nonumber\\
%=~&f'(a)e^{s_1x}(g'(x)-s_1g(x))+(s_1f(a)-f'(a))g'(x)-(r_1-s_1)f(a)s_1e^{s_1x}\nonumber\\
%=~&-f'(a)\Big(g'(x)-(r_1-s_1)e^{(r_1+s_1)x}\Big)+s_1f(a)\Big(g'(x)-(r_1-s_1)e^{s_1x}\Big)\nonumber\\
=~&-f'(a)\Big(g'(x)-(r_1-s_1)e^{(r_1+s_1)x}\Big)\label{eq.a}
\\&+s_1f(a)r_1g(x)\nonumber.
\end{align}
Denote the function $F_1$ with
$$F_1(x):=g'(x)-(r_1-s_1)e^{(r_1+s_1)x}.$$
It is easy to see $F_1(0)=0$ and $F_1'(x)>0$ for $x>0$ and hence we have $F_1(x)>0$ for $x>0$.

In view of the above equality
$$J(x)\parD{x}G(a,x)-G(a,x)\parD{x}J(x)=-f'(a)F_1(x)+s_1f(a)r_1g(x),$$
we can conclude that 
$$J(x)\parD{x}G(a,x)-G(a,x)\parD{x}J(x)<0.$$

This completes the proof.

\section{Proof of Proposition \ref{prop.2}}\label{A.prop2}

The statement is a directly consequence of Propositions \ref{Prop.3} and \ref{prop.1}. Therefore, we are left to show the hypothesis in Proposition \ref{prop.1}.

Using the formulas in Proposition \ref{Prop.Vfcn}, we have 
\begin{align}
(r_1-s_1)A=~&\frac{\alpha(r_1-s_1)(y-\frac{\chi}{\alpha})(f'(a)-s_1\frac{\delta}{\gamma+\delta}f(a))+\frac{\gamma}{\gamma+\delta}(J(d,l)+s_1g(d,l)))}{\frac{\delta}{\gamma+\delta}f(a)J(d,l)+f'(a)g(d,l)},\\
\parD{l}A=~&-A\frac{\frac{\delta}{\gamma+\delta}f(a)J'(d,l)+f'(a)g'(d,l)}{\frac{\delta}{\gamma+\delta}f(a)J(d,l)+f'(a)g(d,l)},\\ (\text{with }J'(d,l)=~&J'(d)-J'(l),~g'(d,l)=g'(d)-g'(l))\\
\lim_{l\rightarrow \infty} C=~&\frac{\frac{\gamma\mu}{(\gamma+\delta)^2}-\frac{\gamma}{\gamma+\delta}\frac{1}{s_1}}{\frac{\delta}{\gamma+\delta}f(a)-\frac{f'(a)}{s_1}}<\infty,\\
\lim_{l\rightarrow\infty}\Big(g(l)(1+s_1\frac{g(d,l)}{J(d,l)})\Big)=~&\lim_{l\rightarrow\infty}\frac{(r_1-s_1)g(l)(e^{s_1d}-e^{s_1l})}{J(d,l)}=0,\\
\lim_{l\rightarrow\infty}Ag(l)=~&\frac{\alpha}{e^{r_1y}-1}\big(y-\frac{\chi}{\alpha}\big)<\infty,\\
\lim_{l\rightarrow\infty}\frac{g(d,l)}{J(d,l)}=~&\frac{-1}{s_1},\quad 
\lim_{l\rightarrow\infty}\frac{g'(d,l)}{J(d,l)}=\frac{-r_1}{s_1},\quad 
\lim_{l\rightarrow\infty}\frac{J'(d,l)}{J(d,l)}=r_1,\quad
\lim_{l\rightarrow\infty}\parD{l}C=0.
\end{align}
This implies
\begin{align*}
\lim_{l\rightarrow\infty}\parD{l}(Ag(l))=~&\lim_{l\rightarrow\infty}\Big(Ag'(l)+g(l)\parD{l}A\Big)\\=~&\lim_{l\rightarrow\infty}Ag(l)\Big(\lim_{l\rightarrow\infty}\frac{g'(l)}{g(l)}-\lim_{l\rightarrow\infty}\frac{\frac{\delta}{\gamma+\delta}f(a)J'(d,l)+f'(a)g'(d,l)}{\frac{\delta}{\gamma+\delta}f(a)J(d,l)+f'(a)g(d,l)}\Big)\\
=~&\lim_{l\rightarrow\infty}Ag(l)(r_1-r_1)=0.
\end{align*}

Therefore, we have
\begin{align*}
\lim_{l\rightarrow\infty}\parD{l}	(V(a_c)-\frac{\gamma}{\gamma+\delta}l)=~&\lim_{l\rightarrow\infty}\parD{l}(Ag(l))=0
\end{align*}
and hence
$$\lim_{l\rightarrow\infty}\parD{l}	(V(a_c)-\beta l)=\frac{\gamma}{\gamma+\delta}-\beta<0.$$
From this, we see that $V(a_c)-\beta a_c$ is decreasing for large enough $l$ (independent of $(a,y)$), say $\bar{l}$, i.e. $l\mapsto (V(a_c)-\beta a_c)$ cannot attain its local maximum for $l>\bar{l}$.
Since we have already chosen $a\in[0,\bar{a}]$, we do not worry about the $a$ dimension. 

On the other hand, we have
\begin{align}
V(a_c)=~&C\Bigg(f(a)\Big(\frac{\gamma}{\gamma+\delta}+\frac{\delta}{\gamma+\delta}(e^{s_1l}-s_1\frac{g(l)}{r_1-s_1})\Big)+f'(a)\frac{g(l)}{r_1-s_1}\Bigg)\nonumber\\
&+\frac{g(l)}{r_1-s_1}(s_1\frac{\gamma\mu}{(\gamma+\delta)^2}-\frac{\gamma}{\gamma+\delta})-e^{s_1l}\frac{\gamma\mu}{(\gamma+\delta)^2}+\frac{\gamma}{\gamma+\delta}(l+\frac{\mu}{\gamma+\delta}).\label{Vac}
\end{align}
\begin{remark}
	It is easy to see that $C=V(a)/\Wq(a)$, the terms inside the bracket after $C$ is $G(a,l)$, and the terms in the second line correspond to $\gamma\WqrBB(l)$.
\end{remark}

We are left to work with the $y$ dimension.	
From \eqref{Vac}, it is clear that in terms of $y$, the objective function $V(a_c)-\beta a_c$ solely depends on $C$, as we have also discovered before using scale functions. There is no shortcut but to compute the derivative w.r.t. $y$. From 
$$
C=\frac{(r_1-s_1)\big((\beta-\frac{\gamma}{\gamma+\delta})(d-l)-\chi\big)+\frac{\gamma}{\gamma+\delta}g(d,l)+\frac{\gamma\mu}{(\gamma+\delta)^2}J(d,l)}{\frac{\delta}{\gamma+\delta}f(a)J(d,l)+f'(a)g(d,l)},
$$
we get (after some tedious algebric operations)
\begin{align*}
&\parD{y}C\times (\frac{\delta}{\gamma+\delta}f(a)J(d,l)+f'(a)g(d,l))^2\\
=~&\alpha(r_1-s_1)\Big(\frac{\delta}{\gamma+\delta}f(a)J(d,l)+f'(a)g(d,l)-(y-\frac{\chi}{\alpha})(\frac{\delta}{\gamma+\delta}f(a)J'(d)+f'(a)g'(d))\Big)\\
&+\frac{\gamma}{\gamma+\delta}f'(a)\Big(\frac{\mu}{\gamma+\delta}-\frac{\frac{\delta}{\gamma+\delta}f(a)}{f'(a)}\Big)\Big(g(d,l)J'(d)-J(d,l)g'(d)\Big),
\end{align*}	
where 
$$\frac{\mu}{\gamma+\delta}-\frac{\frac{\delta}{\gamma+\delta}f(a)}{f'(a)}	\geq 0$$
for $a\in[0,\bar{a}]$ and it can be checked by taking derivative w.r.t. $y$ that
$$	g(d,l)J'(d)-J(d,l)g'(d)=g'(d)e^{s_1l}-s_1g(l)e^{s_1d}-(r_1-s_1)e^{(r_1+s_1)d}> 0 $$
for $y\geq {\chi}/{\alpha}$. This implies 
\begin{equation}\label{PC.geq.0.min}
\parD{y}\frac{V(a)}{\Wq(a)}\Big|_{y=\chi/\alpha}=\parD{C}\Big|_{y=\chi/\alpha}>0,\quad (a,l)\in[0,\bar{a}]\times[0,\bar{l}].
\end{equation}

Next, we take the limit $y\rightarrow\infty$, and see (after some algebraic operations) that
\begin{align*}
&\parD{y}C\times (\frac{\delta}{\gamma+\delta}f(a)J(d,l)+f'(a)g(d,l))^2\\
=~&f'(a)
\alpha(r_1-s_1)\Big(g(d,l)-(y-\frac{\chi}{\alpha})g'(d)+\frac{1}{\alpha(r_1-s_1)}\frac{\gamma\mu}{(\gamma+\delta)^2}\big(g'(d)e^{s_1l}-s_1g(l)e^{s_1d}-(r_1-s_1)e^{(r_1+s_1)d}\big)\Big)\\
&+\frac{\delta}{\gamma+\delta}f(a)\alpha(r_1-s_1)\Big(J(d,l)-(y-\frac{\chi}{\alpha})J'(d)+\frac{\gamma}{\gamma+\delta}\big(g'(d)e^{s_1l}-s_1g(l)e^{s_1d}-(r_1-s_1)e^{(r_1+s_1)d}\big)\Big)\\
%\leq ~&f'(a)
%\alpha(r_1-s_1)\Big(g(d)-(y-\frac{\chi}{\alpha})g'(d)+\frac{1}{\alpha(r_1-s_1)}\frac{\gamma\mu}{(\gamma+\delta)^2}\big(g'(d)e^{s_1l}-s_1g(l)e^{s_1d}-(r_1-s_1)e^{(r_1+s_1)d}\big)\Big)\\
%&+\frac{\delta}{\gamma+\delta}f(a)\alpha(r_1-s_1)\Big(J(d)-(y-\frac{\chi}{\alpha})J'(d)+\frac{\gamma}{\gamma+\delta}\big(g'(d)e^{s_1l}-s_1g(l)e^{s_1d}-(r_1-s_1)e^{(r_1+s_1)d}\big)\Big)\\
%=~&f'(a)
%\alpha(r_1-s_1)\Big(g(d)-(y-\frac{\chi}{\alpha})g'(d)+\frac{1}{\alpha(r_1-s_1)}\frac{\gamma\mu}{(\gamma+\delta)^2}g'(d)e^{s_1l}\Big)\quad (y\rightarrow\infty)\\
%&+\frac{\delta}{\gamma+\delta}f(a)\alpha(r_1-s_1)\Big(J(d)-(y-\frac{\chi}{\alpha})J'(d)+\frac{\gamma}{\gamma+\delta}g'(d)e^{s_1l}\Big)\\
%=~&f'(a)
%\alpha(r_1-s_1)\Big(g(d)-(y-\frac{\chi}{\alpha}-\frac{1}{\alpha(r_1-s_1)}\frac{\gamma\mu}{(\gamma+\delta)^2}e^{s_1l})g'(d)\Big)\\
%&+\frac{\delta}{\gamma+\delta}f(a)\alpha(r_1-s_1)\Big(J(d)-(y-\frac{\chi}{\alpha})J'(d)+\frac{\gamma}{\gamma+\delta}g'(d)e^{s_1l}\Big)\\
\leq~&f'(a)
\alpha(r_1-s_1)\Big(g(d)-(y-\frac{\chi}{\alpha}-\frac{1}{\alpha(r_1-s_1)}\frac{\gamma\mu}{(\gamma+\delta)^2})g'(d)\Big)\\
&+\frac{\delta}{\gamma+\delta}f(a)\alpha(r_1-s_1)\Big(J(d)-(y-\frac{\chi}{\alpha})J'(d)+\frac{\gamma}{\gamma+\delta}g'(d)\Big)
\end{align*}
which drifts to $-\infty$ when $y\rightarrow\infty$. Hence, we can choose $\underline{d}$ such that $d>\underline{d}$ implies $\parD{y}C$ is decreasing for all $(a,l)\in[0,\bar{a}]\times[0,\bar{l}]$. In particular, we can choose $\bar{y}=(\bar{l}+\underline{d})\vee 2\chi/\alpha$ such that the same holds for $y\geq \bar{y}$.
	
To conclude, we have find a box for $\mathscr{B}:=[0,\bar{a}]\times[0,\bar{l}]\times[\chi/\alpha,\bar{y}]$ for $(a,l,y)$ such that
\begin{enumerate}
	\item The objective function $V(a_c)-\beta a_c$ attains its maximum inside $\mathscr{B}$,
	\item Its maximum $(a^*,l^*,y^*)$ either occurs in the interior of $\mathscr{B}$, or we have $a^*=0$ or $l^*=0$ or both, but not other cases.
\end{enumerate}	

This concludes the hypothesis in Proposition \ref{prop.1} and hence completes the proof.

\section{Proof of Lemma \ref{Lemma.VD.opt} }\label{A.Lemma6.1}	
	
	Denote $\widetilde{A}=A$ and $\widetilde{B}=B-A$ so that the derivative of the value function on $[a_p,b]$ is 
	\begin{equation}\label{eqt.VDnice.mid}
	V'(a_p+x)=\widetilde{A}r_1e^{r_1x}+\widetilde{B}s_1e^{s_1 x}+\frac{\gamma}{\gamma+\delta},~x\in[0,d].
	\end{equation}
	From $V'(b)=\beta$, we have 
	\begin{equation*}
	V'(b)=\widetilde{A}r_1e^{r_1d}+\widetilde{B}s_1e^{s_1d}+\frac{\gamma}{\gamma+\delta}=\beta,
	\end{equation*}
	or
	\begin{equation}\label{eqt.vbD.eq1}
	\widetilde{A}r_1e^{r_1d}+\widetilde{B}s_1e^{s_1d}=\alpha.
	\end{equation}
	Moreover, we have
	\begin{equation}\label{eqt.vDD}
	V''(a_p+x)=\widetilde{A}r_1^2e^{r_1x}+\widetilde{B}s_1^2e^{s_1x}
	\end{equation}
	and
	\begin{equation}\label{eqt.vDDD}
	V'''(a_p+x)=\widetilde{A}r_1^3e^{r_1x}+\widetilde{B}s_1^3e^{s_1x}.
	\end{equation}
	
	We first show that $\widetilde{A}>0$ by contradiction. Suppose $\widetilde{A}\leq0$ and $\widetilde{B}\geq0$, then the L.H.S. of (\ref{eqt.vbD.eq1}) is negative, which is impossible. On the other hand, if we assume $\widetilde{A}\leq0$ and $\widetilde{B}<0$, then we have from (\ref{eqt.vDD}) $V''< 0$, which implies that $V'$ is decreasing on $[a_p,b]$. However, from $V\in\mathscr{C}^1(\mathbb{R}_+)$, we have
	\begin{equation*}
	\int_{a_c}^{b-}(\beta-{V'(x)})dx=\chi>0,
	\end{equation*}
	which implies that $V'\geq\beta$ on $[a_c,b]$, which is also impossible. 
	
	Now we have established $\widetilde{A}>0$. If we further assume $\widetilde{B}\geq0$, then we have from (\ref{eqt.vDD}) $V''\geq0$, which implies that $V'$ is increasing on $[a_p,b]$. Note that this would not be possible unless $V'(0)<\beta\implies a_c=a_p=0$ because otherwise we have $V'(a_c)=\beta=V'(b)$. Regardless, as $V'$ increases to $V'(b)=\beta$, we have $V'<\beta$ on $[0,b)$ and $V'\equiv\beta$ on $[b,\infty)$, which also holds if $b=0$. Furthermore, the fact that $V$ is positive (by the definition of the value function) implies that $V'(0)>0$, which in turn implies that $V'>0$ on $[0,\infty)$.
	
	For the last case, $\widetilde{A}>0$ and $\widetilde{B}<0$, we can deduce from (\ref{eqt.vDDD}) that $V'''\geq0$ on $[a_p,b]$, or equivalently, $V'$ is convex on $[a_p,b]$. This together with $V'(a_c)\leq \beta=V'(b)$ gives $V'\leq\beta$ on $[a_c,b]$. This fact combining with $V'(a_p)\leq 1$ shows that $V'$ is decreasing from $a_p$ to $a_c$, then further decreasing and finally increasing to $\beta$ at $b$, as $V'$ is convex on $[a_p,b]$, or simply increasing to $\beta$ if $a_p=a_c=0$ and $V'(0)<\beta$. Furthermore, in view of (\ref{eqt.VDnice.mid}), we have that $V'>0$ on $[0,\infty)$.
	
	\begin{remark}\label{Remark.VDDatB}
		Note in any cases, we have $V''(b-)>0$ as $V'$ is increasing at $b-\varepsilon$ for all small enough $\varepsilon>0$. On the other hand, we have $V''(b+)=0$.
	\end{remark}

\section{Proof of Lemma \ref{Lemma.VD.Nice} }\label{A.Lemma6.2}		
	
		For $x> b$, we have
	\begin{align*}
	&(\mathscr{A}-\delta)V(x)+\gamma \Big(x-a_p+V(a_p)-V(x)\Big)\\
	=~&(\mathscr{A}-\delta)V(b+)+\gamma \Big(b-a_p+V(a_p)-V(b)\Big)-(\gamma+\delta)(V(x)-V(b))+\gamma(x-b)\\
	=~&(\mathscr{A}-\delta)V(b+)+\gamma \Big(b-a_p+V(a_p)-V(b)\Big)-(\gamma+\delta)\Big(\beta (x-b)-\frac{\gamma}{\gamma+\delta}(x-b)\Big)\\
	\leq ~&(\mathscr{A}-\delta)V(b+)+\gamma \Big(b-a_p+V(a_p)-V(b)\Big)
	\end{align*}
	as $\beta>{\gamma}/({\gamma+\delta})$ by the assumption in \eqref{Ass.betaLarge}.
	
	Together with Remark \ref{Remark.VDDatB} and the fact that $V\in\mathscr{C}^1(\mathbb{R}_+)$, we have
	\begin{align}
	0=~&(\mathscr{A}-\delta)V(b-)+\gamma \Big(x-a_p+V(a_p)-V(b)\Big)>(\mathscr{A}-\delta)V(b+)+\gamma \Big(x-a_p+V(a_p)-V(b)\Big)\nonumber\\
	\geq ~&(\mathscr{A}-\delta)V(x)+\gamma \Big(x-a_p+V(a_p)-V(x)\Big)\label{eq.HJB1.bGeq0}
	\end{align}
	for $x>b$. Now, denote 
	\begin{equation}
	H_1(\xi):=\xi+V(x-\xi)-V(x)
	\end{equation}
	and by taking derivative with respect to $\xi$, we have 
	$$H_1^\prime(\xi)=1-V'(x-\xi).$$
	In view of Lemma \ref{Lemma.VD.opt}, $V'(x-\xi)<1$ is equivalent to $x-\xi> a_p$, or equivalently $\xi< x-a_p$. Therefore, we can deduce that $H_1$ is increasing on $[0,a_p]$ if $a_p>0$ then decreasing on $(a_p,\infty)$, which implies that in any case it attains its maximum at $\xi=x-a_p$. Therefore, we have
	\begin{align*}
	&(\mathscr{A}-\delta)V(x)+\gamma \sup_{\xi\in[0,x]}\Big(\xi+V(x-\xi)-V(x)\Big)\\=~&\begin{cases}
	(\mathscr{A}-\delta)V(x)+\gamma \Big(x-a_p+V(a_p)-V(x)\Big),\quad&x>b\\
	(\mathscr{A}-\delta)V(x)+\gamma \Big(x-a_p+V(a_p)-V(x)\Big),\quad&a_p\leq x<b\\
	(\mathscr{A}-\delta)V(x),\quad&x<a_p
	\end{cases}\\
	\leq ~&0
	\end{align*}
	in view of (\ref{eq.HJB1.bGeq0}) and the PDEs satisfied by the value function.
	
	For the second equation, clearly the left hand side cannot be positive when $x\leq {\chi}/{\beta}$, since $V$ is increasing. For $x>{\chi}/{\beta}$, we denote 
	\begin{equation*}
	H_2(\xi):=\beta\xi-\chi+V(x-\xi)-V(x),\quad \xi\geq 0.
	\end{equation*}
	Clearly, we have $H_2(0)=-\chi<0$. Taking derivative w.r.t. $\xi$, we get
	$$H_2^\prime(\xi)=\beta-V'(x-\xi).$$
	In view of Lemma \ref{Lemma.VD.opt}, $V'(x-\xi)<\beta$ is equivalent to $x-\xi\in(a_c,b)$, or equivalently $x-b<\xi<x-a_c$. Therefore, we can deduce that $H_2$ is decreasing on $[0,x-b]$ if $x-b\geq 0$ then increasing on $[\max(x-b,0),x-a_c]$ if $x-a_c\geq 0$, then decreasing on $(a_c,\infty)$. Hence, on $[0,x]$, its maximum is attained at either $0$, or $x-a_c$ if $x-a_c>0$. Suppose $\xi=x-a_c>0$, we have
	$$H_2(x-a_c)=\beta(x-a_c)-\chi+V(a_c)-V(x),$$
	which as a function of $x$, is increasing on $[a_c+{\chi}/{\beta},b]$. This implies that 
	$$H_2(x-a_c)\leq H_2(b-a_c)=\beta(b-a_c)-\chi+V(a_c)-V(b)=0>H_2(0).$$
	Therefore, $\sup_{\xi\in[0,x]}H_2(\xi)\leq 0$. This completes the proof as $H_2$ and the term inside the bracket in the second component differs only at $\xi=0$, where both of them have value less than or equal to $0$.

\section{Proof of Equation \eqref{Property.Lambda}}\label{A.Proof.Lambda}
First, we have $\Lambda(V'(0;\pi_0))=0$ as $\beta=V'(0;\pi_0)$ implies that $a_\beta=0$. In addition, $\Lambda$ is an increasing function because
\begin{align*}
\parD{\beta}\Lambda(\beta)=~&\parD{\beta}a_\beta-\frac{V'(a_\beta;\pi_0)\parD{\beta}a_\beta}{\beta}+\frac{V(a_\beta;\pi_0)}{\beta^2}\\
=~&\Big(\parD{\beta}a_\beta\Big)\Big(1-\frac{V'(a_\beta;\pi_0)}{\beta}\Big)+\frac{V(a_\beta;\pi_0)}{\beta^2}\\
>~&0.
\end{align*}
Moreover, when $\beta\uparrow\frac{\gamma}{\gamma+\delta}$, we have $a_\beta\uparrow\infty$ and therefore
\begin{align*}
\lim_{\beta\uparrow\frac{\gamma}{\gamma+\delta}}\Lambda(\beta)=~&\lim_{x\uparrow\infty}(x-\frac{V(x;\pi_0)}{(\frac{\gamma}{\gamma+\delta})})\\
=~&\frac{\gamma+\delta}{\gamma}\lim_{x\uparrow \infty}(\frac{\gamma}{\gamma+\delta}x-\frac{\gamma\mu}{(\gamma+\delta)^2}(1-e^{s_1x})-\frac{\gamma}{\gamma+\delta}x)\\
=~&\frac{-\mu}{\gamma+\delta},
\end{align*}
where we have used Lemma \ref{Lemma.ValueFunction} to compute $V(x;\pi_0)$.

\section{Proof of Lemma \ref{Lemma.b1b2.Exist} }\label{A.Lemma8.6}

	Using the fact that the value function is continuous at $b_2$, we get
	\begin{equation} \label{eqt.cont.at.b}			Be^{s_1b_2}=\beta b_2-\chi-\frac{\gamma}{\gamma+\delta}b_2-\frac{\gamma}{\gamma+\delta}\frac{\mu}{\gamma+\delta},
	\end{equation}
	which implies that 
	$$V'(b_2+;\pi_{b_1,b_2})=\frac{\gamma}{\gamma+\delta}+Bs_1e^{s_1b_2}=(\beta b_2-\chi)s_1-\frac{\gamma}{\gamma+\delta}b_2s_1-\frac{\gamma}{\gamma+\delta}\frac{\mu}{\gamma+\delta}s_1+\frac{\gamma}{\gamma+\delta}$$
	Therefore $V'(b_2+;\pi_{b_1,b_2})=\beta$ is equivalent to
	\begin{equation}\label{eqt.for.b2}
	s_1(\beta-\frac{\gamma}{\gamma+\delta}) b_2-\Big(\chi s_1+s_1\frac{\gamma\mu}{(\gamma+\delta)^2} -(\frac{\gamma}{\gamma+\delta}-\beta)\Big)=0.
	\end{equation}
	Similarly, by replacing the equality of (\ref{eqt.cont.at.b}) by ``$\leq$'' and $B$ by $\widetilde{B}=\frac{-\gamma\mu}{(\gamma+\delta)^2}$, we have
	$$\beta=V'(a_\beta;\pi_0)\geq \frac{\gamma}{\gamma+\delta}+\widetilde{B}s_1e^{s_1a_\beta}=(\beta a_\beta-\chi)s_1-\frac{\gamma}{\gamma+\delta}a_\beta s_1-\frac{\gamma}{\gamma+\delta}\frac{\mu}{\gamma+\delta}s_1+\frac{\gamma}{\gamma+\delta},$$
	which shows that when $b_2=a_\beta$, the left hand side of (\ref{eqt.for.b2}) is negative. As the left hand side of (\ref{eqt.for.b2}) is linear in $b_2$ with positive coefficient, we establish the existence (and uniqueness) of $b_2\in(a_\beta,\infty)$.
	
	For $b_1$, we note (from definitions of $c_{\beta,\chi}$, $\pi_{c_{\beta,\chi},b_2}$ and Remark \ref{Remark.mu.neg.obs}) that 
	$$A(c_{\beta,\chi})g(c_{\beta,\chi})+\beta c_{\beta,\chi}-\chi=V(c_{\beta,\chi};\pi_{c_{\beta,\chi},b_2})=\beta c_{\beta,\chi}-\chi,$$
	which shows that $A(c_{\beta,\chi})=0$. Therefore, we have 
	$$V'(c_{\beta,\chi};\pi_{c_{\beta,\chi},b_2})=A(c_{\beta,\chi})g'(c_{\beta,\chi})+V'(c_{\beta,\chi};\pi_0)=V'(c_{\beta,\chi};\pi_0)<V'(a_\beta;\pi_0)=\beta.$$
	Similarly, we have
	$$V'(a_\beta-;\pi_{a_\beta,b_2})=A(a_\beta)g'(a_\beta)+V'(a_\beta;\pi_0)=A(a_\beta)g'(a_\beta)+\beta.$$
	As $g'(a_\beta)>0$, if $A(a_\beta)\leq 0$, then Lemma \ref{Lemma.A.increasing} implies that there is an interval $I\subsetneq (c_{\beta,\chi},a_\beta)$ where $A(b)$ is decreasing, since $A(b)$ is increasing on the neighbourhood of $c_\beta$. Again from Lemma \ref{Lemma.A.increasing}, we can deduce that $V'(b-;\pi_{b,b_2})>\beta$ for $b\in I$, which is what we have to show. If $A(a_\beta)>0$, then from the above equation, we have $V'(a_\beta-;\pi_{a_\beta,b_2})>\beta$ and hence by continuity, there is a $c_{\beta,\chi}<b<a_\beta$ such that $V'(b-;\pi_{b,b_2})=\beta$. For uniqueness, we will choose $b_1$ to be the smallest one if there are more than one such $b$.

\section{Proof of Lemma \ref{Lemma.b.Exist} }\label{A.Lemma8.7}

    We first need to establish that 
%
%\begin{lemma}\label{Lemma.Aleq0}
	for $\kappa>0$, it holds that 
	$$\frac{-\mu}{\gamma+\delta}(1-e^{s_1\kappa})-\kappa<0.$$
%\end{lemma}
%\begin{proof}
	%To show this, it suffices to show that the negative of the left hand side is inside the set $\mathscr{P}$. 
	Denote $h$ the left hand side of the above inequality, then we have obviously $h(0)=0$. By taking derivative with respect to $\kappa$ and note $r_1s_1({\mu}/({\gamma+\delta}))=r_1s_1$, we can have 
	\begin{align*}
	h'(\kappa)=\frac{-\mu}{\gamma+\delta}(-s_1)e^{s_1\kappa}-1%\\
	%=~&\frac{\mu}{\gamma+\delta}r_1s_1\frac{1}{r_1}e^{s_1\kappa}-1\\
	%=~&\frac{r_1+s_1}{r_1}e^{s_1\kappa}-1\\
	=(e^{s_1\kappa}-1)+\frac{s_1}{r_1}e^{s_1k}
	<0.
	\end{align*}
	%which completes the proof.
%\end{proof}

	From Lemma \ref{Lemma.ValueFunction}, we have
	$$
	V'(b-;\pi_{b,\infty})=A(b)g'(b)-\frac{\gamma\mu}{(\gamma+\delta)^2}s_1e^{s_1b}+\frac{\gamma}{\gamma+\delta}
	$$
	with
	$$A(b)=\frac{(\beta-\frac{\gamma}{\gamma+\delta})b-\chi-\frac{\gamma\mu}{(\gamma+\delta)^2}(1-e^{s_1b})}{g(b)}.$$
	This implies that there is a $\widetilde{b}\geq 0$ such that $b>\widetilde{b}$ implies $A(b)>0$, and
	$$\lim_{b\rightarrow\infty}A(b)g'(b)=+\infty$$
	and hence we are done if
	$$V'(\frac{\chi}{\beta}-;\pi_{\frac{\chi}{\beta},\infty})\leq \beta.$$
	In particular, this would be the case if $A({\chi}/{\beta})<0$, since $g'({\chi}/{\beta})>0$ and
	from Lemma \ref{Lemma.ValueFunction} we have
	$$V'(\frac{\chi}{\beta}-;\pi_{\frac{\chi}{\beta},\infty})=A(\frac{\chi}{\beta})g'(\frac{\chi}{\beta})+V'(\frac{\chi}{\beta};\pi_0)\leq A(\frac{\chi}{\beta})g'(\frac{\chi}{\beta})+\frac{\gamma}{\gamma+\delta}<A(\frac{\chi}{\beta})g'(\frac{\chi}{\beta})+\beta.$$
	This is indeed the case, thanks to very first inequality developed in this section,
	%Lemma \ref{Lemma.Aleq0}, i.e.
	$$A(\frac{\chi}{\beta})=\frac{\frac{\gamma}{\gamma+\delta}(\frac{-\mu}{\gamma+\delta}(1-e^{s_1\frac{\chi}{\beta}})-\frac{\chi}{\beta})}{g(\frac{\chi}{\beta})}<0.$$

\section{Proof of Lemma \ref{L_conv}}\label{A.Lemma9.1}

Suppose $\beta\leq {\gamma}/{(\gamma+\delta)}$, then from Theorem \ref{Thm.Small.Beta}, we know that a periodic barrier strategy $\pi_b$ is optimal. Based on the observation in e.g. \citet{AvTuWo14}, we can conclude that $\pi_0$ is optimal for $\mu=0$ (we omit the proof here although a separate check is possible). Now, in view of Theorem \ref{Thm.mu.neg}, we are in Case 1: $\chi\geq\beta({-\mu}/({\gamma+\delta}))$ for small enough $|\mu|$ (note $\mu<0$). Hence, when $\beta\leq {\gamma}/{(\gamma+\delta)}$, we conclude that a periodic barrier strategy $\pi_0$ is optimal. Hence, the continuity is established.
	
	For $\beta>{\gamma}/{(\gamma+\delta)}$, again in view of Theorem \ref{Thm.mu.neg} Case 1, we can conclude that a liquidation $(b_1,\infty)$ is optimal, with the barrier $b_1$ such that the derivative of the value function at the barrier is $\beta$. Note that such strategy is also the same as a hybrid $(a_p,a_c,b)$ strategy with $a_p=a_c=0$ and $b=b_1$. Since the \emph{optimal} hybrid $(a_c,a_p,b)$ strategy imposes that the derivative of the value function at the upper barrier $b$ is $\beta$ (which is unique by Corollary \ref{Corrolary.unique.barriers}), the continuity of the barriers would be established if we can show that for $\mu=0$, the hybrid $(0,0,b)$ strategy is optimal.
	
	The value function of a hybrid $(0,0,b)$ strategy, denoted by $V$ (instead of $V(\cdot;\pi_{0,0,b})$ for convenience), is given by
	$$V(x)=\begin{cases}
	Ag(x)+\frac{\gamma}{\gamma+\delta}x,&x\leq b\\\beta x-\chi,&x>b
	\end{cases},$$
	which can be derived with ease. When $\mu=0$, the function $g(x)$ can be rewritten as 
	$$g(x)=e^{r_1x}-e^{-r_1x},$$
	which has the property $g''(x)=r_1^2g(x)>0$ for $x>0$. This implies that
	\begin{equation}\label{eq.mu0.1}
	g'(b)>g'(0).
	\end{equation}
	On the other hand, by direct computation, we have
	$$V'(b)=Ag'(b)+\frac{\gamma}{\gamma+\delta}=\beta$$
	which shows that $A>0$. Therefore, we have
	\begin{align*}
	V'(0)=~&Ag'(0)+\frac{\gamma}{\gamma+\delta}\\
	\leq ~&Ag'(b)+\frac{\gamma}{\gamma+\delta}=\beta.
	\end{align*}
	This shows that the hybrid $(0,0,b)$ strategy is optimal for $\mu=0$ (see Definition \ref{Def.Nice.hybrid} and Theorem \ref{Thm}) and completes the proof.

\section{Proof of Lemma \ref{Lemma.convergence.beta}}\label{A.proof.Convergence.beta}

For a hybrid $(a_p,a_c,b)$ strategy, if $V'(a_c)=V'(b)=\beta$, we have
$$
V'(a)-\frac{-s_1\frac{\gamma}{\gamma+\delta}}{-s_1+q(r_1+s_1)}=(r_1-s_1)I(y+l,q)=(r_1-s_1)I(l,q),$$
with
\begin{align*}
I(x,q)=~&
\frac{\beta-\frac{\gamma}{\gamma+\delta}+\Big(\frac{\gamma}{\gamma+\delta}-\frac{-s_1\frac{\gamma}{\gamma+\delta}}{-s_1+q(r_1+s_1)}\Big)
	e^{s_1 x}}{g'(x)+g(x)(-r_1s_1)\frac{\mu}{\gamma+\delta}(1-q)},\\
q:=~&Q(a)=1-\frac{f(a)/f'(a)}{\mu/\delta}\in[0,1].
\end{align*}

When $\beta\downarrow \frac{\gamma}{\gamma+\delta}$, from the fact that $q>0$ for our candidate strategy, it is easy to see that $y\uparrow \infty$. This implies that $I(y+l,q)\downarrow 0$. Suppose $\frac{-s_1}{r_1}\frac{\gamma}{\gamma+\delta}>1$,
we get $V'(a_p)=\frac{-s_1\frac{\gamma}{\gamma+\delta}}{-s_1+q(r_1+s_1)}= 1$, and 
Therefore, in order for $I(l,q)=I(y+l,q)$, we must have 
$$l\uparrow\infty,\quad a_p=Q^{-1}\Big(\frac{s_1\frac{\delta}{\gamma+\delta}}{r_1+s_1}\Big),\quad y\uparrow\infty.$$
Similarly, if $\frac{-s_1}{r_1}\frac{\gamma}{\gamma+\delta}\leq 1$, we have
$$l\uparrow \infty,\quad a_p=0,\quad y\uparrow\infty.$$

If $a>0$, after some tedious algebra, we can show that
\begin{align*}
&(r_1-s_1)A+\frac{\gamma}{\gamma+\delta}%\\
%=~&Cf'(a)-Bs_1\\
%=~&Cf'(a)-s_1\Big(\frac{\delta}{\gamma+\delta}Cf(a)-\frac{\gamma}{\gamma+\delta}\frac{\mu}{\gamma+\delta}\Big)\\
%=~&Cf'(a)(-s_1\frac{\delta}{\gamma+\delta}\frac{f(a)}{f'(a)}+1)+s_1\frac{\gamma}{\gamma+\delta}\frac{\mu}{\gamma+\delta}\\
%=~&Cf'(a)(-s_1\frac{\mu}{\gamma+\delta}(1-q)+1)+s_1\frac{\gamma}{\gamma+\delta}\frac{\mu}{\gamma+\delta}\\
%=~&Cf'(a)(-\frac{r_1+s_1}{r_1}\frac{r_1+\frac{\gamma}{\gamma+\delta}s_1}{r_1+s_1}+1)+s_1\frac{\gamma}{\gamma+\delta}\frac{\mu}{\gamma+\delta}\\
%=~&Cf'(a)\frac{1}{r_1}(-\frac{\gamma}{\gamma+\delta}s_1)+s_1\frac{\gamma}{\gamma+\delta}\frac{\mu}{\gamma+\delta}\\
%=~&\frac{\gamma}{\gamma+\delta}\Big(\frac{-s_1}{r_1}+s_1\frac{\mu}{\gamma+\delta}\Big)\\
%=~&\frac{\gamma}{\gamma+\delta}\frac{-s_1+(r_1+s_1)}{r_1}\\
=\frac{\gamma}{\gamma+\delta},
\end{align*}
implying that $A=0$.

Otherwise, if $a=0$, we have $q=1$ and therefore $\frac{-s_1\frac{\gamma}{\gamma+\delta}}{-s_1+q(r_1+s_1)}=\frac{\gamma}{\gamma+\delta}\frac{-s_1}{r_1}$. Similar to the above calculation, we also have
\begin{align*}
&(r_1-s_1)A+\frac{\gamma}{\gamma+\delta}%\\
%=~&Cf'(a)-Bs_1\\
%=~&Cf'(a)-s_1\Big(\frac{\delta}{\gamma+\delta}Cf(a)-\frac{\gamma}{\gamma+\delta}\frac{\mu}{\gamma+\delta}\Big)\\
%=~&\frac{\gamma}{\gamma+\delta}\frac{-s_1}{r_1}+s_1\frac{\gamma}{\gamma+\delta}\frac{\mu}{\gamma+\delta}\\
%=~&\frac{\gamma}{\gamma+\delta}\Big(\frac{-s_1}{r_1}+s_1\frac{\mu}{\gamma+\delta}\Big)\\
=\frac{\gamma}{\gamma+\delta},
\end{align*}
implying that $A=0$. Therefore, when $\beta\downarrow{\gamma}/({\gamma+\delta})$, the value function $V(x;\pi_{a_p,a_c,b\uparrow\infty})$ converges to $V(x;\pi_{a_p})$, the value function of the periodic barrier strategy with barrier level $a_p$. We are left to verify that $a_p$ is the optimal barrier in the pure periodic setting, e.g. \citet{NoPeYaYa17}.

Recall from Remark \ref{Remark.suff.cons} that the condition for $a=0$ are the same when $\beta\downarrow{\gamma}/({\gamma+\delta})$. For $a>0$, we have $Cf'(a)=\frac{-s_1\frac{\gamma}{\gamma+\delta}}{r_1+q(r_1+s_1)}=1$, which is the same as 
$$\frac{f(a)}{f'(a)}=(1-\frac{s_1\frac{\delta}{\gamma+\delta}}{r_1+s_1})\frac{\mu}{\delta}.$$
Use these, we get
\begin{align*}
V'''(a+)=Bs_1^3=~&\frac{\delta}{\gamma+\delta}Cf(a)s_1^3-\frac{\gamma}{\gamma+\delta}\frac{\mu}{\gamma+\delta}s_1^3\\
=~&\frac{\delta}{\gamma+\delta}\frac{f(a)}{f'(a)}Cf'(a)s_1^3-\frac{\gamma}{\gamma+\delta}\frac{\mu}{\gamma+\delta}s_1^3\\
%=~&\frac{\mu}{\gamma+\delta}s_1^3\Big(1-\frac{s_1}{r_1+s_1}\frac{\delta}{\gamma+\delta}-\frac{\gamma}{\gamma+\delta}\Big)\\
%=~&\frac{\delta}{\gamma+\delta}s_1^2\frac{r_1s_1}{r_1+s_1}\frac{\mu}{\gamma+\delta}\\
=~&\frac{\delta}{\gamma+\delta}s_1^2
\end{align*}
and
\begin{align*}
f'''(a)=~&\frac{\gamma+\delta}{\sigma^2/2}\Big(\frac{\delta}{\gamma+\delta}f'(a)-\frac{\mu}{\gamma+\delta}f''(a)\Big)\\
=~&(-r_1s_1)\Big((\frac{\delta}{\gamma+\delta}-(r_1+s_1)\frac{\mu}{\gamma+\delta})f'(a)+(r_1+s_1)\frac{\delta}{\gamma+\delta}f(a)\Big).
\end{align*}
Therefore, we have
\begin{align*}
\frac{f'''(a)}{f'(a)(-r_1s_1)}=~&\frac{\delta}{\gamma+\delta}-(r_1+s_1)\frac{\mu}{\gamma+\delta}+(r_1+s_1)(1-\frac{s_1}{r_1+s_1}\frac{\delta}{\gamma+\delta})\frac{\mu}{\gamma+\delta}\\
%=~&\frac{\delta}{\gamma+\delta}\Big(1-s_1\frac{\mu}{\gamma+\delta}\Big)\\
=~&\frac{\delta}{\gamma+\delta}\frac{-s_1}{r_1},
\end{align*}
which further implies 
$$
V'''(a-)=Cf'''(a)=Cf'(a)\frac{f'''(a)}{f'(a)}=\frac{\delta}{\gamma+\delta}s_1^2=V'''(a+),
$$
which is the ``smoothness condition'', equation (4.1) in \citet{NoPeYaYa17}, which characterises the unique periodic barrier.

%\HL{Also cite the PeYaPalmowski paper somewhere}\BAc{Can you please write 1-2 sentences about it}

\section{Computational considerations} \label{S_comput}

In the computation of the barriers for hybrid $(a_p,a_c,b)$ strategies, instead of using the maximisation of $V(a_c)-\beta a_c$, we solved the derivative conditions directly. The barriers were efficiently calculated without difficulties. 
\begin{enumerate}
	\item For each $l\geq 0$, denote the \emph{unique} $y$ such that $V'(b)=\beta$ as $y_{\bar{a}}(l)$ and $y_0(l)$ for $a=\bar{a}$ and $a=0$ respectively. 
	\item For each $l\geq 0$ and $y\in[y_{\bar{a}}(l),y_0(l)]$, we can find a corresponding $q=Q(a)$ such that $V'(b)=\beta$. As $Q$ is monotone in $a$, we can recover $a$.
	\item For each $l\geq 0$, $(y,a)$ pair in the previous step indexed by $y$, we evaluate the derivative of the value function at $a$. For each $l\geq 0$, we can find a $(y,a)$ pair such that both $V'(a)=1$ (or $a=0$ if $V'(0)\leq 1$) and $V'(a+y+l)=\beta$.
	\item For $l=0$, using the corresponding $(y,a)$ pair to compute $V'(a)$. If $V'(a)\leq \beta$, then set $l=0$. Otherwise, write a function to output the corresponding $V'(a+l)$ using the $(y,a)$ pair from the previous step. By increasing $l$ following by a bisection method, we can find a corresponding $l$ such that $V'(a+l)=\beta$.
\end{enumerate}

All equations can be solved by for example bisection method combining with some searching technique. In case of numerical overflow, we can rescale the scale parameters $(\mu,\sigma,\chi)$ to find the barriers then scale back. It should be clear that rescaling should not change the optimality of an optimal strategy.

In order to make sure that the possibilities of multiple solutions for some equations would not result in some disruptive impact to the numerical procedure, we can verify that the final output indeed satisfies all conditions proposed. 

Finally, when $\beta$ is close to the asymptote $\gamma/(\gamma+\delta)$, we have numerical overflow as $b^*\uparrow\infty$. In this case, an approximation is used based on Proposition \ref{Lemma.convergence.beta}, where we treat $b^*=\infty$ and calculate $a_p^*$ and $a_c^*-a_p^*$ independently. Sometimes, $b^*$ may not be large enough to validate such approximation and a ``bias'' is resulted. In this case, we will adjust the bias term such that the approximation piece glues to the piece without approximation. We decreases the bias term (linearly for convenience) so that it eventually vanishes at the asymptote $\beta=\gamma/(\gamma+\delta)$. The region where we employ such approximation is indicated between the dotted line (where the numerical overflow starts) and the solid line (the asymptote).

For $\mu<0$, it is straightforward from Theorem \ref{Thm.mu.neg}. Specifically, we proceed the following:
\begin{enumerate}
	\item First check whether $\chi\geq \beta({-\mu}/({\gamma+\delta}))$.
	\item Suppose $\chi\geq \beta({-\mu}/({\gamma+\delta}))$. If $\beta\leq {\gamma}/({\gamma+\delta})$, $\pi_0$ is optimal. No numerical method is needed. On the other hand, if $\beta> {\gamma}/({\gamma+\delta})$, we express $V'(b-;\pi_{b,\infty})$ as a function of $b$ and search for $V'(b-;\pi_{b,\infty})=\beta$.
	\item Suppose $\chi< \beta({-\mu}/({\gamma+\delta}))$. Invert $\Lambda$ at ${\chi}/{\beta}$ to output $\beta_0=\Lambda^{-1}({\chi}/{\beta})$. If $\beta\leq \beta_0$, then $\pi_0$ is optimal and no numerical method is needed. If $\beta\in(\beta_0,{\gamma}/({\gamma+\delta}))$, first output $c_{\beta,\chi}$ by solving $V'(a_\beta)=\beta$. It is then followed by solving 2 equations: $V'(b_1-;\pi_{b_1,b_2})=\beta$ in $b_1$ and $V'(b_2+;\pi_{b_1,b_2})=\beta$ in $b_2$, respectively. Note that for both equations, the other parameter is not used. Finally, if $\beta\in[{\gamma}/({\gamma+\delta}),1]$, as in the previous case, we express $V'(b-;\pi_{b,\infty})$ as a function of $b$ and search for $V'(b-;\pi_{b,\infty})=\beta$.
\end{enumerate}

Again, all equations can be solved by for example bisection method combining with some search techniques. 

To compute $b_0^*$, we do not use the results directly from \citet{NoPeYaYa17}. Instead, we use the formula given by the third item in Proposition \ref{Lemma.convergence.beta}, which holds for any $\beta$.

\section{Continuity for different cases in Theorem \ref{Thm.mu.neg}} \label{A_Thm.mu.neg.cont}

We consider here the four different cases enumerated in Theorem \ref{Thm.mu.neg} sequentially.

\subsection{$\frac{\chi}{\beta}<\frac{-\mu}{\gamma+\delta}$, $\beta\uparrow\frac{\gamma}{\gamma+\delta}$.}
When $\beta\uparrow\gamma/(\gamma+\delta)$, we have (from $V'(\cdot;\pi_0)\uparrow\gamma/(\gamma+\delta)$) that $a_\beta\rightarrow \infty$ which implies that $b_2\rightarrow \infty$ and therefore from Theorem \ref{Thm.mu.neg} case 2b we can conclude that $b_1$ is continuous at $\beta=\gamma/(\gamma+\delta)$, i.e. $$\pi_{b_1,b_2\rightarrow\infty}\rightarrow\pi_{b_1,\infty}$$
when $\beta\uparrow\gamma/(\gamma+\delta)$.

\subsection{$\frac{\chi}{\beta}\geq \frac{-\mu}{\gamma+\delta}$, $\beta\downarrow\frac{\gamma}{\gamma+\delta}$.}

Recall the optimal strategy in the setting $\pi_{b,\infty}$ is characterised by the derivative condition at $b$, i.e. $V'(b;\pi_{b,\infty})=\beta$. In view of the first $2$ equations in the proof of Lemma \ref{Lemma.b.Exist}, we have
\begin{equation}\label{eq.VDb}
V'(b-;\pi_{b,\infty})=\frac{(\beta-\frac{\gamma}{\gamma+\delta})b-\chi-\frac{\gamma\mu}{(\gamma+\delta)^2}(1-e^{s_1b})}{g(b)}g'(b)-\frac{\gamma\mu}{(\gamma+\delta)^2}s_1e^{s_1b}+\frac{\gamma}{\gamma+\delta}.
\end{equation}
Therefore, we have $$\parD{\beta}V'(b-;\pi_{b,\infty})=b\frac{g'(b)}{g(b)}>0.$$
This implies that when $\beta$ decreases from $\beta_1$ to $\beta_2$, the original $b=b(\beta_1)$ yields $V'(b-;\pi_{b,\infty})<\beta_2$ and therefore from the proof of Lemma \ref{Lemma.b.Exist} we need to use a larger $b$. This implies that $b(\beta_2)>b(\beta_1)$. In other words, when $\beta\downarrow\gamma/(\gamma+\delta)$, the corresponding $b=b(\beta)$ is increasing. It remains to show that $b(\beta)$ is not converging so that we have 
$$\pi_{b\uparrow\infty,\infty}\rightarrow \pi_0.$$
Suppose $b(\beta)\uparrow b<\infty$ as $\beta\downarrow\gamma/(\gamma+\delta)$. Then by taking the limit, we have 
$$V'(b-;\pi_{b,\infty})=\beta=\frac{\gamma}{\gamma+\delta}.$$
Thus, from \eqref{eq.VDb} we have 
$$-\chi=\frac{\gamma\mu}{(\gamma+\delta)^2}\Big(s_1e^{s_1b}+(1-e^{s_1b})\frac{g'(b)}{g(b)}\Big)=\frac{\gamma\mu}{(\gamma+\delta)^2}s_1,$$
which is impossible since $-\chi$ is negative but the very last term is positive.

\subsection{$\frac{\chi}{\beta}\uparrow\frac{-\mu}{\gamma+\delta}$, $\beta\in[\beta_0,\frac{\gamma}{\gamma+\delta}]$.}

Here, we want to show that if $\beta<\gamma/(\gamma+\delta)$ then when $\chi/\beta$ is ``close'' to $-\mu/(\gamma+\delta)$, we have $\beta_0>\beta$. This means the two conditions $\chi/\beta\uparrow-\mu/(\gamma+\delta)$ and $\beta\in[\beta_0,\gamma/(\gamma+\delta)]$ cannot be satisfied simultaneously unless $\beta=\gamma/(\gamma+\delta)$. In other words, the cells in the second row of the Table \ref{T_roadmap}  are continuous only at $\beta=\gamma/(\gamma+\delta)$, which has already been taken care of.

Recall that $\beta_0$ is defined by the inverse of the increasing function $\Lambda$ at ${\chi}/{\beta}$, i.e. $\beta_0=\Lambda^{-1}(\chi/\beta)$. In addition, $\Lambda$ maps $\beta\in[V'(0;\pi_0),\gamma/(\gamma+\delta))$ to $[0,-\mu/(\gamma+\delta))$. Therefore, when $\chi/\beta\uparrow\gamma/(\gamma+\delta)$, we have $\beta_0\uparrow\gamma/(\gamma+\delta)$, which implies that $\beta\rightarrow\gamma/(\gamma+\delta)$. 

\subsection{$\frac{\chi}{\beta}<\frac{-\mu}{\gamma+\delta}$, $\beta\downarrow\beta_0$.}

Recall $\beta_0=\Lambda^{-1}(\chi/\beta)$ and therefore $\beta\downarrow\beta_0$ implies that \eqref{Condition.Liqb1b2.Optimal} is an equality at the limit and we have $a_\beta-c_{\beta,\chi}\rightarrow 0$. Consequently, from $c_{\beta,\chi}\leq b_1\leq a_\beta$ we can conclude that $b_1\rightarrow a_\beta$. 

It remains to show $b_2\rightarrow a_\beta$. In view of Lemma 
\ref{Lemma.b1b2.Exist}, we can establish $b_2\rightarrow a_\beta$ if we can show that at $\beta=\beta_0$, we have \eqref{eqt.for.b2} with $b_2=a_\beta$, i.e. $$s_1(\beta-\frac{\gamma}{\gamma+\delta}) a_\beta-\Big(\chi s_1+s_1\frac{\gamma\mu}{(\gamma+\delta)^2} -(\frac{\gamma}{\gamma+\delta}-\beta)\Big)=0.$$

Using $$V(x;\pi_0)=\frac{-\gamma\mu}{(\gamma+\delta)^2}e^{s_1x}+\frac{\gamma}{\gamma+\delta}\Big({x+\frac{\mu}{\gamma+\delta}}\Big),$$
we can re-express $V'(a_\beta;\pi_0)=\beta$ as
\begin{equation}\label{dev.eqvuiv}
s_1\frac{-\gamma\mu}{(\gamma+\delta)^2}e^{s_1a_\beta}=\beta-\frac{\gamma}{\gamma+\delta}
\end{equation}
and \eqref{Condition.Liqb1b2.Optimal} (with inequality replaced by equality) as 
\begin{align*}
&s_1\Big(\frac{-\gamma\mu}{(\gamma+\delta)^2}e^{s_1a_\beta}+\frac{\gamma}{\gamma+\delta}\Big({a_\beta+\frac{\mu}{\gamma+\delta}}\Big)\Big)=s_1\beta a_\beta-s_1\chi\\\iff~& s_1(\beta-\frac{\gamma}{\gamma+\delta})a_\beta-s_1\chi-s_1\frac{\gamma\mu}{(\gamma+\delta)^2}=s_1\frac{-\gamma\mu}{(\gamma+\delta)^2}e^{s_1a_\beta}=\beta-\frac{\gamma}{\gamma+\delta},
\end{align*}
where the last equality is from \eqref{dev.eqvuiv} and the last line is essentially what we are trying to show, i.e. \eqref{eqt.for.b2}.

Since we have $b_1,b_2\rightarrow a_\beta$ when $\beta\downarrow\beta_0$, we have 
$$\pi_{b_1\rightarrow a_\beta,b_2\rightarrow a_\beta}\rightarrow \pi_0.$$

\end{document}